%% file: main.tex
\documentclass{article}
\usepackage[T1]{fontenc}
\usepackage{lmodern}
\usepackage[utf8]{inputenc}
\usepackage{authblk}
\usepackage{hyperref}
\usepackage[numbers]{natbib}
\usepackage{doi}
\usepackage{todonotes}
\usepackage{amsmath}
\usepackage{amsthm}
\usepackage{amsfonts}
\usepackage{mathtools}
\usepackage[ruled, linesnumbered, vlined]{algorithm2e} %noend instead of vlined is more compact, but uglier
\usepackage{cleveref}
\usepackage{tikz}
\usepackage{longtable}
\usepackage{orcidlink}
\usepackage{svg}
\usetikzlibrary{arrows.meta,positioning}

\usepackage[a4paper,top=1cm,bottom=3cm,left=3cm,right=3cm,marginparwidth=4.5cm]{geometry}

\usepackage{bm}

\newtheorem{theorem}{Theorem}
\newtheorem{lemma}{Lemma}
\newtheorem{corollary}{Corollary}
\newtheorem{proposition}{Proposition}
\newtheorem{observation}{Observation}

\theoremstyle{definition}
\newtheorem{definition}{Definition}

\usepackage{enumitem}
\setlist[description]{leftmargin=\parindent,labelindent=\parindent}

\newcommand{\U}[1][]{
\ifthenelse{\equal{#1}{}}{\mathcal{U}}{\mathcal{U}_{#1}}%
}
\newcommand{\N}[1][]{
\ifthenelse{\equal{#1}{}}{\mathcal{N}}{\mathcal{N}_{#1}}%
}
\newcommand{\NP}[1][]{
\ifthenelse{\equal{#1}{}}{\mathcal{N}^{()}}{\mathcal{N}^{()}_{#1}}%
}
\newcommand{\Ui}[1][]{
\ifthenelse{\equal{#1}{}}{\dot{\mathcal{U}}}{\dot{\mathcal{U}}_{#1}}%
}
\newcommand{\Ni}[1][]{
\ifthenelse{\equal{#1}{}}{\mathcal{N}^l}{\mathcal{N}^l_{#1}}%
}

\newcommand{\NiP}[1][]{
\ifthenelse{\equal{#1}{}}{\dot{\mathcal{N}}^{()}}{\dot{\mathcal{N}}^{()}_{#1}}%
}

\newcommand{\PP}{\mathbb{P}}

\newcommand{\move}[1]{\mathrel{\raisebox{-2pt}{$\xrightarrow{#1}$}}}
\newcommand{\isom}[1][]{
\ifthenelse{\equal{#1}{}}{\simeq}{\mathrel{\raisebox{-2pt}{$\overset{#1}{\simeq}$}}}%
}

\DeclareMathOperator{\Aut}{Aut}
\DeclareMathOperator{\Rep}{Rep}
\DeclareMathOperator{\prior}{prior}
\DeclareMathOperator{\likelihood}{likelihood}
\DeclareMathOperator{\id}{id}

\DeclareMathOperator{\VM}{Vert^{--}}
\DeclareMathOperator{\VP}{Vert^+}

\newcommand{\Tail}{\mathrm{Tail}}

\newcommand{\Head}{\mathrm{Head}}

\newcommand{\rSPR}{\mathrm{rSPR}}

\newcommand{\rNNI}{\mathrm{rNNI}}

\title{Metropolis-Hastings Sampling of Phylogenetic Networks: Correcting for Symmetries}

\author[1]{Leo van Iersel \orcidlink{0000-0001-7142-4706}}
\author[2]{Remie Janssen \orcidlink{0000-0002-5192-1470}}
\author[3]{Mark Jones}
\author[4]{Yukihiro Murakami \orcidlink{0000-0003-1355-5884}}
\author[5]{Christopher Reichling}
\affil[1]{TU Delft, The Netherlands, l.j.j.vanIersel@tudelft.nl}
\affil[2]{National Institute for Public Health and the Environment, The Netherlands, remie.janssen@rivm.nl}
\affil[3]{Middlesex University London, United Kingdom, m.jones@mdx.ac.uk}
\affil[4]{TU Delft, The Netherlands, y.murakami@tudelft.nl}
\affil[5]{TU Delft, The Netherlands, christopher.z.a.reichling@gmail.com }

\date{\today}

\begin{document}

\maketitle
\begin{abstract}
In phylogenetics, Metropolis-Hastings methods are commonly used to sample phylogenetic trees or networks, for example from Bayesian posteriors.
These methods generally use transitions that distinguish all nodes involved, and thus require fully labelled representations of phylogenetic networks.
We argue that sampling leaf-labelled phylogenetic networks demands a correction for the number of fully labelled representatives of a leaf-labelled network, or, equivalently, for its internal symmetry. Without correction, there is a danger of undersampling networks with internal symmetries.
We show that this correction can be realized by a quotient construction on the Metropolis-Hastings Markov chain, which, in practice, requires the calculation of the size of the network's automorphism group.
Using $\mu$-vectors, we show that the automorphism group is trivial for orchard networks, and thus also for tree-child networks and trees.
This implies that a correction for symmetry is not needed when sampling only from such network classes.
More generally, using our Python implementation of the algorithms in this paper, we show that using $\mu$-vectors can significantly speed up calculations of automorphism group sizes and thus of Metropolis-Hastings sampling of leaf-labelled networks.
\end{abstract}

% \begin{keyword}
% phylogenetic networks \and Metropolis-Hastings \and quotient Markov chain \and lumpable Markov chains \and graph isomorphism
% \end{keyword}

\section{Introduction}
Phylogenetic networks are directed graphs that are used in biology to represent evolutionary histories \cite{bapteste2013networks}, which are central to much modern biological research \cite{arnason2018whole,forster2020phylogenetic}. Accurate representations are reconstructed using (mostly) genetic data, and the number of methods that can be used to this end is still growing. These methods can be split into two categories: combinatorial methods, and evolutionary model based methods. 

Combinatorial methods could, for example, consist of combining incompatible trees into a network \cite{yang2013quartet,bordewich2007computing,van2019practical,markin2019robinson} or small networks into larger networks \cite{van2014trinets,oldman2016trilonet,huber2017reconstructing,huber2018quarnet,gross2020distinguishing}, or they could build networks directly based on data \cite{bordewich2016determining,van2020reconstructibility,huber2021reconstructibility}. Evolutionary model based methods are generally based on a statistical model which describes the relation between a network with numerical parameters, such as branch lengths and inheritance probabilities, and (a distribution of) genetic data that we would find if the real evolutionary history could be represented by said network. The reconstruction method then either finds a network that best represents the data, e.g., with maximum likelihood, \cite{jin2006maximum,yu2014maximum,yu2015maximum,solis2016inferring,wu2020inference}, or it is used to update our beliefs about the likelihood of the networks in a Bayesian fashion \cite{vaughan2017inferring,wen2016bayesian,zhang2018bayesian}.

\subsection{Network search}
The computational problems involved in many of these methods are quite hard to solve \cite{foulds1982steiner,chor2006finding,roch2006short,bryant2017quirks}. Hence, it is common to employ heuristics. In particular, maximum likelihood methods use local search heuristics (e.g., \cite{yu2014maximum,wen2016reticulate}), and Bayesian methods use Markov chain Monte Carlo (MCMC) methods \cite{vaughan2017inferring,wen2016bayesian,zhang2018bayesian}. Both local search and MCMC heuristics traverse a space of phylogenetic networks by starting at one network and repeatedly making small changes to the network. 

These changes are called rearrangement moves, and several types are commonly used.
Moves that do not change the number of reticulation events (which model, e.g., hybridization or lateral gene transfer) are called horizontal moves, and examples are $\rSPR$ moves, $\Tail$ moves, $\Head$ moves, $\rNNI$ moves, and (part of) SNPR moves \cite{gambette2017rearrangement,bordewich2017lost,janssen2018exploring}. Moves that do change the number of reticulation events are called vertical moves. 
These typically add an edge to the network by subdividing two edges with two new nodes and attaching the new edge between these new nodes, or they remove an edge using the reverse of this process \cite{gambette2017rearrangement}. 

For the heuristics to be effective, it is paramount that the search space is connected. 
It has previously been shown that this is indeed the case for most rearrangement moves \cite{bordewich2017lost,janssen2018exploring,janssen2018heading,thesis_janssen}. 
More specifically, these results imply that, for most rearrangement moves, spaces of leaf-labelled and fully labelled networks are connected.

\subsection{Bayesian Methods}
In Bayesian methods for phylogenetics, prior beliefs about the likelihood $\prior(N)$ of each network are updated on the basis of data $\theta$. This gives new beliefs in the form of a posterior distribution
\[\pi(N|\theta)=\frac{\prior(N)\likelihood(\theta|N)}{\PP(\theta)}.\] 
In general, this distribution cannot be calculated directly, and must thus be approximated. This can be done by sampling from this posterior distribution using, for example, a Markov chain Monte Carlo (MCMC) method. The Markov chain in this method is designed to have $\pi(N|\theta)$ as its stationary distribution, so that sampling using a Monte Carlo method will reflect this posterior distribution.

One way to design a suitable Markov chain, is by using a Metropolis-Hastings method. This method requires the definition of a proposal distribution $g(N_{t+1}|N_t)$ (a method to randomly choose a next state $N_{t+1}$) and a way to calculate an acceptance probability $A(N_{t+1}|N_t)$. A common choice for $A$ that guarantees the desired stationary distribution is the Metropolis ratio
\[A(N_{t+1}|N_t)=\min\left(1,\frac{\pi(N_{t+1}|\theta)g(N_{t}|N_{t+1})}{\pi(N_t|\theta)g(N_{t+1}|N_t)}\right).\]
Calculating this acceptance ratio then requires the (independent) calculation of 
\[\frac{\pi(N_{t+1}|\theta)}{\pi(N_t|\theta)} = \frac{\prior(N_{t+1})\likelihood(\theta|N_{t+1})}{\prior(N_t)\likelihood(\theta|N_t)},\]
and of the Hastings ratio 
\[\frac{g(N_{t}|N_{t+1})}{g(N_{t+1}|N_t)}.\]
The former is the domain of evolutionary models. The model used to calculate the likelihood is often coalescence based, and the prior can for example be based on some assumptions for the parameters of the network (Phylonet \cite{wen2018inferring}) or on a birth-hybridization model for phylogenetic networks (BEAST 2: SpeciesNetwork \cite{zhang2018bayesian}). 

When calculating the Hastings ratio (i.e., the fraction involving the proposal distributions), one cannot only consider the numerical parameters of the network (e.g., branch lengths and inheritance probabilities). One must also take into account the network topology. This is because, even though the sampling is done for leaf-labelled networks, the proposal distribution is generally defined for fully labelled networks. Consider, for example, the networks~$N$ and~$M$ in Figure~\ref{fig:AsymmetricProposal} and suppose we uniformly sample from fully labelled networks using an MCMC method as described above. If we fully label $N$ and $M$ with a set of three tree node labels and two reticulation node labels, then $N$ has 6 corresponding fully labelled networks, whereas $M$ has 12. This results in representatives of $M$ being sampled roughly twice as much as representatives of $N$. Indeed, in a simple experiment, $N$ was sampled 255 times and the two versions of $M$ were sampled 587 and 575 times (Figure~\ref{fig:undersampling_n2k2}, Appendix~\ref{sec:Appendix_Sampling}).

\begin{figure}[ht]
    \centering
    \includegraphics[width=1.0\textwidth]{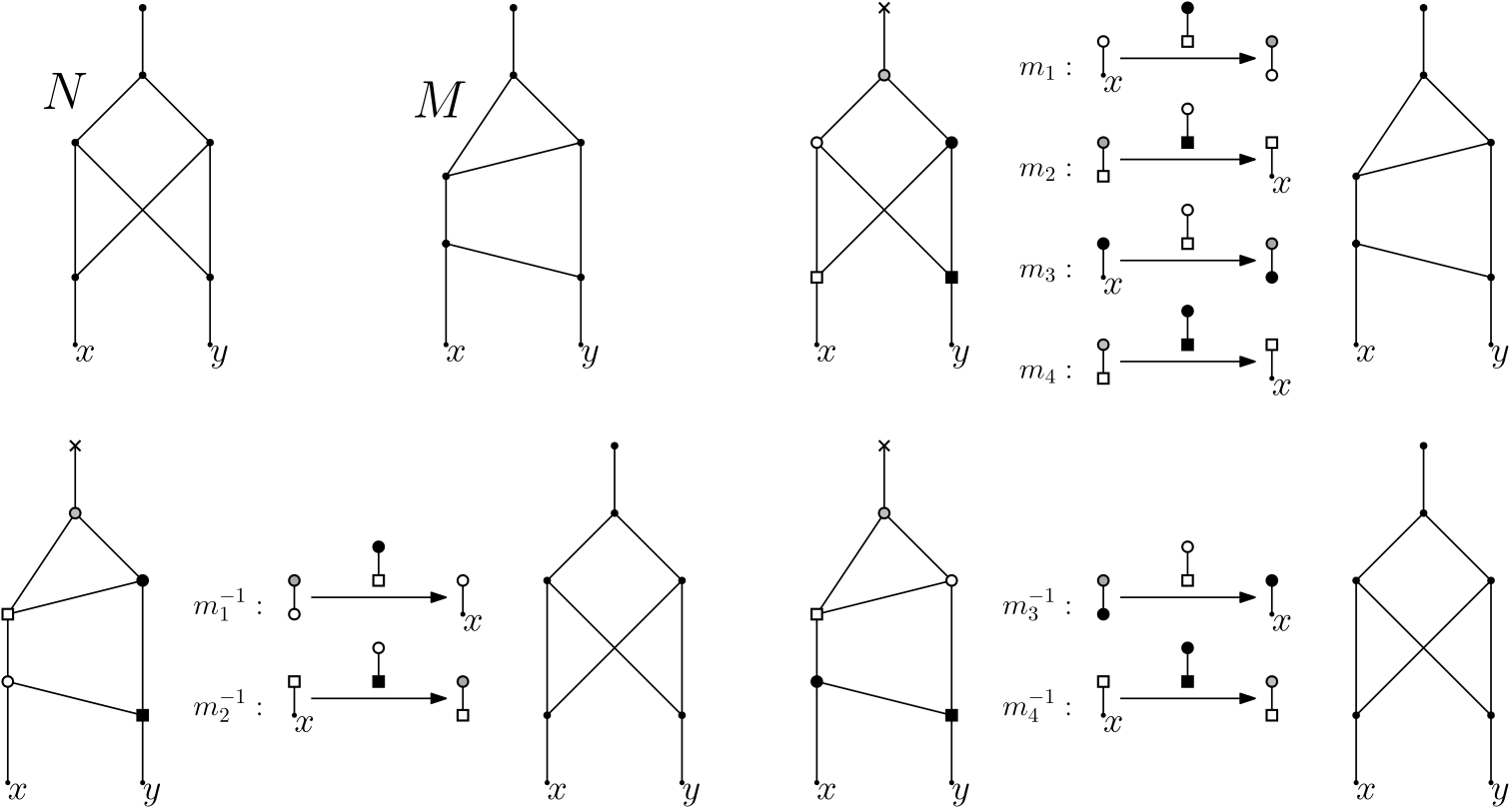}
    \caption{Asymmetric proposals for networks with two leaves and two reticulations. 
    Top left: Two networks $N$ and $M$.
    Top right: There are four $\rSPR$ moves (defined in Section~\ref{sec:rearr}) that change any fully labelled version of $N$ into a network with leaf-labelled topology of $M$.
    Bottom: However, for each fully labelled version of $M$, there are only two moves that change it into a network with leaf-labelled topology of $N$.
    This means that for the leaf-labelled networks $N$ and $M$, the proposal $N\to M$ and its inverse $M\to N$ do not have the same probability.
    Moves are denoted using the arrow notation defined in Section~\ref{sec:rearr}, where edges are shown as they appear in the networks.}
    \label{fig:AsymmetricProposal}
\end{figure}

In general, this results in undersampling of networks that have non-trivial symmetries, as these networks have fewer fully labelled representations than networks without such symmetries. Another way to see this is that moves are uniquely reversible in the space of fully labelled networks, but not in the space of leaf-labelled networks (Figure~\ref{fig:AsymmetricProposal}).
By considering both a Markov chain on leaf-labelled networks and on fully labelled networks, whose relation is that of a Quotient Markov chain \cite{mitavskiy2008quotients}, we correct for this undersampling. 
This paper, hence, proposes a way to sample leaf-labelled networks by correcting for symmetries.

\begin{figure}
    \centering
    \includegraphics[width=.6\textwidth]{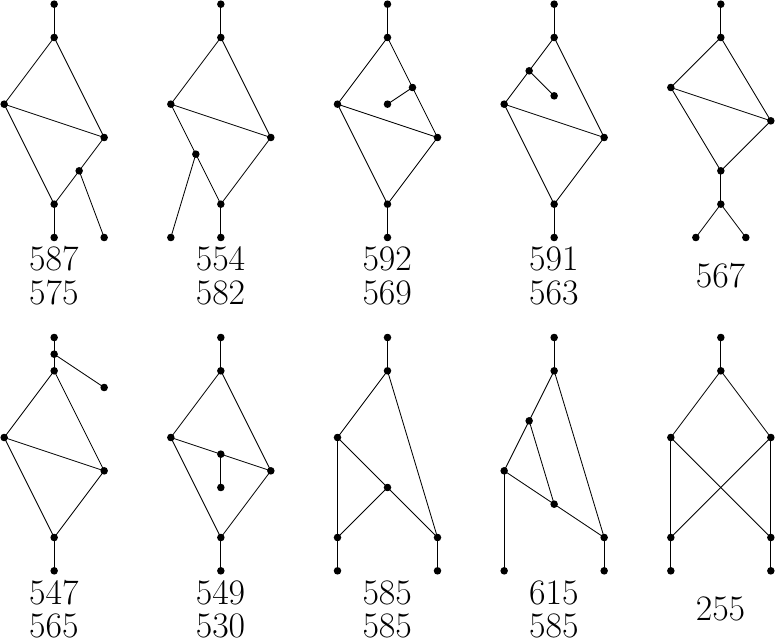}
    \caption{Number of samples taken from each network in $\N(2,2)$---the set of all leaf-labelled networks, with two leaves and two reticulation nodes---using a Metropolis Hastings method without correction for symmetries. The number below a network topology is the number of samples of that network. If there are two numbers below a network topology, they represent the two leaf-labellings of the network topology. Note that the symmetric network on the bottom right is sampled roughly half as often as the other networks. We have sampled 10000 networks with 100 move proposals between each sample, of which roughly half were rejected. See Appendix~\ref{sec:Appendix_Sampling} for details regarding the sampling.}
    \label{fig:undersampling_n2k2}
\end{figure}

\subsection{Our Contribution}
We propose a method to sample from leaf-labelled topologies using a quotient Metropolis-Hastings Markov chain on the space of fully labelled phylogenetic networks.
We work out Metropolis-Hastings for quotient Markov chains in general (Theorem~\ref{the:quotient_MH}), and then apply this theory to Metropolis-Hastings sampling of leaf-labelled phylogenetic networks.
The specific Metropolis-Hastings acceptance probabilities for leaf-labelled networks are proven in Theorem~\ref{the:MH_for_leaf_labelled_networks}. 
These acceptance probabilities include a correction for symmetries of leaf-labelled networks, which requires computation of the number of automorphisms. We give an adapted algorithm (Algorithm~\ref{alg:AUTOM}) that achieves this for phylogenetic networks in general, and an improved version of this algorithm that uses $\mu$-vectors (Algorithm~\ref{alg:automorphism_count}).
By testing our Python implementation of these algorithms, we show that $\mu$-vectors significantly speed up the calculation of the number of automorphisms (Appendix~\ref{sec:Appendix_Running_Time_Mu}).
Moreover, we prove that employing this algorithm is not necessary when sampling from the class of orchard networks (which are basically trees with additional horizontal arcs, see~ \cite{van2022orchard}), as they do not have any such symmetries (Theorem~\ref{the:OrchardTrivAut}). 
Our results can, for example, be used to sample network topologies from (certain classes of) networks uniformly (Appendix~\ref{sec:Appendix_Sampling}).
Finally, by raising questions such as what it means to use a uniform prior, we discuss the choice of prior distributions in Bayesian inference methods for phylogenetic networks.

\section{Preliminaries}
In this section, we provide the definitions used throughout this paper. It consists of two parts, the first two pertaining to random sampling of graphs using Metropolis-Hastings, and the third to phylogenetic networks. This includes an extension of $\rSPR$ moves to fully labelled networks, and some previous results regarding rearrangement moves. 

\subsection{Markov Chains}\label{sec:MCMC}
We will now introduce some notation and basic theory about Markov chains. For simplicity, we restrict ourselves to Markov chains with discrete state spaces, but note that all theory also applies to Markov chains with continuous state spaces, which have transition kernels instead of transition matrices.

\begin{definition}
A Markov chain on a state space $X$ is \emph{irreducible} if there is a sequence of state transitions of positive probability between any pair of states in $X$.
\end{definition}

To correctly sample from the state space using Markov chains, it is also important that the Markov chain is aperiodic. For a Markov chain with transition matrix $P$ to be aperiodic, it suffices for it to be irreducible, and to have one state $x\in X$ with $P_{x\to x}>0$. We will be free enough in choosing our transition matrix to achieve this latter property by modifying any transition matrix $P$ to $\bar{P}:= (1-\epsilon) P + \epsilon I$.

It is well known that any irreducible aperiodic Markov chain has a unique stationary distribution. On the other hand, given a set of transitions, by carefully constructing a Markov chain we can ensure a stationary distribution with for example Metropolis-Hastings.

\subsubsection{Metropolis-Hastings}\label{sec:MH}
Basics about Metropolis-Hastings are quite easy to find, \cite{doi:10.1080/00031305.1995.10476177} gives a great introduction, for example.
Suppose we are given some reversible set of transitions that connects a state space $X$, and some method to pick a transition so that the quotient of proposal probabilities $g(m:x\to y)/g(m^{-1}:y\to x)$ can easily be computed for any transition $m$. Suppose, moreover, that we have a distribution $\pi$ on $X$ with $\pi(x)>0$ for all $x\in X$, such that we can compute $\pi(x)$ for all $x\in X$. Then, using Metropolis-Hastings, we can set up a Markov chain on $X$ with stationary distribution $\pi$ by suitably correcting the acceptance probability for proposal probabilities 
\[A(m) = \min\left(1, \frac{\pi(m(x))}{\pi(x)} \frac{g(m^{-1})}{g(m)} \right),\]
where $m$ is a rearrangement move that turns a network $x$ into another network $m(x)$.
This results in a transition matrix with terms $P_{x\to y} = \sum_{m:x\to y} g(m) A(m)$ for $x\neq y$. 
For completeness' sake, we include pseudo-code for Metropolis-Hastings sampling of phylogenetic networks in Algorithm~\ref{alg:phylogeneticMH}. Note that the space of networks induced by the set $\mathcal{N}$ must be connected and finite, and the sample frequency $f$ large enough for the sample to be (nearly) uniform.

\vspace{\baselineskip}
\begin{algorithm}[ht]
 \KwData{A set of networks $\mathcal{N}$, a starting network $N\in\mathcal{N}$, a number of samples $n$, and a sampling frequency $f$.}
 \KwResult{A (nearly) uniform sample $Y$ of $n$ networks (with repetition) from $\mathcal{N}$.}
 Set $Y:=\emptyset$\;
 Set $\pi(x)=1$ if $x\in\mathcal{N}$ and $\pi(x)=0$ otherwise.
 \For{$i=1...n$}{
    \For{$j=1...f$}{
        Generate a move proposal $P$ as described in Section~\ref{sec:proposal_probabilities_worked_out}\;
        \If{$P$ corresponds to a valid move $m:N\to N'$ acting on $N$}{
           Calculate the acceptance probability $A$\;
        }\Else{
           $P$ is invalid, and we set $A=0$\;
        }
        Uniformly at random pick a number $a\in [0,1]$\;
        \If{$a\leq A$}{
            Set $N:=N'$\;
        }
    }
    Add $N$ to the multi-set $Y$\;
}
\Return $Y$\;
\caption{\textsc{Metropolis-Hastings}$(\mathcal{N},N, n, f)$}\label{alg:phylogeneticMH}
\end{algorithm}
\vspace{\baselineskip}

\paragraph{Detailed Balance for Move Dependent Acceptance}
Note that most introductions to Metro\-polis-Hastings define acceptance probabilities for pairs of states, and they use detailed balance (i.e., $\pi(x) P_{x\to y} = \pi(y) P_{y\to x}$) to prove that the stationary distribution is $\pi$. Recall that, indeed, if detailed balance holds we have the following equality for all $x\in X$:
\begin{align*}
    \sum_{y\in X}\pi(y)P_{y\to x} = \sum_{y\in X} \pi(x) P_{x\to y} = \pi(x)\sum_{y\in X} P_{x\to y} = \pi(x).
\end{align*}
In other words, applying the transition matrix $(P_{x\to_y})_{x,y\in X}$ to the vector $\pi$ results in $\pi$ again, so the distribution $\pi$ is the stationary distribution of the Markov chain.

Here, instead of pairs of states, we use a slightly more specific acceptance probability that is a function of a state and a move. The stationary probability is still guaranteed because detailed balance still holds. To see this, first note that for each $m$ that turns $x$ into $y$, we now either have $A(m)\leq 1$, in which case $A(m^{-1})=1$ and 
\[A(m) = \frac{\pi(y)}{\pi(x)} \frac{g(m)}{g(m^{-1})}.\]
Hence, 
\begin{align*}
 \pi(x)g(m)A(m) & = \pi(x)g(m) \frac{\pi(y)}{\pi(x)} \frac{g(m^{-1})}{g(m)} \\
                            & = \pi(y) g(m^{-1}) \\
                            & = \pi(y) g(m^{-1}) A(m^{-1})
\end{align*}

Or, we have $A(m^{-1})\leq 1$, in which case we can similarly deduce that
\[
\pi(x)g(m)A(m)= \pi(y) g(m^{-1}) A(m^{-1}).
\]

Now to return to detailed balance: this is trivial when $x=y$, so suppose $x\neq y$. Then, by the above and the fact that the moves are uniquely reversible, we have detailed balance:
\begin{align*}
 \pi(x) P_{x\to y} & = \pi(x)\sum_{m:x\to y}g(m)A(m) \\
                & = \sum_{m:x\to y}\pi(x) g(m)A(m) \\
                & = \sum_{m:x\to y}\pi(y) g(m^{-1})A(m^{-1}) \\
                & = \sum_{m':y\to x}\pi(y) g(m')A(m') \\
                & = \pi(y) \sum_{m':y\to x} g(m')A(m') \\
                & = \pi(y)P_{y\to x},
\end{align*}

\subsubsection{Quotient Markov Chains}\label{sec:quotient_markov_chains}
We will now work out some properties of quotient Markov chains. 
Such a quotient is essentially a special version of a lumpable Markov chain, for which references can easily be found. 
But for our purposes, we do need the additional properties that the quotient offers, so we do have to work out the details here.

The following theorem is adapted from \cite{mitavskiy2008quotients} Theorem~8 which takes it originally from the appendix of \cite{vose1999simple}. % https://link.springer.com/article/10.1007/s10710-007-9038-6#Fn3
Here, we assume the original Markov chain is a multi-graph with possibly multiple transitions between two states. The equivalence relation is extended to an equivalence relation on the transitions as well. This initially makes the theory slightly more complicated, but the application to phylogenetic networks we will encounter later matches this interpretation more closely.

\begin{definition}\label{def:quotient_markov_chain}
    Given an irreducible aperiodic Markov chain $M=(X,T,P)$ with stationary distribution $\pi$ and an equivalence relation $\sim$ on the states and on the transitions of $M$, which respects the equivalence relation on the states (i.e. $(u\to v)\sim (p \to q)$ then $u\sim p$ and $v\sim q$). Then the quotient Markov chain $M/\sim=(X/\sim, T/\sim, P')$ is defined by the transition probabilities
    \[
       P'_{\widetilde{m}: U\to V} = \frac{1}{\pi(U)} \sum_{(m:u\to v)\in\widetilde{m}}\pi(u)P_m.
    \]
\end{definition}

\begin{theorem}
    Let $\pi$ denote the stationary distribution of an irreducible aperiodic Markov chain $M=(X,T,P)$. Suppose we are given an equivalence relation $\sim$ partitioning the state space $X$ and the transitions $T$ of $M$ where the second equivalence respects the first, i.e. transitions can only be equivalent if they are between the same pair of aggregated states. Then the quotient Markov chain $M/\sim$ with transition probabilities
    \[P'_{\widetilde{m}} = \frac{1}{\pi(U)} \sum_{(m:u\to v)\in\widetilde{m}}\pi(u)P_m\]
    is irreducible and its unique stationary distribution $\widetilde\pi$ is compatible with $\pi$ in the sense that for every $V\in X/\sim$ we have $\widetilde\pi(V)=\pi(V)$.
\end{theorem}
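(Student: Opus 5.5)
Three things have to be checked: that $P'$ is a genuine transition matrix on $X/\sim$, that the chain it defines is irreducible, and that $\pi$ restricted to classes, $\widetilde\pi(V):=\pi(V)=\sum_{v\in V}\pi(v)$, is stationary for $P'$. Uniqueness of the stationary distribution then follows from irreducibility on the finite (discrete) quotient state space. Throughout, we write $P'_{U\to V}=\sum_{\widetilde m:U\to V}P'_{\widetilde m}$ for the aggregated transition probability between classes.

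\textbf{Step 1: aggregate $P'$.} The key observation is that the transition classes $\widetilde m$ with source $U$ and target $V$ partition the set of all transitions $m:u\to v$ with $u\in U$ and $v\in V$. This holds because the equivalence on transitions respects the one on states. Hence
\[
P'_{U\to V}=\frac{1}{\pi(U)}\sum_{u\in U}\sum_{v\in V}\sum_{m:u\to v}\pi(u)P_m=\frac{1}{\pi(U)}\sum_{u\in U}\pi(u)P_{u\to V},
\]
where $P_{u\to V}=\sum_{v\in V}P_{u\to v}$. Self-loops are handled in the same way, treating $P_{u\to u}$ as a transition or class of transitions.

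\textbf{Step 2: stochasticity and stationarity.} Summing the aggregated formula over $V$ gives $\frac{1}{\pi(U)}\sum_{u\in U}\pi(u)=1$, so $P'$ is a transition matrix. For stationarity we compute
\[
\sum_U\pi(U)P'_{U\to V}=\sum_U\sum_{u\in U}\pi(u)P_{u\to V}=\sum_{v\in V}\sum_{u\in X}\pi(u)P_{u\to v}=\sum_{v\in V}\pi(v)=\pi(V),
\]
using the stationarity of $\pi$ for $P$.

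\textbf{Step 3: irreducibility.} Irreducibility of $M$ gives $\pi(u)>0$ for all $u$, so the definition of $P'$ makes sense. Take a positive-probability path $u_0\to u_1\to\dots\to u_k$ in $M$ from a representative of $U$ to a representative of $V$. Each transition on it lies in some class $\widetilde m_i$, and
\[
P'_{\widetilde m_i}\ge \frac{\pi(u_{i-1})P_{m_i}}{\pi([u_{i-1}])}>0,
\]
so the projected path has positive probability in $M/\sim$. Aperiodicity also descends: any cycle in $M$ projects to a closed walk of the same length in the quotient, so the period of the quotient divides that of $M$. This is not needed for the statement, however.

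\textbf{Expected obstacle.} The only delicate point is the bookkeeping in Step 1, namely verifying that the transition classes between $U$ and $V$ exactly partition the individual transitions between their members. This is precisely where the compatibility hypothesis on $\sim$ is used. The remaining steps are direct sum manipulations.
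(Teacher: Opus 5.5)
Your proposal is correct and follows the same route as the paper: you verify that the quotient probabilities sum to one, then regroup the transition classes from $U$ to $V$ into the individual transitions $u\to v$ to obtain $\sum_U \pi(U)P'(V\mid U)=\sum_{v\in V}\pi(v)=\pi(V)$. Your path-projection argument for irreducibility, which uses $\pi(u)>0$, fills in a step that the paper only asserts as clear.
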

\begin{proof}
    First, we check that the transition probabilities are actually transition probabilities, and the probabilities for all outgoing transitions sum to one for any state $U\in X/\sim$:
    \begin{align*}
    \sum_{\widetilde{m}: U\to \cdot} P'_{\widetilde{m}} 
        &= \sum_{\widetilde{m}: U\to \cdot} \frac{1}{\pi(U)} \sum_{(m:u\to v)\in\widetilde{m}}\pi(u) P_m \\
        &= \frac{1}{\pi(U)} \sum_{\widetilde{m}: U\to \cdot} \sum_{(m:u\to v)\in\widetilde{m}}\pi(u) P_m \\
        &= \frac{1}{\pi(U)} \sum_{u\in U} \sum_{m: u\to \cdot}\pi(u) P_m \\
        &= \frac{1}{\pi(U)} \sum_{u\in U} \pi(u) \sum_{m: u\to \cdot } P_m \\
        &= \frac{1}{\pi(U)} \sum_{u\in U} \pi(u) \\
        &= \frac{1}{\pi(U)} \pi(U)  \\
        &= 1
    \end{align*}

    Because the original Markov chain is aperiodic and irreducible, it is clear that the resulting Markov chain is aperiodic and irreducible as well. Hence, we only have to prove that the stationary distributions are compatible. For this, we simply calculate the stationary distribution for the quotient chain, which has transition probabilities $P'(V|U) := \sum_{\widetilde{m}: U \to V} P'_{\widetilde{m}}$ between two states $V,U\in X/\sim$.
    \begin{align*}
        \sum_{U\in X/\sim} \pi(U) P'(V|U) 
            &= \sum_{U\in X/\sim} \pi(U) \PP(V|U) \\
            &= \sum_{U\in X/\sim} \pi(U) \sum_{\widetilde{m}: U\to V} P'_{\widetilde{m}} \\
            &= \sum_{U\in X/\sim} \pi(U) \sum_{\widetilde{m}: U\to V} \frac{1}{\pi(U)} \sum_{(m:u\to v)\in\widetilde{m}}\pi(u) P_m \\
%            &= \sum_{U\in X/\sim} \pi(U)\frac{1}{\pi(U)} \sum_{\widetilde{m}: U\to V} \sum_{(m:u\to v)\in\widetilde{m}}\pi(u)p_m \\ 
            &= \sum_{U\in X/\sim} \sum_{\widetilde{m}: U\to V} \sum_{(m:u\to v)\in\widetilde{m}}\pi(u) P_m \\     
    \end{align*}
    Note that in the last summation, each transition to $V$ occurs exactly once, as both the states and the transitions are partitioned by $\sim$. Therefore, we can regroup the transitions by origin node and end node instead of equivalence class of transitions, we continue:
    \begin{align*}
        \sum_{U\in X/\sim} \pi(U) \PP(V|U) 
            % &= \sum_{U\in X/\sim} \sum_{u \in U} \sum_{v\in V} \sum_{m:u\to v}\pi(u)p_m \\
            % &= \sum_{v\in V} \sum_{U\in X/\sim} \sum_{u \in U} \sum_{m:u\to v}\pi(u)p_m \\
            &= \sum_{v\in V} \sum_{u \in X} \sum_{m:u\to v}\pi(u) P_m \\
            &= \sum_{v\in V} \sum_{u \in X} \pi(u) \sum_{m:u\to v} P_m \\
            &= \sum_{v\in V} \sum_{u \in X} \pi(u) P(v | u) \\
            &= \sum_{v\in V} \pi (v) \\
            &= \pi(V)
    \end{align*}

\end{proof}

\paragraph{Quotients Induced by Group Actions}
In the special case that the equivalence relation on the state space is induced by a group action that additionally respects the transition probabilities, we can specialise this result even further. Although little mention of quotient Markov chains could be found, symmetries of Markov chains have been studied, for example in relation to mixing times \cite[e.g.][]{doi:10.1137/070689413, doi:10.1080/15427951.2005.10129100}. As fully worked out formulas for Metropolis-Hastings for quotient Markov chains could not be found, we work out these formulas ourselves here.

\begin{definition}
    An action of a group $\Gamma$ on a graph $G$ is a homomorphism $\Gamma \to \Aut(G)$. In other words, it is a
    pair of group actions on the sets $V(G)$ and $E(G)$ that respect the incidence relations on $G$, i.e., $\gamma((u,v)) := (\gamma(u), \gamma(v)) \in E(G)$. 
\end{definition}
For multi-graphs and digraphs, the definitions are analogous, but with explicitly defining the mapping of the multi-edges and the direction of the edges. 

\begin{definition}
    A group action on a discrete Markov chain is an action on the underlying multi-digraph that also respects the transition probabilities on the edges, i.e. if $P_{m} = P_{\gamma(m)}$ for each group element $\gamma$ and each transition $m$.
\end{definition}

\begin{definition}
    Given an irreducible aperiodic Markov chain $M$ with stationary distribution $\pi$ acted upon by a group $\Gamma$. Then the quotient Markov chain $M/\Gamma$ is the quotient Markov chain $M/\sim$ where $x \sim y$ if $x$ and $y$ are in the same orbit (for $x,y$ both states or both transitions).
\end{definition}

The following lemma shows that the quotient Markov chain under a group action behaves as a sort of aggregation of states. The transitions remain largely as they were, with only a correction for the number of transitions in an equivalence class.

\begin{lemma}\label{lem:transitions_group_quotient}
    Given an irreducible aperiodic Markov chain $M$ acted upon by a group $\Gamma$. Then the transition probabilities of $M/\Gamma$ are 
    \[
       P'_{\widetilde{m}} = \frac{|\widetilde{m}|}{|U|}P_m = \sum_{(m:x\to \cdot) \in \widetilde{m}} P_m,
    \]
    where $\widetilde{m}: U\to V$ and $x\in U$ are arbitrary chosen.
    And the aggregated transitions probabilities between states are
    \[
       P'(V|U) = \sum_{y\in V} P(y|x),
    \]
    where $x\in U$ is an arbitrary element.
\end{lemma}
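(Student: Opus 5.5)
The plan is to start from the general formula in Definition~\ref{def:quotient_markov_chain},
\[
P'_{\widetilde{m}} = \frac{1}{\pi(U)}\sum_{(m:u\to v)\in\widetilde{m}}\pi(u)P_m,
\]
and simplify it using the invariance provided by the group action.

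The first step is to show that $\pi$ is constant on orbits, so $\pi(\gamma(x))=\pi(x)$ for every $\gamma\in\Gamma$. Define $\pi^\gamma(x):=\pi(\gamma(x))$. Since $\gamma$ is a bijection on states, and also a bijection on transitions with $P_{\gamma(m)}=P_m$, the stationarity equation transforms as follows:
\[
\sum_{y}\pi^\gamma(y)P(x|y)=\sum_y \pi(\gamma y)P(\gamma x|\gamma y)=\pi(\gamma x)=\pi^\gamma(x).
\]
So $\pi^\gamma$ is also stationary, and uniqueness (irreducibility and aperiodicity) gives $\pi^\gamma=\pi$. Consequently, for an orbit $U$ we have $\pi(u)=\pi(U)/|U|$ for all $u\in U$. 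Also, all $P_m$ with $m\in\widetilde{m}$ are equal, because $\widetilde{m}$ is a single orbit of transitions. Substituting both facts gives $P'_{\widetilde{m}}=\frac{|\widetilde{m}|}{|U|}P_m$.

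The second equality is the counting step, which I expect to be the only real subtlety. I need to show that for fixed $x\in U$, the number of transitions in $\widetilde{m}$ starting at $x$ equals $|\widetilde{m}|/|U|$. The map $\widetilde{m}\to U$ sending a transition to its source is $\Gamma$-equivariant and surjective, since $\Gamma$ acts transitively on $U$ and some transition in $\widetilde{m}$ starts in $U$. For $u=\gamma(x)$, the map $m\mapsto\gamma(m)$ is a bijection from the fibre over $x$ to the fibre over $u$, with inverse given by $\gamma^{-1}$. Hence all fibres have the same size $|\widetilde{m}|/|U|$. Since every $P_m$ in the class is equal, summing $P_m$ over the fibre above $x$ yields the claimed expression.

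Finally, for the aggregated probabilities, I sum over all transition classes $\widetilde{m}:U\to V$. By the second formula, $P'(V|U)=\sum_{\widetilde{m}:U\to V}\sum_{(m:x\to\cdot)\in\widetilde{m}}P_m$. The transition classes from $U$ to $V$ partition all transitions from $x$ into $V$, so this double sum equals $\sum_{y\in V}\sum_{m:x\to y}P_m=\sum_{y\in V}P(y|x)$. The result is independent of the choice of $x$ by the fibre argument above.
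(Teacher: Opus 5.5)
Your proposal is correct and follows the paper's proof: you substitute into the quotient definition, count transitions per source state, and regroup the double sum over transitions from $x$ into $V$. Where the paper only asserts that $\pi$ is constant on orbits (``by symmetry'') and that every source state in $U$ has the same number of transitions in $\widetilde{m}$, you actually prove both, via uniqueness of the stationary distribution applied to $\pi\circ\gamma$ and via $\Gamma$-equivariance of the source map.
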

\begin{proof}
    The first can be seen as follows, by using the Definition~\ref{def:quotient_markov_chain} for the transition probabilities, and the fact that the group action respects transition probabilities.
    \begin{equation*}
        P_{\widetilde{m}} 
            := \frac{1}{\pi(U)} \sum_{(m:u\to v)\in\widetilde{m}}\pi(u)P_m
            = \frac{\pi(x)}{\pi(U)} \sum_{(m:u\to v)\in\widetilde{m}}P_m
            = \frac{\pi(x)}{\pi(U)} |\widetilde{m}| P_m
    \end{equation*}
    where $U$ is the origin state of $\widetilde{m}$ and $x\in U$ arbitrary. By symmetry of the Markov chain, each element of $U$ has the same probability in the stationary distribution, so $\pi(U) = |U| \pi(x)$. Combining the equation above with this one, we see that $P'_{\widetilde{m}}=\frac{|\widetilde{m}|}{|U|}P_m$. Finally, the fact that the number of elements $|\widetilde{m}|$ is equal to the number of transitions in $\widetilde{m}$ from one arbitrary origin $u'\in U$ times $|U|$ implies that $\frac{|\widetilde{m}|}{|U|}P_m = \sum_{(m:x\to \cdot) \in \widetilde{m}} P_m$.    

    The formula for the aggregated transition probabilities is a result of the symmetry, which can be seen as follows. A simple rearrangement after writing out the definitions gives
    \begin{equation*}
       P'(V|U) 
            = \sum_{\widetilde{m}:U\to V} P_{\widetilde{m}}
            = \sum_{\widetilde{m}:U\to V} \sum_{(m:x\to \cdot) \in \widetilde{m}} P_m
    \end{equation*}
    Now note that the double sum simply runs over all transitions $m$ from $x$ to all elements $v\in V$. By first summing over all $v\in V$, and then over all the transitions from $x$ to this $v$, we continue:
    \begin{equation*}            
       \sum_{\widetilde{m}:U\to V} \sum_{(m:x\to \cdot) \in \widetilde{m}} P_m
            = \sum_{y\in V} \sum_{m:x\to y} P_m 
            = \sum_{y\in V} P(y|x),
    \end{equation*}
    which proves the result.
\end{proof}

\subsubsection{Metropolis-Hastings on Quotient Spaces}
In the setting of this paper, we will encounter the following setting. Instead of sampling from $X$, we want to sample from $X/\Gamma$ with distribution $\pi$. Unaware of the difference between $X$ and $X/\Gamma$, one might naively sample from $X$ with distribution $\pi'(u):= \pi(U)$ and marginalise. However, as should be clear by now, this will lead to under-representation of equivalence classes $U\in X/\Gamma$ of relatively small cardinality when viewed as of subsets of $X$.

Correcting for the under-representation of these equivalence classes after sampling can affect sample sizes, which we may want to fix beforehand. Hence, we need to correct for this under-representation during the Metropolis-Hastings sampling. We assume that $G/\Gamma$ is irreducible and aperiodic, that each transition in $G$ is uniquely reversible, and that detailed balance holds for $G$ on the level of transitions. There are two ways of seeing how this can be done, by correcting the stationary distribution directly, or by building a Metropolis-Hastings chain on $G/\Gamma$.

\paragraph{Correcting the Stationary Distribution} 
The first is by adjusting the stationary distribution, which involves counting the number of elements in each equivalence class and changing the stationary distribution accordingly. In other words, we would like to have 
\[
   \widetilde{\pi}(U) = \pi(x') = \frac{1}{|U|}\sum_{x\in U}\pi(x) = \sum_{x\in U}\frac{\pi(x)}{|U|},
\]
where $x'\in U$ is an arbitrary representative, so we sample from $V(G)$ with stationary distribution $\pi'(x):= \frac{\pi(x)}{|U|}$. At first glace it seems this may result in issues when $G$ consists of multiple components, as these components may have different number of representatives of each $v\in V(G)/\Gamma$. However, by symmetry of $G$, this cannot be the case, because the relative sizes of the equivalence classes do not depend on the number of components of $G$. This correction for the stationary distribution leads to a correction factor of $|U|/|V|$ in the transition probabilities of the Metropolis-Hastings Markov chain.

\paragraph{Correcting the M-H Ratio} 
The second correction method goes by building a Metropolis-Hastings Markov chain on the proposal Markov chain $G/\Gamma$ directly. With this, we mean that we calculate the Hastings ratio for $G/\Gamma$, instead of for $G$. Like before, we assume that the transitions in $G$ are reversible, but we calculate the acceptance ratio for the orbit of a transition instead of a single transition. Note that such a transition-orbit $(\widetilde{m}:U\to V)\in E(G)/\Gamma$ still has a unique inverse transition-orbit $(\widetilde{m}^{-1}: V\to U)\in E(G)/\Gamma$, so we can define a Metropolis-Hastings chain on the level of orbit-transitions like in Section~\ref{sec:MH}.

To calculate the Hastings-ratio, we denote with $P_{\cdot}$ the transition probabilities in the Metropolis-Hastings Markov chains and with $g$ the proposal probabilities in $G$ and $G/\Gamma$. By Lemma~\ref{lem:transitions_group_quotient}, we can calculate the Metropolis-Hastings ratio for $G/\Gamma$ as follows:
\[
\frac{g(\widetilde{m}^{-1})}{g(\widetilde{m})}
    = \frac{\frac{|\widetilde{m}^{-1}|}{|V|}g(m^{-1})}{\frac{|\widetilde{m}|}{|U|}g(m)}
    % = \frac{\frac{1}{|V|}g(m^{-1})}{\frac{1}{|U|}g(m)}
    % = \frac{|U| g(m^{-1})}{|V| g(m)}
    = \frac{|U|}{|V|} \frac{g(m^{-1})}{g(m)}.
\]
for a move $\widetilde{m}: U \to V$ with $m\in \widetilde{m}$ and $m^{-1}\in \widetilde{m}^{-1}$ arbitrary representatives. Note that, again, this leads to a correction of $|U|/|V|$ compared to the Metropolis-Hastings chain on $G$.

\paragraph{How to Correct For the Graph Quotient} 
Both of these methods show that the correction for equivalence class size can be realised without knowing whether $G$ is connected, and requires a correction factor $\frac{|U|}{|V|}$ in the acceptance ratio compared to the Metropolis-Hastings chain on $G$.

\begin{theorem}\label{the:quotient_MH}
    Let $G=(V,T,g)$ be a transition graph with probabilities $g(m)$ and let $\pi$ be a distribution on $V(G)/\Gamma$. Then a Metropolis-Hastings Markov chain on $G/\Gamma$ with stationary distribution $\pi$ can be realised by taking the acceptance function
    \[
      A(\widetilde{m}:U\to V) = 
        \min\left(
          1, 
          \frac{\pi(V)}{\pi(U)}\frac{g(\widetilde{m}^{-1})}{g(\widetilde{m})} 
        \right) = 
        \min\left(
          1, 
          \frac{\pi(V)}{\pi(U)}\frac{|U|}{|V|}\frac{g(m^{-1})}{g(m)} 
        \right)
    \]
    where $\widetilde{m}:U \to V$ is a transition in $G/\Gamma$, and $m\in \widetilde{m}$ and $m^{-1}\in \widetilde{m}^{-1}$ are arbitrary representatives of this transition and its reverse.
\end{theorem}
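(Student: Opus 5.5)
The plan is to treat $G/\Gamma$ as a Markov chain in its own right and apply the move-dependent Metropolis-Hastings argument of Section~\ref{sec:MH}. Two conditions must hold: transition-orbits in $G/\Gamma$ must be uniquely reversible, and the ratio of proposal probabilities must equal the stated expression. Once both are in place, detailed balance with respect to $\pi$ on $V(G)/\Gamma$ follows from the computation in Section~\ref{sec:MH}. Irreducibility and aperiodicity of $G/\Gamma$ are assumed in the setting above (possibly after the $(1-\epsilon)P+\epsilon I$ modification), so the stationary distribution is unique and equals $\pi$.

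\emph{Step 1: reversibility on orbits.} Each $m\in T$ has a unique inverse $m^{-1}$. The group acts by graph automorphisms of the transition multigraph, and I take it to be compatible with inversion, so that $\gamma(m)^{-1}=\gamma(m^{-1})$; this is natural for rearrangement moves, where relabelling commutes with undoing a move. It follows that $\widetilde{m}\mapsto\widetilde{m}^{-1}:=\widetilde{m^{-1}}$ is well defined. It is also an involution and a bijection from orbits $U\to V$ to orbits $V\to U$. This justifies rewriting $\sum_{\widetilde m:U\to V}$ as $\sum_{\widetilde m':V\to U}$, which is exactly what the detailed-balance chain in Section~\ref{sec:MH} requires.

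\emph{Step 2: the proposal ratio.} The proposal chain $G$ is acted on by $\Gamma$, meaning $g$ is $\Gamma$-invariant. By Lemma~\ref{lem:transitions_group_quotient}, the quotient proposal probabilities are $g(\widetilde m)=\frac{|\widetilde m|}{|U|}g(m)$ and $g(\widetilde m^{-1})=\frac{|\widetilde m^{-1}|}{|V|}g(m^{-1})$, independent of the chosen representatives. The key point is $|\widetilde m|=|\widetilde m^{-1}|$, since $m\mapsto m^{-1}$ is a bijection between the two orbits by Step~1. Dividing gives the factor $\frac{|U|}{|V|}\frac{g(m^{-1})}{g(m)}$, which establishes the equality of the two forms of $A$ in the statement.

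\emph{Step 3: detailed balance and stationarity.} Set $P'_{\widetilde m}=g(\widetilde m)A(\widetilde m)$ for $\widetilde m:U\to V$ with $U\neq V$, and put the leftover mass on self-loops. The two-case argument of Section~\ref{sec:MH}, depending on whether $A(\widetilde m)\le1$ or $A(\widetilde m^{-1})\le1$, gives $\pi(U)g(\widetilde m)A(\widetilde m)=\pi(V)g(\widetilde m^{-1})A(\widetilde m^{-1})$. Summing over orbits and using the bijection from Step~1 yields $\pi(U)P'(V|U)=\pi(V)P'(U|V)$, so $\pi$ is stationary.

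The main obstacle is that the Lemma as stated assumes irreducibility and aperiodicity of $M$. Here I only need its counting content, namely $\Gamma$-invariance of $g$ and orbit sizes, and not the stationary distribution of $G$. I would therefore re-derive the identity $\sum_{(m:x\to\cdot)\in\widetilde m}g(m)=\frac{|\widetilde m|}{|U|}g(m)$ directly, by noting that the sources of $\widetilde m$ are equidistributed over $U$ under the transitive action on $U$. This avoids any appeal to $\pi(U)=|U|\pi(x)$ for the proposal chain. A second point needing care is the case $U=V$, where the correction factor is $1$; this case is trivial for detailed balance.
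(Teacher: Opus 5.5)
Your proposal is correct and follows the paper's own argument (the ``Correcting the M-H Ratio'' route): inverse transition-orbits are unique, the Lemma gives $g(\widetilde{m})=\frac{|\widetilde{m}|}{|U|}g(m)$, $|\widetilde{m}|=|\widetilde{m}^{-1}|$ yields the factor $|U|/|V|$, and the move-dependent detailed-balance argument is reused on orbits. The paper only asserts two points that you handle explicitly: that $\Gamma$ must commute with inversion, $\gamma(m^{-1})=\gamma(m)^{-1}$, and that the counting identity can be derived directly from $\Gamma$-invariance of $g$ rather than from a stationary distribution of the possibly reducible chain $G$.
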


%Note that such an object can alternatively be viewed as an equivalence class whose elements have a relatively large stabilizer subgroup. In other words, they have a larger set of non-trivial symmetries in $G$. It will be of value to remember this view in the setting of phylogenetic networks, where the set of non-trivial symmetries will correspond to the automorphism group of the network.

\subsubsection{Properties of the Quotient Chains}
\paragraph{Irreducibility of the Quotient Chain}
As we have noted before, the proposal Markov chain $G$ does not need to be irreducible for Metropolis-Hastings sampling with the correction from Theorem~\ref{the:quotient_MH} to work on $G/\Gamma$. It is enough for the quotient Markov chain $G/\Gamma$ itself to be irreducible. This is useful, because we do not need to prove that $G$ is irreducible, even though we generate transition proposals on the level of $G$. Moreover, irreducibility is often proven for $G/\Gamma$ because it is simpler, or simply because it is the object of interest. Indeed, this is very often the case for spaces of phylogenetic networks \cite[e.g.][]{huber2016spaces, gambette2017rearrangement, janssen2019rearrangement, thesis_janssen, klawitter2020spaces, ERDOS2021205}. The proof that irreducibility of $G/\Gamma$ is sufficient follows from a lifting property.

\begin{definition}
    A lift of a subgraph $H$ of $G/\Gamma$ to $G$ is a homomorphism of graphs $l:H\to G$ such that $l\circ \pi = \mathrm{id}$, where $\pi: G\to G/\Gamma$ is the projection. The subgraph obtained as the image $l(H)$ of the lift is also called a lift of $H$.
\end{definition}

The following lemma implies that we can lift simple paths from $G/\Gamma$ to $G$. This then implies that the components of $G$ are irreducible if $G$ has reversible transitions and $G/\Gamma$ is irreducible. For simplicity, we note that a directed transition graph in which every transition is reversible can be viewed as an undirected graph (ignoring the transition probabilities). Moreover, the transition graph is irreducible if and only if the undirected graph is connected.

\begin{lemma}
    Let $\Gamma$ be a group acting on an undirected graph $G$ with quotient graph $G/\Gamma$ and projection $\pi: G \to G/\Gamma$. Let $T$ be an arbitrary subtree in $G/\Gamma$, and $u\in p\in T$ be any chosen representative of some node of $T$. Then there exists a lift of $T$ in $G$ that includes $u$.
\end{lemma}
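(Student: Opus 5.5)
Induction on the number of nodes of $T$ (finite subtree) is the natural approach. I would first fix what the quotient graph is: the nodes of $G/\Gamma$ are the orbits of $V(G)$, and there is an edge $\{p,q\}$ in $G/\Gamma$ exactly when some edge $\{x,y\}\in E(G)$ has $x\in p$ and $y\in q$. (If one prefers to work with edge-orbits as edges of $G/\Gamma$, as in the transition setting, nothing changes: each quotient edge is the orbit of some concrete edge $\{x,y\}$ of $G$.) The key observation is a \emph{local lifting property}. Suppose $\{p,q\}$ is an edge of $G/\Gamma$ and $x'\in p$ is any representative. Pick an edge $\{x,y\}$ of $G$ in that orbit, with $x\in p$ and $y\in q$. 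Since $x$ and $x'$ lie in the same orbit, there is $\gamma\in\Gamma$ with $\gamma(x)=x'$. The action respects incidence, so $\{x',\gamma(y)\}=\gamma(\{x,y\})$ is an edge of $G$ projecting to the same quotient edge, and $\gamma(y)\in q$. In words, every quotient edge at $p$ lifts to an edge at any prescribed representative of $p$.

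With this in hand, the induction proceeds as follows.
\begin{itemize}
\item \textbf{Base case.} If $T$ is the single node $p$, set $l(p)=u$.
\item \textbf{Inductive step.} If $T$ has more than one node, choose a leaf $q$ of $T$ with $q\neq p$; a tree with at least two nodes has at least two leaves, so this is possible. Let $r$ be its unique neighbour in $T$. The tree $T-q$ still contains $p$, so by induction it has a lift $l'$ with $l'(p)=u$.
\item \textbf{Extension.} Apply the local lifting property to the edge $\{r,q\}$ at the representative $l'(r)$. This gives a vertex $y\in q$ with $\{l'(r),y\}\in E(G)$. Define $l(q)=y$, and send the edge $\{r,q\}$ to $\{l'(r),y\}$.
\end{itemize}
Because $q$ is a leaf, this is the only new edge, so $l$ is a graph homomorphism $T\to G$. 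Each vertex and edge of $T$ is sent into its own orbit, so composing $l$ with the projection gives the identity on $T$.

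For infinite subtrees I would replace the induction with a Zorn's lemma argument, or with induction along a breadth-first exhaustion from $p$. Partial lifts defined on subtrees containing $p$ are ordered by extension, and a maximal one must be total by the extension step above. The paper only deals with finite state spaces, so I would state this remark briefly.

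The only real content is the local lifting step, and the point to be careful about there is what an edge of $G/\Gamma$ means. It must be the image of an edge of $G$, and it must be applicable from every representative of $p$, which is exactly what the transitivity of $\Gamma$ on each orbit gives. A further subtlety is that the lift need not be injective on vertices when two distinct tree nodes are the same orbit. That cannot happen, though, since distinct quotient nodes are disjoint orbits; so $l(T)$ is genuinely a tree isomorphic to $T$.
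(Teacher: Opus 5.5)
Your proof is correct and follows the same route as the paper. The paper also builds the lift outward from $u$ one edge at a time: it lifts each quotient edge $(U,V)$ at the already-placed representative $u'\in U$ by taking a concrete edge $(u'',v)$ and applying some $\gamma\in\Gamma$ with $\gamma(u'')=u'$. Your leaf-removal induction is a more explicit bookkeeping of that extension procedure, and your remarks on $\pi\circ l=\mathrm{id}$ and on injectivity of the lift are sound.
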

\begin{proof}
    To construct a lift, start with a subgraph $H$ of $G$ consisting only of $u$ and extend $H$ to a lift of $T$ one edge at a time. More precisely, suppose there is an edge $(U,V)\in G/\Gamma$ with $u'\in U$ already in $H$. Because $(U,V)\in G/\Gamma$, there is an edge $(u'',v)\in E(G)$ which represents $(U,V)$. Moreover, there is an element $\gamma \in \Gamma$ that sends $u''$ to $u'$, as these nodes are both in $U$. Applying $\gamma$ to the edge $(u'',v)$, we get an edge $(u',v')\in E(G)$ with which we can extend $H$.   
\end{proof}

\begin{corollary}
    Let $\Gamma$ be a group acting on an undirected graph $G$ with quotient $G/\Gamma$ and projection $\pi: G \to G/\Gamma$. If $G/\Gamma$ is connected, then for any connected component $H$ of $G$ and for all $U\in V(G/\Gamma)$, we have $V(H)\cap U \neq \emptyset$.
\end{corollary}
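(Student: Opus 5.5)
The plan is to apply the preceding lemma to a spanning tree of the quotient. Since $G/\Gamma$ is connected, it has a spanning tree $T$; this holds for finite quotients, and otherwise by Zorn's lemma.

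Now fix a connected component $H$ of $G$ and pick any vertex $u\in V(H)$. Let $p=\pi(u)$ be its orbit, so that $u\in p\in T$. The preceding lemma then provides a lift $l:T\to G$ whose image contains $u$.

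The key observation is that this lift is connected. The image of a connected graph under a graph homomorphism is connected, and $l(T)$ contains $u$. It therefore lies entirely in the connected component of $u$, which is $H$.

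Next, take an arbitrary $U\in V(G/\Gamma)$. Since $T$ is spanning, $U$ is a node of $T$. The lift condition (read as $\pi\circ l=\mathrm{id}_T$) gives $\pi(l(U))=U$, which means $l(U)\in U$. Combined with $l(U)\in V(H)$, this yields $V(H)\cap U\neq\emptyset$.

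The only delicate point is the lemma itself, namely that the edge-by-edge extension produces a genuine homomorphism on a tree. Because $T$ is a tree, each new edge attaches exactly one new vertex, so no consistency conditions arise. This is the step I would double-check. If $G/\Gamma$ is infinite, I would instead avoid spanning trees and argue directly: pick a path in $G/\Gamma$ from $\pi(u)$ to $U$, view it as a tree (a path), and lift it through $u$ using the lemma. Its endpoint then lies in $U$ and in the component $H$.
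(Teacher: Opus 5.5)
Your proof is correct and takes essentially the paper's approach. The paper gives no separate proof; it derives the corollary directly from the preceding lifting lemma, remarking that it lets one lift paths of $G/\Gamma$. Your argument is exactly that deduction: you lift a spanning tree, or a path from $\pi(u)$ to $U$, through a vertex $u\in V(H)$ and use connectedness of the lift, while correctly reading the lift condition as $\pi\circ l=\mathrm{id}$ rather than the paper's typo $l\circ\pi=\mathrm{id}$.
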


Hence, if the transition graph $G$ is reducible, then any connected component $H$ of $G$ is still irreducible, and can be used for sampling. Moreover $G/\Gamma \equiv H/\Gamma$, so sampling from $G/\Gamma$ can be done using $H$ via the quotient construction. This makes it sufficient to prove that $G/\Gamma$ is irreducible for sampling from this quotient graph using the original graph, even if that original graph is reducible.

\paragraph{Aperiodicity of the Quotient Chain}
Unlike irreducibility, aperiodicity is not necessarily conserved by a lift of the quotient construction. A trivial example is the Markov chain on a directed cycle on $n$ nodes, where each node has a unique outgoing transition to the next state (which is taken with probability 1). This a Markov chain with period $n$. The cyclic group on $n$ elements acts on this Markov chain such that the quotient is a Markov chain with one state, which is aperiodic. 

If the aim is to have a Markov chain with a given stationary distribution, then aperiodicity can be introduced in the lift without changing the stationary distribution of the quotient chain. If one simply adds a non-zero probability for each state to transition to itself (the same for each node), then the Markov chain becomes aperiodic, but the stationary distribution of the quotient chain does not change.

\subsection{Sampling Graphs}
To sample graphs using this framework, we must realize that in most software implementations, all nodes of a graph are distinguishable objects. Hence, to sample partially labelled or unlabelled graphs, the nodes must be made indistinguishable somehow. In light of the previous discussion, it is natural to start with labelled graphs and use a quotient construction to remove the labels again. In fact, spaces of partially labelled graphs can be viewed as quotient spaces of fully labelled graphs, where the group action on the set of fully labelled graphs is induced by a group action on the label set. Let $S_{\mathcal{L}}$ be the permutation group for the label set $\mathcal{L}$, and let $K\subseteq \mathcal{L}$ be a subset of the labels --- think of these as the leaf labels of phylogenetic networks. Then a quotient by the action of the permutation group $S_{\mathcal{L}\setminus K}$ preserves the labels in $K$, but makes all other labels indistinguishable.

This means that we can sample unlabelled or partially labelled graphs using a quotient construction as described above. 
Provided we have a transition graph on the set of all fully labelled graphs of interest, for which the quotient is an irreducible transition graph.

\subsubsection{Rearrangement Moves}\label{sec:rearrangement_move_proposals}
To sample graphs based on the quotient construction above, we need to define transitions on the set of all fully labelled networks. For graphs, we call these transitions rearrangement moves or rearrangement operations, which is common for phylogenetic networks. For different types of graphs, there are different types of rearrangement moves. For example, edge-switches are often used for graphs with a given degree sequence \cite[e.g.][]{GREENHILL20181, kleer2019nash}, and rooted Subtree Prune and Regraft ($\rSPR$) moves are often used for rooted phylogenetic trees \cite[e.g.][]{nascimento2017biologist}.

We will now introduce a way to use such moves to define a transition graph for which the transition probabilities are easy to calculate. For this, we define a \textit{move proposal} as a tuple of nodes and or edges that fully specify a rearrangement move. For example, a switch on a directed graph can be fully specified by a tuple of four nodes $(a,b,c,d)$, which \textit{encodes} a swap that removes the edges $(a,b)$ and $(c,d)$ and adds the edges $(a,d)$ and $(c,b)$. Of course, for a given graph, many of these proposals can be \textit{invalid}, meaning they do not define a swap in that graph. 

To use the Metropolis-Hastings sampling above, it is convenient if these move proposals have unique reverse proposals. For example, the switch encoded by $(a,b,c,d)$ is the reverse of the switch encoded by $(a,d,c,b)$. To assign transition probabilities, one could simply define a method to randomly generate move proposals such that the probability of a given proposal can easily be determined. With such a method, it becomes feasible to quickly generate a proposal and compute its acceptance probability.

For example, a simple (though inefficient) way to generate swap proposals is by uniformly sampling four nodes with replacement from the set of nodes of the graph $G$, the probability of each proposal then becomes $1/|V(G)|^4$, which is easy to compute. In fact, the probability only depends on the (length of the) degree sequence, so each move proposal in the transition graph has the same probability. The downside of this proposal generator is that it produces a lot of invalid proposals.

As mentioned in the example, generating proposals this freely does have the downside of (in general) producing many invalid proposals. This is no problem for the Metropolis-Hastings sampling, as such invalid proposals are simply discarded --- their acceptance probability is set to $0$, or equivalently, $\pi(D)=0$ where $D$ is the result of applying the move proposal. However, invalid proposals result in the need for longer chains, which simply makes sampling slower.

\subsubsection{Correcting for Equivalence Class Sizes}\label{sec:correction_representations}

In the case of the graph action outlined above, the (relative) sizes of the equivalence classes can be determined by the sizes of their automorphism groups. Indeed, let $G$ be a partially labelled (by a label set $K\subseteq \mathcal{L}$) or unlabelled graph (so $K=\emptyset$), and $X$ the set of all fully labelled graphs obtained from $G$ by adding labels from the set $L=\mathcal{L}\setminus K$ on all other nodes. The number of fully labelled graphs obtained representing $G$ is then 
\begin{equation}\label{eq:correction_representation}
    \Rep(G) = \frac{|L|!}{(|L|-|V(G)|+|K|)!} \frac{1}{|\Aut(G)|},
\end{equation} 
where $\Aut(G)$ is the group (set) of automorphisms of $G$ that respects the labelling of $G$. The group action and the quotient are chosen such that the equivalence classes are exactly these sets of fully labelled graphs representing one partially labelled or unlabelled graphs, so this $\Rep(G)$ is the size we need to correct for. In most cases, the rearrangement moves do not change the number of labels or the number of nodes, so the size of the automorphism group is the only factor to correct for.

For phylogenetic networks, we will change this slightly, as we will consider separate label sets for different types of nodes. However, the principle remains the same: there is a correction factor for the number of automorphisms of the network. 

\subsection{Phylogenetic Networks}
\subsubsection{Directed Phylogenetic Networks}
%DEF: Phylogenetic network
\begin{definition}
A \emph{directed (binary) phylogenetic network} $N=(V,A,l)$ on a set of \emph{taxa} $X$ is a directed acyclic graph (DAG) $(V,A)$ labelled bijectively with $l:L(N)\rightarrow X$, where $L(N)$ denotes the set of leaves of $N$, with no parallel edges and only nodes of the following types:
\begin{description}
    \item[Root:] indegree-$0$, outdegree-$1$ node;
    \item[Tree node:] indegree-$1$, outdegree-$2$ node;
    \item[Reticulation:] indegree-$2$, outdegree-$1$ node;
    \item[Leaf:] indegree-$1$, outdegree-$0$ node,
\end{description}
and such that there is exactly one root.
We shall write~$V(N)$ to denote the set of all nodes of~$N$ and~$A(N)$ to denote the set of all edges of~$N$.
The reticulation number $r(N)$ of $N$ is the number of reticulation nodes in $N$. A \emph{phylogenetic tree} is a phylogenetic network without reticulations.
\end{definition}

For brevity, we will refer to directed binary phylogenetic networks simply as networks in the remainder of this paper.

\begin{definition}
A \emph{fully labelled network} (cf. \emph{vertex-labelled network}~\cite{fuchs2019counting}, or \textit{internally labelled network}~\cite{thesis_janssen}) is a network
$\dot{N}=(V,A,l)$ in which the labelling is an injective map $l:V\to X$. The set of labels $X$ can be partitioned into sets that correspond to the types of nodes. The set of leaf labels is $X^l$, the set of tree node labels is $X^t$, the set of reticulation labels is $X^r$, and the set of root labels is the singleton set $\{x^{\rho}\}$.
\end{definition}

To clearly distinguish between networks and fully labelled networks, we sometimes refer to the former as \textit{leaf-labelled networks}. Because each leaf has a unique label in a network, we will often interchange $l(v)$ and $v$ freely for leaves in networks. For fully labelled networks, we do this for all nodes. 

\begin{observation}\label{obs:numbers}
Let $N=(V,A,l)$ be a binary network with $n$ leaves and $k$ reticulations. Then $N$ has $2n+3k-1$ edges and $2n+2k$ vertices, of which $n$ are leaves, $1$ is the root, $k$ are reticulation nodes, and $n+k-1$ are tree nodes.
\end{observation}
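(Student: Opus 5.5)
We obtain all the counts in Observation~\ref{obs:numbers} from two relations: the handshake identity for directed graphs and a count of nodes by type. Let $t$ be the number of tree nodes. By definition there is exactly one root, there are $n$ leaves because the leaf labelling is a bijection onto the taxa, and there are $k=r(N)$ reticulations. Hence
\[
|V| = 1 + n + k + t .
\]

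Next we use the fact that every edge has exactly one tail and one head, so the sum of indegrees and the sum of outdegrees both equal $|A|$. Reading the degrees off the definition gives
\begin{align*}
|A| &= \sum_{v\in V}\mathrm{indeg}(v) = 0\cdot 1 + 1\cdot t + 2k + 1\cdot n,\\
|A| &= \sum_{v\in V}\mathrm{outdeg}(v) = 1\cdot 1 + 2t + 1\cdot k + 0\cdot n.
\end{align*}
Equating the two right-hand sides, $t+2k+n = 1+2t+k$, which gives $t = n+k-1$.

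Substituting back, the edge count is $|A| = t + 2k + n = (n+k-1)+2k+n = 2n+3k-1$, and the node count is $|V| = 1+n+k+(n+k-1) = 2n+2k$. This establishes every count in the observation.

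There is no real obstacle. The only point to check is that the type list in the definition is exhaustive, so that each node contributes exactly the stated in- and outdegree. This holds because the definition allows only the four listed node types and exactly one root. Binarity is essential here, since it fixes the degrees to the values used in the sums. Acyclicity and the absence of parallel edges are not needed for the counting.
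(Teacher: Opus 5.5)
Your degree-sum argument is correct and complete: equating the indegree total $t+2k+n$ with the outdegree total $1+2t+k$ gives $t=n+k-1$, and the edge count $2n+3k-1$ and vertex count $2n+2k$ follow. The paper states this as an observation without proof, and your handshake computation is the standard justification it implicitly relies on.
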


\begin{definition}
Two networks are \emph{isomorphic} if there exists a bijection~$f$ between the nodes of~$N$ and the nodes of~$N'$ such that~$(u,v)$ is an edge of~$N$ if and only if~$(f(u),f(v))$ is an edge of~$N'$, and each labelled node of~$N$ is mapped to a node in~$N'$ with the same label. They are \emph{leaf-isomorphic} if they are isomorphic as labelled graphs when the labelling is restricted to the leaves. 
An \emph{automorphism} of a network $N$ is an isomorphism between $N$ and itself. The automorphism group of $N$ is denoted $\Aut(N)$.
\end{definition}

\subsubsection{Rearrangement Moves}\label{sec:rearr}
In this section, we recall the definitions of rearrangement moves on phylogenetic networks, including an extension to networks with internal labels. These moves provide a way to slightly modify networks. We will define several of these moves, and present a few basic properties of these moves. It is conventional \cite[e.g.][]{gambette2017rearrangement} to make a distinction between horizontal and vertical moves, where moves that do not change the reticulation number are called \emph{horizontal moves} and moves that do change the reticulation number are called \emph{vertical moves}. We first introduce some horizontal moves, and then we continue with vertical moves.

\paragraph{Horizontal Moves}

%DEF: Tail move
\begin{definition}\label{def:tail}
Let $N$ be a network containing the distinct edges $(p,u)$, $(u,c)$, $(u,v)$, and $(p',c')$. 
Let $D$ be the (multi-)digraph obtained from $N$ by 
\begin{itemize}
    \item \emph{pruning} $(u,v)$ at $u$, i.e., replace $(p,u)$ and $(u,c)$ with a new edge $(p,c)$;
    \item and \emph{reattach} $(u,v)$ at $(p',c')$, i.e., replace $(p',c')$ by $(p',u)$ and $(u,c')$.
\end{itemize}
If $D$ is a network (i.e. if it has no parallel edges or directed cycles), then we say $D$ is the result of the \emph{tail move} of $(u,v)$ from $(p,c)$ to $(p',c')$ in $N$, which we also denote $(p,c)\move{(u,v)}(p',c')$ or $u\move{(u,v)}(p',c')$ if the from-edge is not relevant. All nodes retain their labels if they were labelled (Figure~\ref{fig:InternalLabelsExample}).
\end{definition}

A head move is defined similarly, with the following small change. Instead of the tail, the head of an edge is pruned and reattached in a head move.

%DEF: Head move
\begin{definition}\label{def:head}
Let $N$ be a network containing the distinct edges $(p,v)$, $(v,c)$, $(u,v)$, and $(p',c')$. 
Let $D$ be the (multi-)digraph obtained from $N$ by 
\begin{itemize}
    \item \emph{pruning} $(u,v)$ at $v$, i.e., replace $(p,v)$ and $(v,c)$ with a new edge $(p,c)$;
    \item and \emph{reattach} $(u,v)$ at $(p',c')$, i.e., replace $(p',c')$ by $(p',v)$ and $(v,c')$.
\end{itemize}
If $D$ is a network (i.e. if it has no parallel edges or directed cycles), then we say $D$ is the result of the \emph{head move} of $(u,v)$ from $(p,c)$ to $(p',c')$ in $N$, which we also denote $(p,c)\move{(u,v)}(p',c')$ or $v\move{(u,v)}(p',c')$ if the from-edge is not relevant.
All nodes retain their labels if they were labelled (Figure~\ref{fig:InternalLabelsExample}). 
\end{definition}

%TEXT: after the move, it needs to be a network
The graph $D$ in \Cref{def:tail} and \Cref{def:head} is not necessarily a network, as it may be cyclic or contain parallel edges. If that is the case, then we say the tail move or the head move is \textit{invalid}. All other moves, where $D$ is a network, are called \textit{valid} moves. 

Note that head moves and tail moves do not change the number of leaves, tree nodes, or reticulations of a network. Head moves and tail moves together are called \emph{rSPR} moves. In fact, most commonly used horizontal rearrangement moves for phylogenetic networks are a type of $\rSPR$ move. Thus, if we consider move proposals for horizontal rearrangement moves, then we can restrict our attention to $\rSPR$ moves.

Note that two rearrangement moves that lead to the same leaf-labelled topology may result in different networks when all internal labels (i.e., non-leaf labels) are considered (Figure~\ref{fig:InternalLabelsExample}).
\begin{figure}[ht]
    \centering
    \includegraphics[width=.75\textwidth]{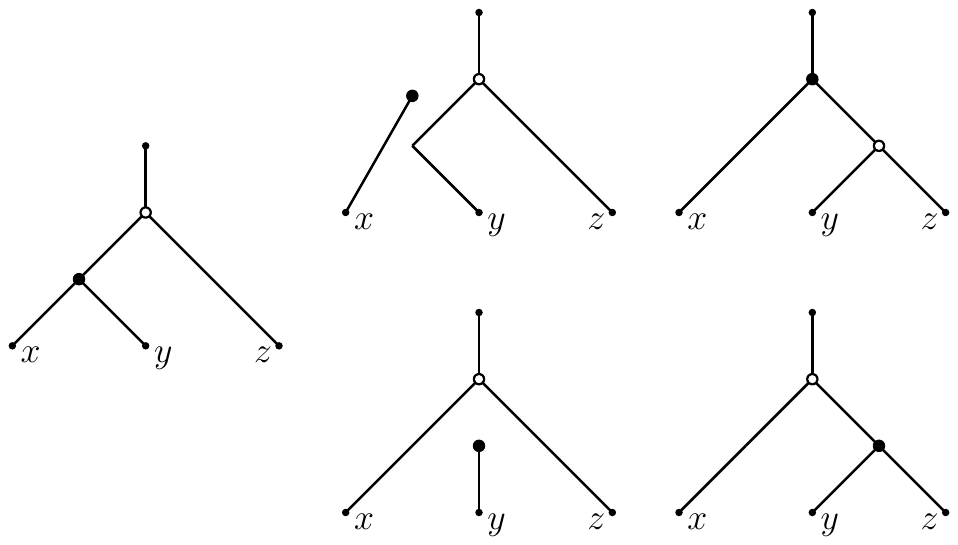}
    \caption{Two tail moves (defined in Section~\ref{sec:rearr}) applied to a network with internal labels. The figure explicitly shows the pruning step, resulting in the networks in the middle, and the reattachment step, resulting in the two networks on the right. The moves applied to the left network result in two networks that are leaf-isomorphic, but not isomorphic with respect to all internal labels. The labels of the internal nodes are represented by the node shapes, the labels of the leaves are shown as letters.}
    \label{fig:InternalLabelsExample}
\end{figure}

\paragraph{Vertical Moves}
Although several types of vertical moves have been defined \cite{huber2016spaces,wen2016bayesian,bordewich2017lost,gambette2017rearrangement,janssen2019rearrangement}, we restrict our attention to one simple version which simply adds or removes an edge~\cite{bordewich2017lost}.

\begin{definition}\label{def:vplus}
Let $N$ be a network containing the distinct edges $(u,v)$ and $(u',v')$. Let $D$ be the digraph obtained from $N$ by replacing $(u,v)$ and $(u',v')$ with edges $(u,w)$, $(w,v)$, $(u',w')$, $(w',v')$, and $(w,w')$. If $D$ is a network, then we say $D$ is the result of the valid $\VP$ move $\VP((u,v),(u',v'))$, otherwise, we say that $\VP((u,v),(u',v'))$ is invalid. For fully labelled networks, the labels $l(w)$ and $l(w')$ of the new nodes $w$ and $w'$ must also be defined, which we denote $\VP((u,v),(u',v'),l(w),l(w'))$.
\end{definition}

\begin{definition}\label{def:vmin}
Let $N$ be a network containing the distinct edges $(u,w)$, $(w,v)$, $(u',w')$, $(w',v')$, and $(w,w')$. Let $D$ be the digraph obtained from $N$ by replacing the edges $(u,w)$, $(w,v)$, $(u',w')$, $(w',v')$, and $(w,w')$ with the edges $(u,v)$ and $(u',v')$, and removing the nodes $w$ and $W'$. If $D$ is a network, then we say $D$ is the result of the $\VM$ move $\VM(w,w')$ and $\VM(w,w')$ is valid, otherwise, we say that $\VM(w,w')$ is invalid.
\end{definition}

Note that the result $D$ of this process can be invalid for additional reasons compared to horizontal moves. If $w$ is a reticulation node, then $D$ contains a node of indegree-$2$ and outdegree-$0$ which is not allowed in a network. Similarly, if $w'$ is a tree node, then $D$ contains a node of indegree-$0$ and outdegree-$2$.

\subsubsection{Spaces of Networks}
A set of networks together with rearrangement moves forms a space of networks. Such a space can be viewed as a graph where the set of nodes is a set of networks, and there is an edge from one network to another if there is a rearrangement move turning the one into the other. To define these spaces more rigorously, we first set some notation for certain sets of networks.

\begin{definition}
Let $n\geq 1$ and~$k\geq 0$. The set of all networks with $n$ leaves and $k$ reticulations is denoted $\N(n,k)$, and the set of all networks with $n$ leaves and \textit{at most} $k\leq k_{\mathrm{max}}$ reticulations is denoted $\N(n,\leq k_{\mathrm{max}})$. The label set $X=X^l=\{x^l_1,\ldots,x^l_n\}$, which labels only the leaves, of all these networks is fixed.

Moreover, $\Ni(n,k)$ (resp. $\Ni(n,\leq k_{\mathrm{max}})$) is the set of fully labelled networks with $n$ leaves and $k$ (resp. at most $k$) reticulations. Again, the label sets are fixed for all these networks to $X^l=\{x^l_1,\ldots,x^l_n\}$, $X^r=\{x^r_1,\ldots,x^r_k\}$, and $X^t=\{x^t_1,\ldots,x^t_{n+k-1}\}$. 
\end{definition}

%DEF: Space of phylogenetic networks
\begin{definition}
The \emph{space of (phylogenetic) networks} defined by the directed rearrangement moves of type $M$ with $n$ leaves and $k$ reticulations, is the graph $\N[M](n,k)$ (resp. $\N[M](n,\leq k_{\mathrm{max}})$) whose set of nodes is the set of networks $\N(n,k)$ (resp. $\N(n,\leq k_{\mathrm{max}})$) with $n$ leaves and $k$ (resp. at most $k$) reticulations, and there is an edge between two networks if there is an $M$-move that transforms the one into the other.
The spaces $\Ni[M](n,k)$ and $\Ni[M](n,\leq k_{\mathrm{max}})$ are defined similarly.
\end{definition}

%DEF: Connected space
The \emph{$k$-th tier} \cite{francis2017bounds,huber2016transforming} of network space consists of all networks with $k$ reticulations. We say the $k$-th tier is connected by a (directed) horizontal move type $M$ if all $\N[M](n,k)$ are connected, and the space of networks is connected by a horizontal move type $M$ if $\N[M](n,k)$ is connected for all $n,k\geq 0$. This implies that using such a horizontal move type in conjunction with any vertical move type for which there is at least one move between each subsequent pair of tiers, any network can be reached from any other network.

It has been proven that most $\N[M](n,k)$ and $\Ni[M](n,k)$ are connected under $\rSPR$ moves with a relatively small diameter (i.e., largest distance between any two nodes) \cite{janssen2018exploring, thesis_janssen}. Moreover, this still holds for more restricted moves such as head moves, tail moves, and rNNI moves\cite{janssen2019rearrangement, thesis_janssen}, and also when $\rSPR$ moves are applied to restricted spaces such as those formed by certain classes of networks \cite{klawitter2020spaces}.

\section{Symmetries and Proposal Ratios}\label{sec:ProposalRatios}
In this section, we apply the theory from Section~\ref{sec:MCMC} to leaf-labelled networks as a quotient of fully labelled networks. We show how we can sample from a set of leaf-labelled networks by using a correction factor on the acceptance ratio for move proposals on fully labelled networks. It will follow directly from the theory above that this ratio does not depend on whether the space of fully labelled networks is connected under a chosen rearrangement move; only the space of leaf-labelled networks needs to be connected. Perhaps slightly more surprisingly, the acceptance ratio is also independent of the number of labels used for the internal nodes. 

Computing the acceptance ratio does require the number of fully labelled networks that represent one leaf-labelled network $N$. For this, we will study the symmetry of phylogenetic networks, or, more specifically, the number of automorphisms $|\Aut(N)|$ of $N$.

\subsection{Proposal Ratios}
To sample from sets of leaf-labelled networks (i.e. $\N(n,\leq k_{\mathrm{max}})$ and $\N(n, k)$), we use a proposal Markov chain based on the rearrangement moves from Section~\ref{sec:rearr}, applied to fully labelled networks. Move proposals are generated by selecting a move type at random independent of the current network (if multiple are used), and then generating a move proposal for that move type and the current network, by picking some nodes, edges, and labels uniformly at random --- for $\VP$ moves the new labels need to be selected as well. For example, we may pick an $\rSPR$, a $\VM$, or a $\VP$ move type with equal probability. Say we picked $\VM$, then the next step is to generate a move proposal of that type for the current network, which, in this case of $\VM$, is simply a random edge in the network. This is quite similar to the uniquely reversible moves from PhyloNet \cite{wen2016bayesian} and SpeciesNetwork \cite{zhang2018bayesian}, although these studies seem to assume that this reversibility extends to leaf-labelled networks.

More formally, we generate a move type $M$ with a fixed probability $\PP(M)$. Thus, a move proposal $m\in M$ on $N$ is generated with probability $g(m)=\PP(M) g(M,n,k)$. Note that the proposal probability $g(M,n,k)$ does not depend on the topology of $N$ but only on the move type $M$, and the network parameters $n$ (the number of leaves) and $k$ (the number of reticulations). This is because these already determine the number of nodes and edges in the network, and we only pick nodes and edges uniformly at random to generate a move proposal, which in turn requires some number of nodes and edges dependent on the move type. This independence of the topology together with reversibility of move proposals makes it possible to calculate the ratios of proposal probabilities needed to use Metropolis-Hastings sampling. 

As mentioned in Section~\ref{sec:MCMC}, to sample leaf-labelled networks, we now need to calculate the Hastings ratio for the quotient Markov chain induced by the group action that permutes all internal labels. Indeed, for that group action, each equivalence class represents a maximal set of labelled networks that are isomorphic as leaf-labelled networks. 

\subsubsection{Proposal Probabilities Worked Out}\label{sec:proposal_probabilities_worked_out}

As mentioned before, $\rSPR$ moves can be represented by a tuple $(e_m, v_m, e_t)$, which represents the move $v_m\move{e_m} e_t$. If we sample the two edges uniformly at random from the set of edges and then pick an endpoint of $e_m$ for $v_m$ uniformly at random, all from a network $N$ with $n$ leaves and $k$ reticulations, the proposal probability works out to 
\[g(\rSPR, n,k) = \frac{1}{2(2n+3k-1)^2}.\]

A tail move can be represented by a pair of edges $(e_m, e_t)$, representing the move $v_m\move{e_m} e_t$ where $v_m$ is the tail end of $e_m$. Again, by uniformly at random picking two edges from the set of all edges, we get a proposal probability for tail moves. This works out to 
\[g(\Tail, n,k) = \frac{1}{(2n+3k-1)^2}.\] 

Similarly, for head moves, a pair of edges $(e_m, e_t)$ represents the move $v_m\move{e_m} e_t$ where $v_m$ is the head end of $e_m$. This leads to the same proposal probability 
\[g(\Head, n,k) = \frac{1}{(2n+3k-1)^2}.\]

A $\VP$ move proposal can be represented by a tuple $(e_1, e_2, l_1, l_2)$, which represents the move $\VP(e_1,e_2,l_1,l_2)$ (Definition~\ref{def:vplus}). Here, $e_1$ and $e_2$ are edges of the current network, and $l_1$ and $l_2$ are labels chosen from the full label set $L$. In this case, we generate the edges independently by picking each uniformly at random from the edges of the network, but the labels are generated dependently, by picking one after the other uniformly at random from the labels not used by the network. 
Denoting the leaf labels with $K$, and all other labels with $L$, there are $|L|-|V(N)|+|K|$ such unused labels.  
Hence, the proposal probability for a $\VP$ move becomes 
\[g(\VP, n,k) = \frac{1}{(2n+3k-1)^2 (|L|-|V(N)|+|K|) (|L|-|V(N)|+|K|-1)}.\]

A $\VM$ move proposal can be encoded by a single edge $e$ (the removed edge), representing the move $\VM(e)$ (Definition~\ref{def:vmin}). By again selecting the edge uniformly at random, we get a proposal probability 
\[g(\VM, n,k) = \frac{1}{2n+3k-1}.\]

Note that an $\rSPR$, $\Tail$, or $\Head$ move does not change the number of leaves or the number of reticulations, and the reverse of such a move is again an $\rSPR$, $\Tail$, or $\Head$ move respectively. Hence, the probability of a proposal and its reverse are the same for these move types. This means that the proposal ratio part of the Metropolis-Hastings acceptance ratio for such a Markov chain on fully labelled networks will always be equal to 1, and the only remaining part is the ratio of the target distribution for the two states. 

This is not the case for vertical moves, where the reverse proposal of a $\VP$ move is a $\VM$ move, as the number of reticulations changes after such moves. Hence, for a $\VP$ move on a network with $n$ leaves and $k$ reticulations, the ratio of proposal probabilities in the acceptance ratio is 
\[ \frac{g(\VM, n,k+1)}{g(\VP, n,k)} = \frac{(2n+3k-1)^2 (|L|-|V(N)|+|K|) (|L|-|V(N)|+|K|-1)}{2n + 3k + 2}.\]

Similarly, for a $\VM$ move on a network with $n$ leaves and $k$ reticulations, the ratio of proposal probabilities in the acceptance ratio is 
\[ \frac{g(\VP, n,k-1)}{g(\VM, n,k)} = \frac{2n + 3k - 1}{(2n+3k-4)^2 (|L|-|V(N)|+|K|) (|L|-|V(N)|+|K|-1)}.\]

\subsubsection{The Hastings Ratio for the Quotient Markov Chain}
To calculate the Metropolis-Hastings ratio for leaf-labelled networks we must add a correction to the ratio of proposal probabilities $g(M,n,k)/g(M',n',k')$ (Theorem~\ref{the:quotient_MH}). This correction is simply a correction for the number of representations of each leaf-labelled network by fully labelled networks (Section~\ref{sec:correction_representations}). Following Equation~\ref{eq:correction_representation}, this is a function of the cardinality of the label set, the number of labels used in the network, and the cardinality of the automorphism set of the network. Combining all this, we get the following acceptance functions for the different move types.

\begin{theorem}\label{the:MH_for_leaf_labelled_networks}
    Let $\pi$ be a distribution on a space of leaf-labelled networks with non-zero probability for each network. Let $N$ be a leaf-labelled network with $n$ leaves and $k$ reticulations, and $\widetilde{m}:N \to N'$ a rearrangement move generated as outlined in Section~\ref{sec:proposal_probabilities_worked_out}. Provided the quotient proposal Markov chain on the leaf-labelled networks is irreducible and aperiodic, a Metropolis-Hasting Markov Chain with stationary distribution $\pi$ can be realised with the acceptance probability function
    \[
        A(\widetilde{m}) = \min\left(1, \frac{\pi(N')}{\pi(N)}\frac{|\Aut(N')|}{|\Aut(N)|}\right)
    \]
    for moves of type $\rSPR$, $\Tail$ and $\Head$, the acceptance function
    \[
        A(\widetilde{m}) = \min\left(1, \frac{\pi(N')}{\pi(N)}\frac{|\Aut(N')|}{|\Aut(N)|} \frac{(2n + 3k - 1)^2}{2n + 3k + 2} \right)
    \]
    for $\VP$ moves, and the acceptance function 
    \[
        A(\widetilde{m}) = \min\left(1, \frac{\pi(N')}{\pi(N)}\frac{|\Aut(N')|}{|\Aut(N)|} \frac{2n + 3k -1}{(2n + 3k - 4)^2}\right)
    \]
    for $\VM$ moves.
\end{theorem}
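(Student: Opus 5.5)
The plan is to obtain the statement as a direct specialisation of Theorem~\ref{the:quotient_MH}. Take as transition graph $G$ the space of fully labelled networks $\Ni(n,\leq k_{\mathrm{max}})$, with transitions the move proposals of Section~\ref{sec:proposal_probabilities_worked_out} and probabilities $g(m)=\PP(M)g(M,n,k)$. Take as $\Gamma$ the group permuting the non-leaf labels $L$ while fixing the leaf labels $K$. The orbit $U$ of a fully labelled network is then exactly the set of fully labelled representatives of one leaf-labelled network $N$, so $|U|=\Rep(N)$ as given by Equation~\eqref{eq:correction_representation}. Theorem~\ref{the:quotient_MH} then yields the acceptance function
\[
\min\left(1,\frac{\pi(N')}{\pi(N)}\frac{\Rep(N)}{\Rep(N')}\frac{g(m^{-1})}{g(m)}\right),
\]
and what remains is to evaluate $\Rep(N)/\Rep(N')$ and the proposal ratio for each move type.

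First I must check the hypotheses of Theorem~\ref{the:quotient_MH}.

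\emph{The action is a group action on the Markov chain.} Relabelling internal nodes maps a valid move proposal to a valid move proposal. It also preserves $g$, because $g(M,n,k)$ depends only on $M$, $n$, $k$ (and on the number of unused labels, which is invariant under relabelling), not on the topology or the particular labels.

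\emph{Unique reversibility.} A $\rSPR$, $\Tail$ or $\Head$ move is reversed by a move of the same type. $\VP(e_1,e_2,l_1,l_2)$ is reversed by $\VM$ of the new edge, and vice versa. Reversal commutes with relabelling, so orbits of moves have well-defined inverse orbits.

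Irreducibility and aperiodicity of the quotient chain are assumed in the statement, and by the lifting corollary the fully labelled chain need not be irreducible.

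The computation then splits by move type. Write $a(N)=|L|-|V(N)|+|K|$ for the number of unused labels, so that $\Rep(N)=|L|!/(a(N)!\,|\Aut(N)|)$, and hence
\[
\frac{\Rep(N)}{\Rep(N')}=\frac{a(N')!}{a(N)!}\cdot\frac{|\Aut(N')|}{|\Aut(N)|}.
\]

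\emph{Horizontal moves.} Here $|V(N)|=|V(N')|$, so the factorials cancel, and $g(m)=g(m^{-1})$, giving the first formula.

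\emph{$\VP$ moves.} Here $|V(N')|=|V(N)|+2$ (Observation~\ref{obs:numbers}), so $a(N')!/a(N)!=1/(a(N)(a(N)-1))$. This cancels exactly against the label factor $a(N)(a(N)-1)$ in $g(\VM,n,k+1)/g(\VP,n,k)$, leaving $(2n+3k-1)^2/(2n+3k+2)$.

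\emph{$\VM$ moves.} Here $|V(N')|=|V(N)|-2$, so $a(N')!/a(N)!=a(N')(a(N')-1)$. The reverse $\VP$ proposal is made from $N'$, so its label factor is $a(N')(a(N')-1)$ and again cancels. This leaves $(2n+3k-1)/(2n+3k-4)^2$.

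I expect the main obstacle to be bookkeeping rather than any conceptual difficulty. The $\VM$ ratio displayed in Section~\ref{sec:proposal_probabilities_worked_out} writes the label count via $|V(N)|$, but in the reverse $\VP$ proposal the unused labels must be counted in $N'$. I will state this carefully so the cancellation is visible, and this cancellation is exactly what makes the final formula independent of $|L|$.

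A second point to settle is whether labels are typed, i.e.\ whether $X^t$ and $X^r$ are permuted separately. In that case $\Rep$ is a product of falling factorials, one per type. I would check that a $\VP$ move uses one new label of each relevant type and that the $\VP$ proposal draws labels from the matching unused pools. Under that convention the same cancellation goes through per type, and the automorphism ratio is unaffected.
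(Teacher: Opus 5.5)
Your proposal is correct and follows the paper's proof. You specialise Theorem~\ref{the:quotient_MH} with $|U|=\Rep(N)$ and let the unused-label falling factorials in $\Rep(N)/\Rep(N')$ cancel against the label-choice factors in the $\VP$ and $\VM$ proposal ratios, exactly as the paper does with $T=|L|-|V(N)|+|K|$. Your explicit $\VM$ computation, which counts the unused labels in $N'$ rather than in $N$, is the bookkeeping that makes the paper's ``similarly'' hold: the paper's displayed $\VM$ proposal ratio writes $|V(N)|$, but the cancellation needs $|V(N')|=|V(N)|-2$.
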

\begin{proof}
    Let $\dot{N}$ be a fully labelled representative of $N$. Translating the notation to that of Theorem~\ref{the:quotient_MH}, a ``leaf-labelled'' move $\widetilde{m}$ is generated by generating a move $m:\dot{N} \to \dot{N}'$ with probability $g(m) = g(M,n,k)$ and then selecting its equivalence class $\widetilde{m}: N \to N'$. We will use the $\Rep$ notation for the number of representatives in an equivalence class (instead of $|U|$). To prove the theorem, we work out $\Rep(N)$, $\Rep{N'}$, $g(m)$, and $g(m^{-1})$ in the acceptance function from Theorem~\ref{the:quotient_MH} for the different types of moves. Recall the acceptance function
    \[
        A(\widetilde{m}) = \min\left(1, \frac{\pi(N')}{\pi(N)}\frac{\Rep(N)}{\Rep(N')}\frac{g(m^{-1})}{g(m)} \right),
    \]
    and the number of representations (Equation~\ref{eq:correction_representation}):
    \[ 
        \Rep(N) = \frac{|L|!}{(|L|-|V(N)|+|K|)!} \frac{1}{|\Aut(N)|},
    \]

    As we already noted in Section~\ref{sec:proposal_probabilities_worked_out}, for moves of type $\rSPR$, $\Tail$ and $\Head$, the proposal probabilities for a move and its reverse are equal. Hence, for these types of moves, the quotient $\frac{g(m^{-1})}{g(m)}$ cancels. Moreover, the number of nodes ($|V(N)|$ and $|V(N')|$) are equal in both networks, too. Hence, in the quotient correcting for the number of representations, only the number of automorphisms are left, giving the acceptance probability we set out to prove.

    For vertical moves, the calculation is slightly more involved, as the quotients do not cancel out as neatly on themselves, but only once combined. Hence, we simply work out the equation. We start with $\widetilde{m}$ a $\VP$ move, where we write $T := |L|-|V(N)|+|K|$ for the number of unused labels in $N$:
    \begin{align*}
        A(\widetilde{m}) 
            &= \min\left(1, \frac{\pi(N')}{\pi(N)}\frac{\Rep(N)}{\Rep(N')}\frac{g(m^{-1})}{g(m)} \right)\\
            % &= \min\left(1, \frac{\pi(N')}{\pi(N)} \frac{|\Aut(N')|}{|\Aut(N)|} \frac{\frac{|L|!}{(|L|-|V(N)|+|K|)!}}{\frac{|L|!}{(|L|-|V(N')|+|K|)!}} \frac{(2n+3k-1)^2 (|L|-|V(N)|+|K|) (|L|-|V(N)|+|K|-1)}{2n + 3k + 2} \right)\\
            &= \min\left(1, \frac{\pi(N')}{\pi(N)} \frac{|\Aut(N')|}{|\Aut(N)|} \frac{\frac{|L|!}{T!}}{\frac{|L|!}{(|L|-|V(N')|+|K|)!}} \frac{(2n+3k-1)^2 T (T-1)}{2n + 3k + 2} \right)\\
            &= \min\left(1, \frac{\pi(N')}{\pi(N)} \frac{|\Aut(N')|}{|\Aut(N)|} \frac{(T-2)!}{T!} \frac{(2n+3k-1)^2 T (T-1)}{2n + 3k + 2} \right)\\
            &= \min\left(1, \frac{\pi(N')}{\pi(N)} \frac{|\Aut(N')|}{|\Aut(N)|} \frac{(T-2)!(T)(T-1)}{T!} \frac{(2n+3k-1)^2}{2n + 3k + 2} \right)\\
            &= \min\left(1, \frac{\pi(N')}{\pi(N)} \frac{|\Aut(N')|}{|\Aut(N)|} \frac{(2n+3k-1)^2}{2n + 3k + 2} \right)\\
    \end{align*}

    The number of labels cancel in a similar way for $\VM$ moves, which gives the acceptance probability we set out to prove.
\end{proof}

Interestingly, the probability is independent of the total number of labels used for all move types. Therefore, in practice, the full label set can be ignored. Note that we have not used any particular properties of the moves for this property to emerge: it was enough to look at the number of permutations of the labels in the network, and the number of labels added or removed by that move type. In fact, the moves do not strictly need to be generated like in Section~\ref{sec:proposal_probabilities_worked_out} for these acceptance probabilities to hold; the proof works as long as the probability for each move proposal is the same given a move type and a number of leaves and reticulations, and additional labels are drawn from the unused labels one after the other uniformly at random. For example, an $\rSPR$ move could equally well be generated by sampling two edges and a node uniformly at random and independently from the edges and nodes in the network. Contrast this ``inefficient'' method with the one from Section~\ref{sec:proposal_probabilities_worked_out} where we pick a moving endpoint from the moving edge. For the inefficient method, most move proposals are invalid, as the moving endpoint must per chance be an endpoint of the moving edge.

The theorem requires the proposal Markov chain to be aperiodic. Note that this is almost guaranteed for these types of chains, as invalid proposals are possible for nearly all networks and move types. More precisely, for any type of move in $\{\rSPR, \Tail, \Head, \VP, \VM\}$ and any space $\N(n,k)$ or $\N(n, \leq k_{\mathrm{max}})$ with $n\geq 2$ and $k\geq 0$, there always exists an invalid proposal for some network in the space. This implies a non-zero probability of staying at the same network, which in combination with irreducibility is enough to prove aperiodicity. 

\subsection{Counting Symmetries}
To sample (uniformly) from a class of networks whose networks all have a trivial automorphism group, we may simply put the condition of a network being within the class in the stationary distribution: by setting $\pi(N)=1$ when $N$ is in the class, and $\pi(N)=0$ otherwise. To do this efficiently, we need an efficient algorithm to determine whether a network belongs to the class. If we consider a class where all networks have $|\Aut(N)|=1$, calculating the Hastings ratio becomes almost trivial, as we do not need to compute automorphisms anymore. 

As an example, the class of tree-child networks can easily be sampled using this method: tree-child networks have a trivial automorphism group \cite{mcdiarmid2015counting,Klawitter2018SNPRneigh}, it can be checked in linear time whether a given network is tree-child \cite[e.g.]{huber2024orienting}, and the space of tree-child networks is connected if the reticulation number is small enough \cite{bordewich2017lost,klawitter2020spaces}. 

For the larger class of tree-based networks, we know the condition of having a trivial automorphism group is not met, but MCMC sampling is still possible if one is prepared to compute the number of automorphisms of a network to calculate the Metropolis-Hastings ratio for each proposal. Indeed, for tree-based networks, it is known that the tiers are connected under $\rSPR$ moves \cite{ERDOS2021205}, and it can be checked in linear time whether a network is tree-based \cite{zhang2016tree, hayamizu2021structure}. In the next section, we give an algorithm for computing the number of automorphisms of a leaf-labelled network.

The class of orchard networks is contained in the class of tree-based networks and it encloses the class of tree-child networks. For this class most conditions are known to be met as well: it can be checked in linear time whether a network is orchard \cite{erdHos2019class,janssen2021cherry}, and the space of orchard networks is connected under $\rSPR$ moves \cite{van2022orchard}. After providing an algorithm for computing the number of automorphisms of a leaf-labelled network, we will prove that orchard networks have a trivial automorphism group (Theorem~\ref{the:OrchardTrivAut}), just like tree-child networks. 

\subsubsection{Counting Non-Trivial Automorphisms}\label{sec:counting_automorphisms}
Finding the number of automorphisms of a graph is generally quite hard. More specifically, it is GI-hard because it is polynomial time equivalent to \textsc{Graph Isomorphism} (GI), the problem of checking whether two graphs are isomorphic \cite{mathon1979note,beals1999finding}.

The results of \cite{mathon1979note,beals1999finding} also imply that when restricted to a subset of graphs, calculating the number of automorphisms is in the same complexity class as GI restricted to that subset. Hence, calculating the number of automorphisms of a phylogenetic network is GI-hard in general \cite{cardona2014comparison}, but polynomial time solvable for binary networks \cite{luks1982isomorphism,kobler2012graph,mena2012ternary}. The algorithm or reduction that can be used for this purpose is the following (adapted from the $\textsc{ACOUNT} \propto_P \textsc{ISO}$ reduction in \cite{mathon1979note}). In the algorithm and related proofs, we use the notation $\id_V:V\to V$ for the identity map on $V$ (i.e., $\id_V(v)=v$ for all $v\in V$) and we say a map $\phi:X\to Y$ \textit{extends} a map $\psi: X' \to Y$ if $X'\subseteq X$ and $\phi(x)=\psi(x)$ for all $x\in X'$.

\vspace{\baselineskip}
\begin{algorithm}[ht]
 \KwData{A network $N$.}
 \KwResult{The number of automorphisms of $N$}
Set $c=1$\;
Set $V_f=V(N)$\;
\While{$V_f\neq L(N)$}{
    Let $v\in V_f\setminus L(N)$ be arbitrary\; 
    Set $c_v=1$\;
    \For{$w\in V(N)\setminus V_f$}{
        \If{there is an automorphism $\alpha$ of $N$ that extends $\id_{V_f\setminus\{v\}}(x)$ such that $\alpha(v)=w$}{\label{line:isom_check}
            Set $c_v=c_v+1$\;
        }
    }
    Set $c=c\cdot c_v$\;
    Remove $v$ from $V_f$\;
}
\Return $c$\;
\caption{\textsc{AutomorphismCount}$(N)$}\label{alg:AUTOM}
\end{algorithm}
\vspace{\baselineskip}

The correctness of this algorithm is a direct consequence of the following result by Mathon, of which we reproduce the proof in our notation. The result itself is a rather direct consequence of the orbit-stabilizer theorem.

\begin{lemma}[Mathon Reduction~$\mathrm{ACOUNT} \propto \mathrm{ISO}$]\label{lem:aut_count_step}
    Let $G$ be a graph $G$ and $S\subseteq V(G)$ a subset of the vertices of $G$ and write $\Aut(G_S):=\{\alpha\in \Aut(G): \alpha \mathrm{~extends~} \id_S \}$ for the automorphisms of $G$ that fix $S$. If $s\in S$ and $c:=|\{v\in V(G): \exists \alpha \in \Aut(G_{S\setminus \{s\}}) \mathrm{~s.t.~} \alpha(s)=v \}|$ is the number of nodes $v\in V(G)$ such that there is an automorphism $\alpha$ that extends $\id_{S\setminus \{s\}}$ and maps $s$ to $v$, then $|\Aut(G_{S\setminus s})| = c \cdot |\Aut(G_S)|$.
\end{lemma}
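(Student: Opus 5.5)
The plan is to apply the orbit-stabilizer theorem to the group $H:=\Aut(G_{S\setminus\{s\}})$ acting on $V(G)$, and to read off the stabilizer of $s$ as $\Aut(G_S)$. I would first check that $H$ is a subgroup of $\Aut(G)$. It contains the identity, and if $\alpha,\beta$ both fix every vertex of $S\setminus\{s\}$, then so do $\alpha\circ\beta$ and $\alpha^{-1}$. Hence $H$ acts on $V(G)$ by $\alpha\cdot v=\alpha(v)$.

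Next I would identify the two quantities in the statement with group-theoretic objects for this action. By definition, $c$ counts the vertices $v$ for which some $\alpha\in H$ has $\alpha(s)=v$, so $c=|H\cdot s|$ is the size of the orbit of $s$. The stabilizer $H_s=\{\alpha\in H:\alpha(s)=s\}$ consists of the automorphisms fixing every vertex of $S\setminus\{s\}$ and also fixing $s$. Since $s\in S$, these are exactly the automorphisms fixing $S$ pointwise, so $H_s=\Aut(G_S)$.

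The orbit-stabilizer theorem then gives $|H|=|H\cdot s|\cdot|H_s|$, which is exactly $|\Aut(G_{S\setminus\{s\}})|=c\cdot|\Aut(G_S)|$. To keep the argument self-contained I would include the standard bijection between left cosets $\alpha H_s$ and orbit elements $\alpha(s)$. It is well defined and injective because $\alpha(s)=\beta(s)$ if and only if $\beta^{-1}\alpha\in H_s$, and it is surjective by the definition of the orbit. All cosets have size $|H_s|$, which gives the count. Finiteness of $G$ guarantees that all these sets are finite.

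There is no real obstacle here. The only point needing care is the identification $H_s=\Aut(G_S)$, which uses $s\in S$, together with noting that $c$ counts the orbit including $s$ itself, via the identity. The same reasoning applies verbatim to networks, where automorphisms must also preserve leaf labels; this is just a further restriction on $\Aut(G)$ and does not affect the subgroup argument. Iterating the lemma from $S=V(N)$ down to $S=L(N)$, where the automorphism group of a network fixes its labelled leaves, then yields correctness of Algorithm~\ref{alg:AUTOM}, because $|\Aut(N_{V(N)})|=1$.
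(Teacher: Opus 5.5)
Your proposal is correct and takes the same approach as the paper. The paper also treats $\Aut(G_S)$ as a subgroup of $\Aut(G_{S\setminus\{s\}})$ and writes each element uniquely as $\phi_x\circ\psi$, where $\phi_x$ is a fixed automorphism sending $s$ to $x$ and $\psi\in\Aut(G_S)$; this is exactly your left-coset decomposition behind orbit--stabilizer, and your version also avoids the paper's slip of writing ``$\tau\in\Aut(G_S)$'' where $\tau\in\Aut(G_{S\setminus\{s\}})$ is meant.
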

\begin{proof}
    First note that $\Aut(G_S)$ is a subgroup of $\Aut(G_{S\setminus\{s\}})$. Let $X=\{v\in V(G): \exists \alpha \in \Aut(G_{S\setminus \{s\}}) \mathrm{~s.t.~} \alpha(s)=v \}$ and thus $c=|X|$ as defined in the lemma statement and, for each $x\in X$, let $\phi_x$ be an automorphism in $\Aut(G_{S\setminus\{s\}})$ which maps $s$ onto $x$. Then every element $\tau \in \Aut(G_{S})$ is the product of a unique $\phi_x$ and a unique $\psi \in \Aut(G_{S})$. Therefore, we have that $|\Aut(G_{S\setminus\{s\}})| = c \cdot \Aut(G_{S})$.
\end{proof}

\begin{lemma}\label{lem:algAUTOM}
\Cref{alg:AUTOM} finds the number of automorphisms of a given network in~$O(n^2f(n))$ time, where~$n$ is the number of nodes in the network, and~$O(f(n))$ is the time complexity of the `binary network isomorphism problem'.
\end{lemma}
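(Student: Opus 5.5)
Correctness and running time are handled separately. For correctness, write $V_f^{(0)}=V(N)\supsetneq V_f^{(1)}\supsetneq\dots\supsetneq V_f^{(t)}=L(N)$ for the successive values of $V_f$, and $v_i$ for the node removed in iteration $i$, so that $V_f^{(i)}=V_f^{(i-1)}\setminus\{v_i\}$. By \Cref{lem:aut_count_step} with $S=V_f^{(i-1)}$ and $s=v_i$,
\[
|\Aut(N_{V_f^{(i)}})| = c_i\cdot|\Aut(N_{V_f^{(i-1)}})|,
\]
where $c_i$ is the number of nodes $w$ that an automorphism fixing $V_f^{(i)}$ pointwise can send $v_i$ to. At the start $\Aut(N_{V(N)})=\{\id\}$, and at the end $\Aut(N_{L(N)})=\Aut(N)$, because automorphisms of a leaf-labelled network fix every leaf. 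Telescoping then gives $|\Aut(N)|=\prod_i c_i$.

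It remains to check that the variable $c_v$ in the algorithm equals $c_i$. The algorithm starts at $c_v=1$, which accounts for $w=v_i$ via the identity. It then tests only $w\in V(N)\setminus V_f$. This is justified because an automorphism fixing $V_f^{(i)}$ pointwise is a bijection, so it cannot send $v_i$ to a fixed node other than $v_i$ itself. Hence every candidate image $w\neq v_i$ lies outside $V_f^{(i-1)}$, which is exactly the set the loop ranges over. I also need $v_i\notin L(N)$, which the algorithm enforces; this is harmless since leaves are fixed anyway. So the returned $c$ equals $|\Aut(N)|$.

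For the running time, the while loop runs at most $n$ times and the inner for loop at most $n$ times, giving $O(n^2)$ tests of line~\ref{line:isom_check}. The main obstacle is showing that each test, i.e.\ deciding whether some automorphism fixes $V_f\setminus\{v\}$ pointwise and maps $v\mapsto w$, costs $O(f(n))$ using only an oracle for \emph{binary network isomorphism}.

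The standard Mathon-style trick is to build two copies $N_1,N_2$ of $N$. In each copy, every node of $V_f\setminus\{v\}$ gets a distinct leaf label, realised by attaching a pendant labelled leaf through a subdividing node; in $N_1$ node $v$ and in $N_2$ node $w$ get one further common fresh label. Then an isomorphism $N_1\to N_2$ respecting leaf labels corresponds exactly to the required automorphism. The delicate part is making the gadget preserve binarity, since attaching a pendant to an internal node raises its degree. I would handle this by replacing a node $x$ with in-degree and out-degree constraints by a small labelled structure (for instance, subdividing its incoming edge and hanging the labelled leaf off the new tree node, which distinguishes $x$ by the label of its parent's sibling leaf). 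I would then argue that this encoding is invertible. Subdividing and hanging a leaf on the root edge of a reticulation or tree node keeps the network binary, and isomorphisms of the modified networks restrict to isomorphisms of the originals that fix the marked nodes. The gadget enlarges the networks by $O(n)$ nodes, so each test costs $O(f(O(n)))=O(f(n))$, assuming $f$ is polynomial-bounded, together with $O(n)$ construction time. Altogether this gives $O(n^2 f(n))$.
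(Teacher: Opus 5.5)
Your correctness argument and your count of at most $n\cdot n$ tests match the paper's proof. Both apply \Cref{lem:aut_count_step} to the successive values of $V_f$, telescope from $\{\id\}$ to $\Aut(N)$, and let $c_v=1$ account for the identity. Your observation that every image $w\neq v_i$ must lie in $V(N)\setminus V_f$ is exactly what justifies the loop range. The paper, however, simply charges $O(f(n))$ to each test. It never explains how ``some automorphism fixes $V_f\setminus\{v\}$ pointwise and maps $v\mapsto w$'' becomes an instance of binary network isomorphism. You try to supply that reduction, and the gadget as you describe it is wrong for reticulations.

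A reticulation has two incoming edges. Hanging the pendant leaf on one of them marks that edge, so every label-preserving isomorphism must also fix the corresponding parent, which an automorphism fixing the reticulation need not do.

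Take the symmetric network $A$ of Table~\ref{tab:network_properties}: root $\rho\to r$, $r\to a,b$, and $a,b$ are both parents of the reticulations $r_1,r_2$, each above one of the two leaves. Here $|\Aut|=2$ via $a\leftrightarrow b$. Suppose the algorithm removes $a$ and then $b$. The second test asks for an automorphism fixing $\rho,r,r_1,r_2$ and the leaves with $b\mapsto a$, and this automorphism exists. But if $r_1$ is marked on $(a,r_1)$ in both copies, any isomorphism $N_1\to N_2$ must send $a$ (the parent of $s_{r_1}$) to $a$, while the fresh label forces $b\mapsto a$. The case $(b,r_1)$ is symmetric. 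So the test fails and the algorithm returns $1$.

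To repair the gadget:
\begin{itemize}
\item Put a reticulation's pendant on its unique \emph{outgoing} edge, and a tree node's pendant on its unique incoming edge. Then the only extra node pinned (the child, resp.\ the parent) is one that every automorphism fixing the marked node fixes anyway.
\item Leave the root unmarked; it has no incoming edge and is fixed by every automorphism.
\item Reject any $w$ with $t(w)\neq t(v)$ before building the instance. Otherwise, when one of $v,w$ is a reticulation whose child is the other, $N_1$ and $N_2$ coincide.
\end{itemize}
With these changes, label-preserving isomorphisms $N_1\to N_2$ correspond exactly to the required automorphisms. The instances are binary networks with fewer than $3n$ nodes, and your $O(n^2 f(n))$ bound follows under your assumption on $f$.
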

\begin{proof}
    We will first prove correctness of the algorithm. Let the nodes in $V(N)\setminus L(N)$ be indexed $v_1, \ldots, v_n$ in the order they are removed from $V_f$ in the algorithm.
    Note that in iteration $i$ of the main loop, $v=v_i$ and $V_f=\{v_i,\ldots, v_n\}\cup L(N)$. 
    Let $c_i$ be the value of $c_v$ at the end of iteration $i$ of the main loop, then $c_i$ is the number of nodes $w \in V(N)\setminus \{v_1, \ldots, v_i\}$ for which there is an automorphism $\alpha$ that extends $\id_{\{v_{i+1},\ldots, v_n\}}$ and maps $\alpha(v_i)=w$.
    Note that $c_v$ is set to $1$ initially to account for the automorphism $id_{V(N)}$. 
    In other words, $c_i=|\{w\in V(N): \exists \alpha \in \Aut(N_{\{v_{i+1},\ldots, v_n\}}) \mathrm{~s.t.~} \alpha(v_i)=w \}|$.
    Finally, note that at the end of the algorithm, $c = \prod_{i=1}^n c_i$, which by Lemma~\ref{lem:aut_count_step} gives
    \[c = \prod_{i=1}^n c_i = \prod_{i=1}^n \frac{|\Aut(N_{\{v_{i+1},\ldots, v_n\}})|}{|\Aut(N_{\{v_i,\ldots, v_n\}})|} = \frac{|\Aut(N_{\emptyset})|}{|\Aut(N_{\{v_1,\ldots, v_n\})}|} = |\Aut(N)|, \]
    where $\{v_{n+1},\ldots, v_n\} = \emptyset$, and $|\Aut(N_{\{v_1,\ldots, v_n\}})|=1$ because there is exactly one automorphism that fixes all the nodes, namely the identity map.

    The running time of $O(n^2 f(n))$ is computed by counting the number of times network isomorphism has to be computed, as that is the only non-constant step in the algorithm. 
    The number of times is exactly the total number of iterations of the inner loop of the algorithm, which is the number of iterations of the outer loop times the maximal number of iterations of the internal loop per iteration of the outer loop.
    The outer loop has fewer than $n=|V(G)|$ iterations, as the algorithm starts with $V_f=V(N)$ and ends with $V_f=L(N)$ while removing exactly one element of $V_f$ per iteration of the loop. 
    The inner loop has $|V(N)\setminus V_f| \leq | V(N) | =n$ iterations per iteration of the outer loop. Hence, the total number of iterations of the inner loop is bounded by $n^2$. 
    Finally, because the algorithm checks for an isomorphism exactly once per iteration with a running time of $f(n)$, the running time of \Cref{alg:AUTOM} is $O(n^2 f(n))$.
\end{proof}

\Cref{alg:AUTOM} has been implemented as the function \href{https://github.com/RemieJanssen/PhyloX/blob/5dccfd24e0376f9edd7c0695bf4c701a6c8ead52/src/phylox/isomorphism/base.py\#L141C18-L141C19}{\texttt{count\_automorphisms}} in the Python package PhyloX \cite{janssen2024phylox}.
For the check in line~\ref{line:isom_check} of the algorithm, the best known running time in general is $O(n^{10})$ \cite{mena2012ternary}. Our implementation uses the VF2 algorithm from \cite{cordella2001improved} through the NetworkX~\cite{networkx} function \href{https://networkx.org/documentation/stable/reference/algorithms/generated/networkx.algorithms.isomorphism.is_isomorphic.html}{\texttt{is\_isomorphic}}, which may not guarantee a polynomial running time, but runs fast in practice.

Note that there may well be better algorithms for this task, but no algorithms tailored specifically to phylogenetic networks have been published to date. It is quite easy to improve this algorithm heuristically---for example by only considering $w$ of the same node type as $v$---but it is unclear if this would improve the theoretical running time. It would be interesting to see whether a quadratic or linear time algorithm is attainable, and whether the problem becomes easier for classes of networks such as tree-based networks. Here, we will first improve on the previous algorithm heuristically using $\mu$-representations of networks, and then we will show that for some classes of networks, the problem becomes irrelevant, because these networks cannot have non-trivial symmetry.

% \subsubsection{Using \texorpdfstring{$\mu$}{mu}-Representations to Speed Up \textsc{AutomorphismCount}}
\subsubsection{Using \texorpdfstring{$\mu$}{mu}-Representations to Speed Up
{\normalfont\texorpdfstring{\textsc{AutomorphismCount}}{AutomorphismCount}}}
A heuristic speedup of \textsc{AutomorphismCount} can be achieved by considering some properties of nodes in the network that can be computed quickly in advance. The simplest example, as noted earlier, would be to first check if a node is a tree node or a reticulation, as nodes of different types cannot map to each other in an automorphism. A property that holds much more information is the $\mu$-vector of each node. In this section, we show how to improve \textsc{AutomorphismCount} by using these $\mu$-vectors.

\paragraph{$\bm{\mu}$-Representation of Networks}

Let~$N$ be a network~$N$ on a leaf-set~$X = [n] = \{1,2,\ldots,n\}$ (simply give some ordering on~$X$).
The~\emph{$\mu$-vector}~$\mu(v)$ of a vertex~$v$ is the~$n$-tuple where the~$i$th element denotes the number of paths from~$v$ to the leaf~$i$.
We denote this by writing~$\mu(v) = (m_1(v), m_2(v), \ldots, m_n(v))$, where~$m_i(v)$ is the number of paths from~$v$ to the leaf~$i$.
The \emph{$\mu$-representation} of~$N$ is the multi-set of~$\mu$-vectors for every vertex in~$N$.

The $\mu$-representation of networks was introduced by Cardona et al. to study unique characterisation results for tree-child and time-consistent tree-sibling networks, but also to give distance metric on the respective classes~\cite{cardona2008distance,cardona2009comparison}.
It has also been shown in recent literature that stack-free orchard networks are uniquely characterised by their~$\mu$-representations (here, the~$\mu$-representations were called ancestral profiles; it should be noted that these two terms encode the same amount of information, and are therefore equivalent.)~\cite{erdHos2019class,bai2021defining}.

\begin{observation}\label{obs:AutSameMu}
Let~$N$ be a network and let~$f$ be an automorphism on~$N$, then~$f$ maps every vertex to a vertex of the same type (root, tree vertex, reticulation, or leaf) and we have~$\mu(v) = \mu(f(v))$ for every vertex~$v$ in~$N$.
\end{observation}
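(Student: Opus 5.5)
We need to prove Observation~\ref{obs:AutSameMu}: an automorphism $f$ of $N$ preserves node types, and $\mu(v)=\mu(f(v))$ for every vertex $v$.

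\emph{Node types.} Since $f$ is a bijection on $V(N)$ with $(u,v)\in A(N)$ if and only if $(f(u),f(v))\in A(N)$, it maps the in-neighbours of $v$ bijectively onto the in-neighbours of $f(v)$, and likewise for out-neighbours. Hence $v$ and $f(v)$ have the same indegree and outdegree. The four node types (root, tree node, reticulation, leaf) are defined purely by their (indegree, outdegree) pair, so $f$ preserves type.

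\emph{$\mu$-vectors.} Fix a leaf label $i$ and let $\ell_i$ be the leaf labelled $i$. An automorphism of a leaf-labelled network respects the leaf labelling, so $f(\ell_i)=\ell_i$. Because $f$ preserves arcs, it sends each directed path $v=u_0\to u_1\to\cdots\to u_t=\ell_i$ to a directed path $f(u_0)\to\cdots\to f(u_t)$ from $f(v)$ to $f(\ell_i)=\ell_i$. This map on paths is injective because $f$ is injective on vertices. Applying the same argument to $f^{-1}$, which is also an automorphism fixing every leaf, gives the inverse map. So the paths from $v$ to $\ell_i$ are in bijection with the paths from $f(v)$ to $\ell_i$, hence $m_i(v)=m_i(f(v))$. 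As this holds for every $i$, $\mu(v)=\mu(f(v))$.

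One could instead prove the $\mu$-statement by induction along a reverse topological order, using $\mu(v)=\sum_{c\text{ child of }v}\mu(c)$ and the fact that $f$ maps the children of $v$ bijectively onto the children of $f(v)$. The path bijection is more direct, so I would use it.

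The only point needing care is that automorphisms fix every leaf. This is built into the paper's definition of isomorphism, which requires labelled nodes to be mapped to nodes with the same label. Otherwise everything follows directly from the definitions.
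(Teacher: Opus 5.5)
Your proof is correct and follows essentially the same route as the paper: preservation of in- and outdegree gives preservation of node type, and since $f$ maps paths to distinct paths, the number of paths to each leaf is unchanged. You are slightly more careful than the paper, which leaves two points implicit: that $f$ fixes every leaf, and that applying the argument to $f^{-1}$ (injectivity in both directions) is what gives equality of the path counts.
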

\begin{proof}
Since~$f$ is an automorphism, we have that~$(u,v)$ is an edge of~$N$ if and only if~$(f(u),f(v))$ is an edge of~$N$.
For the in-degrees and out-degrees of vertices to coincide, we must have that vertices of a certain type are mapped to vertices of the same type.
To prove the second statement, observe that paths are mapped to distinct paths under~$f$.
Therefore, the number of paths to a leaf is preserved under~$f$.
\end{proof}

The following lemma makes it clear that the $\mu$-vectors of a network can be computed in one bottom-up pass of the network. Hence, pre-computing these $\mu$-vectors only adds a quadratic component ($O(|V(G)|\cdot |X|)$) to the running time of the automorphism computation.

\begin{lemma}[Lemma 4b of \cite{cardona2009comparison}]\label{lem:MuChildren}
Let~$N$ be a network, and let~$v$ be a vertex of~$N$ with children~$c_1,\ldots,c_k$, then~$\mu(v) = \sum_{i\in [k]} \mu(c_i)$.
\end{lemma}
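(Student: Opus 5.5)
The plan is to prove the identity by sorting the paths out of $v$ according to their first edge. Fix a leaf $i\in[n]$. Because $N$ is a directed acyclic graph, every directed path from $v$ to leaf $i$ has finitely many edges. Unless $v$ is itself the leaf $i$, such a path has at least one edge. Its first edge is $(v,c_j)$ for exactly one child $c_j$, and the rest of the path is a directed path from $c_j$ to $i$.

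I will then set up the bijection between
\[
\{\text{paths } v\to i\} \quad\text{and}\quad \bigsqcup_{j\in[k]}\{\text{paths } c_j\to i\}.
\]
The forward map sends a path $(v,c_j,\dots,i)$ to the pair consisting of $j$ and the suffix $(c_j,\dots,i)$. The inverse map prepends the edge $(v,c_j)$ to a path starting at $c_j$. Prepending always yields a valid directed path, with no repeated vertices, because the graph is acyclic.

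One point needs checking. Networks have no parallel edges, so the children $c_1,\ldots,c_k$ are distinct and each edge $(v,c_j)$ is a single edge. The disjoint union is therefore indexed correctly. Counting both sides gives $m_i(v)=\sum_{j\in[k]} m_i(c_j)$ for each coordinate $i$, which is the vector identity $\mu(v)=\sum_{j\in[k]}\mu(c_j)$.

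The only real subtlety is the convention for leaves, and the hypothesis that $v$ has children already excludes them. A leaf $i$ has $m_i(i)=1$, counting the trivial path, and $m_{i'}(i)=0$ for $i'\neq i$. If $v$ has at least one child then $v$ is not a leaf, so no path from $v$ is trivial and the decomposition above covers every path.

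I therefore expect no genuine obstacle. The step that needs the most care is justifying that the decomposition is a bijection, which rests on acyclicity and on the absence of parallel edges.
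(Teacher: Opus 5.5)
Your proof is correct and complete. The first-edge decomposition gives a bijection between paths from $v$ to leaf $i$ and the disjoint union of paths from the children $c_j$ to $i$; acyclicity ensures that prepending $(v,c_j)$ yields a path, and the absence of parallel edges makes the children distinct. The paper gives no proof of its own and simply cites Lemma~4b of \cite{cardona2009comparison}, which is proved by exactly this standard decomposition. Your convention that a leaf carries the unit vector, counting the trivial path, is the one the paper leaves implicit and needs when it computes $\mu$-vectors bottom-up.
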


\paragraph{Algorithm and Its Proof} We first introduce a few simple notational conveniences for the definition of the algorithm, and for the proof of its validity.
Let $N = (V,E)$ be a binary phylogenetic network with $\mu$-representation $\mu(N)$ and leaf set $X$, then we denote by $t(v)$ the type of node $v$ (tree-node or reticulation).

\begin{algorithm}[ht]
 \KwData{A network $N$ and its $\mu$-representation $\mu(N)$.}
 \KwResult{The number of automorphisms of $N$}
    Set $c = 1$\;
    Set $V_f = V(N)$\;
    \While{$V_f \neq L(N)$}{
        Let $v \in V_f\setminus L(N)$ be arbitrary\;
        Set $c_v = 1$\;
        Set $S = \{v_m \in V(N)\setminus V_f: \mu(v_m) = \mu(v), t(v_m) = t(v)\}$\;
        \ForEach{$w \in S$}{
            \If{\textit{there is an automorphism $\alpha$ of $\N$ that extends $\id_{V_f \setminus \{v\}}$ such that $\alpha(v) = w$}}{
                Set $c_v = c_v + 1$\;
            }
        }
        Set $c = c_v \cdot c$\;
        Set $V_f = V_f \setminus \{v\}$\;
    }
    \Return $c$\;
\caption{\textsc{AutomorphismCountWithMu}$(N, \mu(N))$}\label{alg:automorphism_count}
\end{algorithm}

\begin{lemma}
Given a network $N$ and its $\mu$-representation $\mu(N)$, \Cref{alg:automorphism_count} computes $|\Aut(N)|$ in $O(kn\cdot f(n))$ time, where $n$ is the number of nodes in the network, $k$ is the maximal size of the group of nodes with the same $\mu$-vector, and $O(f(n))$ is the time complexity of the `binary network isomorphism problem'. 
\end{lemma}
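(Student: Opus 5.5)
The plan is to reduce the statement to \Cref{lem:algAUTOM}, by showing that \Cref{alg:automorphism_count} produces the same counters $c_v$ as \Cref{alg:AUTOM}. Fix an iteration of the outer loop, with chosen node $v$ and current set $V_f$. In \Cref{alg:AUTOM}, $c_v-1$ counts the nodes $w\in V(N)\setminus V_f$ for which some automorphism $\alpha$ extends $\id_{V_f\setminus\{v\}}$ and satisfies $\alpha(v)=w$. In \Cref{alg:automorphism_count}, the same count is taken, but only over $w\in S$.

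The key step is to show that every such $w$ already lies in $S$. If $\alpha$ is an automorphism with $\alpha(v)=w$, then \Cref{obs:AutSameMu} gives $t(w)=t(v)$ and $\mu(w)=\mu(\alpha(v))=\mu(v)$. Since also $w\in V(N)\setminus V_f$, we get $w\in S$. The converse inclusion $S\subseteq V(N)\setminus V_f$ holds by definition of $S$. Hence discarding $V(N)\setminus(V_f\cup S)$ removes only nodes whose test would fail anyway, so $c_v$ is identical in both algorithms at every iteration. This holds for any sequence of arbitrary choices of $v$, provided both algorithms make the same choices. Correctness then follows verbatim from the telescoping argument in the proof of \Cref{lem:algAUTOM}, which relies on Lemma~\ref{lem:aut_count_step}.

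For the running time, I count isomorphism tests as before. The outer loop runs fewer than $n$ times, since it removes one non-leaf node per iteration. The inner loop runs $|S|$ times, and $|S|\le k$ because all of $S$ shares the vector $\mu(v)$. Each test costs $O(f(n))$, which gives $O(kn\,f(n))$ tests' worth of work.

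The remaining overhead must be absorbed. Building $S$ naively costs $O(n\cdot|X|)$ per iteration for comparing vectors. I would avoid this by precomputing, once, a partition of $V(N)$ into classes of equal $(\mu,t)$, for example by sorting or hashing the vectors in $O(n|X|\log n)$ time. Then $S$ is obtained by scanning $v$'s class, which has at most $k$ members, and filtering out those in $V_f$ in $O(k)$ time. This overhead is dominated by $kn f(n)$, since $f(n)\ge n$ just to read the input.

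The only real subtlety, and the step I expect to be the main obstacle, is stating this overhead accounting carefully. The correctness argument is essentially immediate from \Cref{obs:AutSameMu}.
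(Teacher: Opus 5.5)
Your proposal is correct and matches the paper's proof. Restricting the inner loop to $S$ leaves every $c_v$ unchanged because $\mu$ and node type are automorphism invariants (\Cref{obs:AutSameMu}), so correctness reduces to \Cref{lem:algAUTOM}, and the running time follows from at most $n$ outer iterations, each making $|S|\le k$ isomorphism tests.

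Your extra accounting for building $S$ goes beyond the paper, which only counts isomorphism calls. One small correction there: $f(n)\ge n$ alone does not make the $O(n|X|\log n)$ sorting step dominated by $kn\,f(n)$, so the hashing variant, with expected cost $O(n|X|)$, is the one that makes your remark airtight.
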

\begin{proof}
    For correctness, note that the only difference between \Cref{alg:AUTOM} and \Cref{alg:automorphism_count} is the set over which the inner loop iterates. The set $S=\{v_m \in V(N)\setminus V_f: \mu(v_m) = \mu(v), t(v_m) = t(v)\}$ of the second algorithm is a subset of $V(N)\setminus V_f$ which is used in the first algorithm. Hence, to show correctness, we must simply show that the value of $c_v$ at the end of an iteration of the outer loop is the same for both algorithms.
    
    To this end, we observe that the if condition in the inner loop can only hold if $w$ and $v$ have the same automorphism invariant properties. In other words, suppose that $p:V(N) \to P$ is a node-property such that for every automorphism $\alpha$ of $N$ and every $v\in V(N)$ we have $p(v)=p(\alpha(v))$, then the if condition can only be true if $p(v)=p(w)$, and it is safe to loop over $\{v_m \in V(N)\setminus V_f: p(v_m) = p(v)\}$ in the inner loop.

    What rests is to show that $\mu$ and $t$ are such properties. The fact that $\mu$ is an automorphism invariant property is proven in \Cref{obs:AutSameMu}. For $t$, we observe that incidence relations are preserved under an automorphism, and thus also the in-degree and out-degree of each node. As the node type $t(v)$ of a node $v$ is fully determined by its in-degree and out-degree, the node type $t$ is also an automorphism invariant property.
    
    For the running time, we use a similar calculation as in the proof of \Cref{lem:algAUTOM}. The number of iterations of the outer loop is still bounded by $n=|V(N)|$, and the number of iterations in the inner loop is bounded by the maximal size of $S$ which per definition is $k$: at most the maximal size of the group of nodes with the same $\mu$-vector. Hence, the worst case running time is $O(kn\cdot f(n))$.
\end{proof}

Both Algorithm~\ref{alg:AUTOM} and Algorithm~\ref{alg:automorphism_count} have been implemented in PhyloX \cite{janssen2024phylox}. We have tested the the running times of this implementation in Appendix~\ref{sec:Appendix_Running_Time_Mu}.

\subsubsection{Orchard Networks Are Without Symmetry}
In Lemma~5.1 of~\cite{mcdiarmid2015counting}, it is proven that tree-child networks have no non-trivial automorphisms that fix the leaves. This is not true for reticulation-visible and tree-based networks (\cite{klawitter2020spaces} Observation~4.34). The example that proves this is a simple 2-leaf network with a crown. Note that each tier-1 network is tree-child, so it has no non-trivial automorphisms. In this section, we will show that the class of orchard networks \cite{janssen2021cherry,erdHos2019class}---which contains the class of tree-child networks \cite{janssen2021cherry} and is contained in the class of tree-based networks \cite{van2022orchard}---is also without symmetry. To prove that orchard networks have no non-trivial automorphisms, we again use the theory of $\mu$-representations.

We say two vertices are \emph{clones} if they have the same~$\mu$-vectors.
If both vertices are tree vertices, we say they are \emph{tree clones}.
We call a network \emph{tree clone-free} if it contains no tree clones.
We show here that the number of automorphisms for a tree clone-free network is 1, namely the identity map.

\begin{proposition}[Theorem 25 of \cite{murakami2021thesis}]
\label{prop:TCFTrivAut}
Let~$N$ be a tree clone-free network, then~$|\Aut(N)| = 1$.
\end{proposition}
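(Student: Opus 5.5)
The plan is to show directly that every automorphism $f\in\Aut(N)$ is the identity. We fix the nodes type by type, using Observation~\ref{obs:AutSameMu}: $f$ preserves both node type and $\mu$-vectors.

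\textbf{Step 1 (root, leaves and tree vertices).} The root is the unique node of indegree $0$, so $f$ fixes it. Leaves are fixed because automorphisms of $N$ respect the leaf labelling. For a tree vertex $v$, Observation~\ref{obs:AutSameMu} says $f(v)$ is a tree vertex with $\mu(f(v))=\mu(v)$. Since $N$ is tree clone-free, no other tree vertex has this $\mu$-vector, so $f(v)=v$. Thus every non-reticulation node is fixed, and this step is immediate from the hypothesis.

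\textbf{Step 2 (reticulations, by induction).} Reticulations can be clones of each other, so they need a separate argument. I would process the reticulations in a topological order of the DAG $N$ and prove that $f(r)=r$, assuming that every proper ancestor of $r$ is already fixed. Proper ancestors that are not reticulations are fixed by Step 1; the remaining ones are fixed by the induction hypothesis. In particular the two parents $p,q$ of $r$ (distinct, since there are no parallel edges) satisfy $f(p)=p$ and $f(q)=q$. Because $f$ preserves edges, $f(r)$ is a reticulation that is a common child of $p$ and $q$. So it suffices to show that $r$ is the only such common child.

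\textbf{Step 3 (uniqueness of the common child).} Suppose $r'\neq r$ is also a child of both $p$ and $q$. If $p$ (or $q$) is a reticulation, it has only one child, so $r'=r$, a contradiction. Neither parent can be a leaf, and the root has only one child, so the remaining case is that $p$ and $q$ are both tree vertices. Each then has exactly two children, so the children of both are exactly $r$ and $r'$. By Lemma~\ref{lem:MuChildren},
\[
\mu(p)=\mu(r)+\mu(r')=\mu(q),
\]
so $p$ and $q$ are tree clones, contradicting tree clone-freeness. Hence $f(r)=r$, the induction closes, and $f=\id_{V(N)}$, giving $|\Aut(N)|=1$.

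I expect Step 3 to be the only real obstacle. It is where the hypothesis must also be used on the parents of reticulations, not just on the nodes being mapped, and the case analysis on the parent types has to be done carefully (in particular the root and reticulation parents). The rest is bookkeeping once the topological-order induction is set up.
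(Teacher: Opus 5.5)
Your proof is correct and takes essentially the same route as the paper. Leaves and tree vertices are fixed by $\mu$-invariance and tree clone-freeness. Reticulations are then handled top-down: your topological induction plays the role of the paper's choice of a ``highest'' moved reticulation, and in both versions the contradiction is $\mu(p)=\mu(r)+\mu(r')=\mu(q)$ for two tree vertices sharing two reticulation children. Your explicit treatment of root and reticulation parents is slightly more careful than the paper, which simply asserts that the parents of the highest moved reticulation must be tree vertices.
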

\begin{proof}
Let~$f$ be an automorphism from~$N$ to itself. We shall prove that~$f$ is the identity map.
As leaves are uniquely labelled, leaves are uniquely mapped to leaves of the same labels under~$f$.
By Observation~\ref{obs:AutSameMu}, tree vertices must be mapped to tree vertices of the same~$\mu$-vector. As~$N$ contains no tree clones, tree vertices are mapped to themselves under~$f$.
Therefore it remains to show that reticulations must also be mapped to itself under~$f$.

Suppose for a contradiction that this was not the case, i.e., that a reticulation is mapped to another reticulation under~$f$. 
Let~$r_1$ and~$r_2$ denote such two reticulations, where~$f(r_1) = r_2$.
In particular, choose~$r_1$ to be a highest such reticulation.
Because~$f$ is a bijection, we have that~$(u,v)$ is an edge of~$N$ if and only if~$(f(u),f(v))$ is an edge of~$N$.
This means that the parents of~$r_1$ must also be mapped to the parents of~$r_2$.
Since we chose~$r_1$ to be the highest such reticulation, the parents of~$r_1$ must both be tree vertices. 
Consequently, the parents of~$r_2$ must also both be tree vertices; otherwise, a tree vertex would be mapped to a reticulation, which cannot happen under an automorphism by Observation~\ref{obs:AutSameMu}.
But we know that tree vertices are mapped to themselves under~$f$.
It follows then that~$r_1$ and~$r_2$ have the same parents~$t_1$ and~$t_2$.
By Lemma~\ref{lem:MuChildren} then~$\mu(t_1) = \mu(r_1) + \mu(r_2) = \mu(t_2)$, which must mean that~$t_1$ and~$t_2$ are tree clones.
This contradicts the fact that~$N$ is tree clone-free.
Thus reticulations must be mapped to itself under~$f$, and therefore we have that~$f$ must be the identity map.
\end{proof}

It has been shown that orchard networks are tree clone-free (Lemma 4.4 of~\cite{bai2021defining}).
Therefore, Proposition~\ref{prop:TCFTrivAut} holds for orchard networks.

\begin{theorem}\label{the:OrchardTrivAut}
Let~$N$ be a leaf-labelled orchard network, then~$|\Aut(N)| = 1$.
\end{theorem}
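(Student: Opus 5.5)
The plan is to reduce the statement to Proposition~\ref{prop:TCFTrivAut}: it suffices to show that every orchard network is tree clone-free. This fact is Lemma~4.4 of~\cite{bai2021defining}, and I would invoke it directly. The theorem then follows in one line. To keep the argument self-contained, I would also sketch why orchard networks have no tree clones.

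For the sketch I would argue by induction on the length of a cherry-picking sequence. Recall that $N$ is orchard if it can be reduced to a single leaf by repeatedly reducing cherries $(x,y)$ and reticulated cherries. Suppose $N$ had two distinct tree vertices $t_1,t_2$ with $\mu(t_1)=\mu(t_2)$. Reduce the first pair in an orchard sequence to obtain a smaller orchard network $N'$. The key step is to track how $\mu$-vectors change under a single reduction. For a cherry $(x,y)$, the parent $p$ of $x$ and $y$ is suppressed. For every remaining vertex, the entries for leaves other than $x$ are unchanged and the $x$-coordinate is dropped. For a reticulated cherry, removing the reticulation edge subtracts $\mu(x)$-type contributions from the ancestors of one parent only.

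From this I would show that a tree clone pair in $N$ either yields a tree clone pair in $N'$, contradicting the induction hypothesis, or involves the suppressed vertex $p$. The second case must then be excluded directly. If $\mu(t)=\mu(p)$, then $\mu(p)$ is supported on $\{x,y\}$ with entries $1,1$. Using Lemma~\ref{lem:MuChildren} and acyclicity, one shows $t$ must have children whose $\mu$-vectors sum to $e_x+e_y$. The only vertices with $\mu$-vector $e_x$ or $e_y$ are $x$, $y$, or reticulation chains above them. In a cherry, however, $x$ and $y$ hang directly below $p$, so this forces $t=p$.

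I expect the main obstacle to be the reticulated-cherry step. There, removing the edge changes the $\mu$-vectors of one side of the network and not the other. Two clones in $N$ could in principle become distinguishable in $N'$, which is harmless, but I need the converse direction: clones in $N$ must remain clones in $N'$ unless one of them is the suppressed tree vertex. I would handle this by a careful case analysis of ancestors of the reticulated cherry's tree-parent. Given its delicacy, in the written proof I would simply cite Lemma~4.4 of~\cite{bai2021defining} and then apply Proposition~\ref{prop:TCFTrivAut}, which gives $|\Aut(N)|=1$.
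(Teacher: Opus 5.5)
Your proof is correct and is exactly the paper's argument: the paper also cites Lemma~4.4 of \cite{bai2021defining} to get that orchard networks are tree clone-free, and then applies Proposition~\ref{prop:TCFTrivAut}.

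On your optional sketch, the reticulated-cherry step is not delicate. The leaf $y$ has the tree vertex $p$ as its only parent, so for every vertex $v$ the number of paths from $v$ to $p$ equals $m_y(v)$. Hence the reduction lowers $m_x(v)$ by exactly $m_y(v)$, and pairs of tree clones avoiding $p$ remain tree clones in $N'$. Note also that clones becoming distinguishable would not be harmless for your induction, as you claim; it would break it. It simply cannot happen.
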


As each tree-child network is orchard, and each tree is a tree-child network, we have the following corollaries. Both are well known, but the proof here is new.

\begin{corollary}
Let~$N$ be a leaf-labelled tree-child network, then~$|\Aut(N)| = 1$.
\end{corollary}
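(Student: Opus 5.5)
The statement to prove is the corollary for tree-child networks. The plan is to reduce it to Theorem~\ref{the:OrchardTrivAut} by showing that every tree-child network is orchard. Then $|\Aut(N)|=1$ follows immediately.

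The containment of tree-child networks in orchard networks is known from \cite{janssen2021cherry}, and I would simply cite it. For a self-contained route, I would instead go through Proposition~\ref{prop:TCFTrivAut}, which only requires showing that a tree-child network is tree clone-free.

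For that direct route, I would argue as follows. Let $t_1\neq t_2$ be tree vertices with $\mu(t_1)=\mu(t_2)$. By the tree-child property, every vertex has a tree-path, i.e.\ a path consisting of tree vertices and ending in a leaf, so $t_1$ has a child that is a tree vertex or a leaf. Following such a path from $t_1$, we reach a leaf $x$ for which the path from $t_1$ to $x$ is unique. Hence $m_x(t_1)=1$, and so also $m_x(t_2)=1$.

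Now let $P_1$ and $P_2$ be the unique paths from $t_1$ and $t_2$ to $x$. Since $P_1$ passes only through tree vertices and $x$ has a single parent, $P_1$ and $P_2$ merge at some vertex $w$ on $P_1$. If the two paths arrive at $w$ along different edges, then $w$ would need indegree $2$, but $w$ is a tree vertex or the leaf $x$, which is impossible. So the paths merge earlier, and iterating this argument shows that $P_2$ contains $t_1$ or $P_1$ contains $t_2$. Suppose $t_2$ lies on $P_1$ below $t_1$ (the other case is symmetric).

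Since $t_1$ is a tree vertex, its other child $c$ has a tree-path to some leaf $y$. Every path from $t_2$ to $y$ is also a path from $t_1$ to $y$, and the path through $c$ is an additional one, so $m_y(t_1) > m_y(t_2)$. This contradicts $\mu(t_1)=\mu(t_2)$.

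The main obstacle is this path-merging step: it must be argued carefully that the path through $c$ is genuinely different from every path from $t_2$ to $y$. Acyclicity is what rules out a path from $c$ back through $t_1$, which guarantees the distinctness. Given this difficulty, I would primarily rely on the cited containment of tree-child networks in orchard networks together with Theorem~\ref{the:OrchardTrivAut}.
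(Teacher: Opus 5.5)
Your primary route (tree-child networks are orchard \cite{janssen2021cherry}, then apply Theorem~\ref{the:OrchardTrivAut}) is exactly how the paper derives this corollary, so the proposal is correct. Your backup argument via Proposition~\ref{prop:TCFTrivAut} is also sound, and the step you flag as the main obstacle is easy: if $t_1$ is a strict ancestor of $t_2$ via its child $u_1$ and $c$ is its other child, then Lemma~\ref{lem:MuChildren} and path extension give $\mu(t_1)=\mu(u_1)+\mu(c)\geq\mu(t_2)+\mu(c)$ componentwise with $\mu(c)\neq 0$, and the extra paths are distinct simply because they leave $t_1$ along a different edge.
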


\begin{corollary}
Let~$N$ be a leaf-labelled tree, then~$|\Aut(N)| = 1$.
\end{corollary}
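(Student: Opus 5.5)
We derive the statement as a special case of Theorem~\ref{the:OrchardTrivAut}, by way of the preceding corollary for tree-child networks. Throughout, let $N$ be a leaf-labelled tree, that is, a network with $r(N)=0$.

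First we check that $N$ is tree-child. A network is tree-child if every non-leaf node has at least one child that is a tree node or a leaf, i.e., a child that is not a reticulation. Since $N$ has no reticulations at all, every child of every node is a tree node or a leaf. The condition therefore holds vacuously, and $N$ is tree-child. The preceding corollary then gives $|\Aut(N)|=1$ at once.

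If we want to avoid relying on the class inclusions, we can instead apply Proposition~\ref{prop:TCFTrivAut} directly and show that $N$ is tree clone-free.

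1. Suppose two distinct tree nodes $u,v$ had $\mu(u)=\mu(v)$.
2. In a tree, $m_i(w)\in\{0,1\}$ for every node $w$ and leaf $i$, and $\mu(w)$ is the indicator vector of the cluster of leaves below $w$.
3. Distinct nodes of a tree have distinct clusters. If neither of $u,v$ is an ancestor of the other, their clusters are disjoint and nonempty. If $u$ is a proper ancestor of $v$, the other child of the tree node $u$ (the one not on the path to $v$) contributes leaves that lie below $u$ but not below $v$.
4. Hence $\mu(u)\neq\mu(v)$, a contradiction, so $N$ is tree clone-free.

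Proposition~\ref{prop:TCFTrivAut} then yields the claim.

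There is no real obstacle here. The only point needing care is step~3, which rests on the binarity of tree nodes: each tree node has a second child that contributes leaves. Binarity is guaranteed by the definition of a network.
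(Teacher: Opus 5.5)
Your main argument is the paper's own: a tree has no reticulations, so it is tree-child and hence orchard, and Theorem~\ref{the:OrchardTrivAut} (via the tree-child corollary) gives $|\Aut(N)|=1$. Your alternative route through Proposition~\ref{prop:TCFTrivAut} is also sound, and it avoids relying on the cited fact that orchard networks are tree clone-free; just restrict the opening claim of step~3 to tree nodes, because the root and its unique child always have the same $\mu$-vector.
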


\section{Discussion}
In this paper, we have provided a Metropolis-Hastings sampling method for leaf-labelled phylogenetic networks. For this, we combined a variation on Metropolis-Hastings for quotient Markov chains with theory about spaces of phylogenetic networks. In the process, we have also studied the automorphism groups of leaf-labelled phylogenetic networks, and shown that for some classes of networks, these automorphism groups are trivial.

The main result regarding the Metropolis-Hastings sampling method is Theorem~\ref{the:MH_for_leaf_labelled_networks}, which provides Metropolis-Hastings acceptance probabilities for basic types of rearrangement moves: $\rSPR$ moves, $\Tail$ moves, $\Head$ moves, $\VP$ moves, and $\VM$ moves. These can easily be extended to other move types like $\rNNI$ moves, $\Tail_1$ moves, and $\Head_1$ moves by using the corresponding long-range moves (resp. $\rSPR$, $\Tail$, and $\Head$) and setting the probability to $0$ if the move is of distance longer than $1$. With a little more effort, one can also generate such moves with smaller probability of invalid moves, while retaining the property that each move proposal is equally probable. 

For example, for $\Tail_1$ moves the following might work: generate a move proposal by drawing an edge and a direction in (up-left, up-right, down-left, down-right) uniformly at random. For such a proposal, the chosen edge is the moving edge and the direction indicates where to move the tail: up is towards the root and down is towards the leaves, left and right are used to determine which of the upwards or downwards edges are chosen to move to. To ensure each move proposal is equally probable, right encodes an invalid move whenever there is only one upwards or downwards edge and up or down is chosen respectively. One still does have to check that such proposals uniquely encode moves to determine if this works.

\subsection{Correcting for Automorphisms with Branch Lengths}
The main point to take notice of in Theorem~\ref{the:MH_for_leaf_labelled_networks}, is that the acceptance probability needs to be corrected for the number of automorphisms of the networks before and after the move. This can be seen as a correction for the fact that sampling actually occurs in the space of fully labelled networks. The number of fully labelled networks corresponding to a leaf-labelled network is not equal for all networks, so networks with more representations (i.e. smaller automorphism group) will be over-represented in the sample. 

Extending this analysis to networks with branch lengths leads to an interesting situation. Suppose that the branch lengths are randomly sampled from a continuous distribution --- which they generally are --- the automorphism group of each network (including its branch lengths) becomes trivial with probability one. Indeed, with probability one, all edges of the network have different branch lengths. Hence, the probability that we can map an edge to any other edge in the network is zero. One could be inclined to conclude that, in practice, correcting for symmetry is no longer necessary when branch lengths are involved.

However, it should be noted that this overlooks the fact that leaf-labelled networks with non-trivial automorphism groups still exist. This is hidden in the argument above, because the parameter space (of branch lengths) of such 
a network $N$ is effectively $|\Aut(N)|$ times as small. All other things being equal, this makes the marginal probability for a leaf-labelled network topology with non-trivial automorphism group smaller than the probability of those with a trivial automorphism group. This is similar to BHV-like spaces, in which subspaces of different tree topologies may have different volumes \cite{gavryushkin2016space}.
This raises the question of how ``uniform'' prior distributions for Bayesian sampling methods should be defined. Should all network topologies be equally likely, or should each network including branch lengths be equally likely? 
Increased awareness of such questions fits into a broader trend of more careful design of Bayesian inference in the evolutionary setting (e.g., \cite{mendes2025validate}).

It would be interesting to study current implementations of sampling methods of posteriors in Bayesian methods for phylogenetic networks (e.g., PhyloNet \cite{wen2016bayesian} and BEAST 2.0 \cite{zhang2018bayesian}). Some version of this choice is already present as part of the implemented prior, be it explicit or implicit. The choice may also be relevant for the types of conclusions one can draw. For example, if one starts with a network topology and wants to check whether the method correctly reconstructs this network topology, it might make sense to start with a uniform prior on the network topologies.

One may question whether it is of any use to correct for automorphisms, as it seems unlikely that many networks even have non-trivial automorphisms. To make this more precise, it would be interesting to study the number of such networks in different classes of networks, and with varying number of leaves and reticulations. We conjecture that such networks vanish quickly with increasing number of leaves. 
Therefore, they would become irrelevant quickly when considering networks with a large number of leaves and a relatively small number of reticulations. 

It could be helpful to quantify this more precisely, especially because the converse appears to be true for increasing number of reticulations: we conjecture that the probability of non-trivial symmetry goes to 1 when considering networks with a fixed number of leaves and an increasing number of reticulations.

\subsection{Conclusions and Open Questions}
Although the correction for symmetries could be costly (w.r.t. computational time), there are good reasons to still include it in software. First, it gives the methods a sound mathematical basis, which is necessary even though, as we said, the effects may be small when networks with a small number of reticulations and a large number of leaves are considered.
Second, existing methods can only handle networks with a small number of leaves and a small number of reticulations. For these networks, the proportion of symmetric networks could be significantly higher. For example, one out of the eighteen networks in $\N(2,2)$ has non-trivial symmetry, and is thus undersampled if no correction is made (Figure~\ref{fig:undersampling_n2k2}). This could become relevant if Bayesian methods are used to build small networks, which are then combined using combinatorial methods such as binet, trinet or quarnet methods \cite{van2014trinets,oldman2016trilonet,huber2017reconstructing,huber2018quarnet,gross2020distinguishing}. Lastly, when these methods are improved so they can handle larger numbers of reticulations, the number of symmetries of the networks may also increase, making it more important to include a correction for this in the calculations.

To remedy the computational costs of adding a correction for symmetries, it would be interesting to see whether there are more efficient ways of computing $|\Aut(N)|$. As phylogenetic networks have a lot of structure that restricts the orbits of nodes, it may be possible to devise an algorithm that calculates $|\Aut(N)|$ much faster, leveraging this structure. For example, it would be interesting to see whether the $\mu$-vectors of the nodes can be used more effectively; or whether the problem becomes easier for classes of networks such as tree-based networks.

If it turns out these calculations of $|\Aut(N)|$ are still too costly, one could forego them completely, and resort to spaces of networks that have no symmetries, such as the space of tree-child networks \cite{mcdiarmid2015counting,Klawitter2018SNPRneigh} or the space of orchard networks (Theorem~\ref{the:OrchardTrivAut}). An advantage of orchard networks is that they can contain an unbounded number of reticulations, whereas tree-child networks can have at most $n-1$ reticulations, where $n$ is the number of leaves. Moreover, birth-hybridization priors that model hybridization as introgression (i.e., adding a horizontal edge from one existing lineage to another) naturally produce orchard networks.

The application of quotient Markov chains in Metropolis-Hastings in this paper all pertain to directed phylogenetic networks. However, the techniques can easily be applied to other classes of graphs or phylogenetic networks, such as undirected networks and semi-directed networks, which have attracted much attention lately. This in particular raises some questions semi-directed networks. For example, under which moves is the space of semi-directed networks connected and what is its diameter? This question has already been answered partly \cite{linz2023exploring}.

Another more novel question is on the complexity of \textsc{Graph Isomorphism} (and thus of \textsc{Count Automorphisms}) when restricted to semi-directed networks. We conjecture that this is GI-hard by a reduction from the problem on directed networks---which is GI-hard \cite{cardona2014comparison}---to the problem on semi-directed networks by replacing the root by a leaf and taking the underlying semi-directed network. Similarly, a simple reduction to \textsc{Graph Isomorphism} for directed networks by trying all root locations shows the problem is polynomial time solvable for binary semi-directed networks (using linear time orientation from \cite{huber2024orienting}). Can we attain a practical running time for all binary semi-directed networks, or is it necessary to consider subclasses of such networks?

Lastly, it would be interesting to see more connections to sampling methods for other types of graphs, such as graphs with a preset degree-sequence. For these kinds of spaces, the mixing times of certain Markov chains have been well studied (e.g., \cite{kleer2019nash}). Perhaps, improvements for results about mixing times can make use of the symmetric properties of the proposal Markov chain, too \cite[e.g.][]{doi:10.1137/070689413, doi:10.1080/15427951.2005.10129100}. Mixing times could give indications for good burn-in periods and expected time until convergence. Therefore, results about mixing times for sampling leaf-labelled networks could be very helpful for Bayesian methods.

\section{Code Availability}
The data of Figure~\ref{fig:undersampling_n2k2} was produced with PhyloX \cite{janssen2024phylox}, version v1.1.2. 
The source code of PhyloX can be found at \url{https://github.com/RemieJanssen/PhyloX} and in archived version at \url{https://doi.org/10.5281/zenodo.17228269}. 
The script for producing the data for the figure can be found at \url{https://github.com/RemieJanssen/test-phylogenetic-MCMC-sampling/releases/tag/v1.0.0}. 

The use of $\mu$-vectors to speed up counting automorphisms and checking isomorphism has been implemented in PhyloX since version v1.1.3. An archived version can be found at \url{https://doi.org/10.5281/zenodo.20447205}. The running time tests for these functions can be found at \url{https://github.com/RemieJanssen/test-phylox-isomorphisms/releases/tag/v1.0.0}.

% Acknowledgements
\section*{Acknowledgements}
Mark Jones was partially funded by the Dutch Research Council (NWO) grant OCENW.KLEIN.125.

\bibliographystyle{abbrvurl}
\bibliography{bibliography}

\appendix
\section{Sampling Networks}\label{sec:Appendix_Sampling}
To create the data for Figure~\ref{fig:undersampling_n2k2}, we have generated networks using PhyloNet and PhyloX.
To count the number of networks in the experiments, we calculate several properties of the generated networks.
Together, these properties determine which network in $\N(2,2)$ the network is isomorphic to.
In this section, we define these properties and calculate them for each network in $\N(2,2)$.
Then we discuss our sampling experiments and their results, which are summarized in Table~\ref{tab:network_properties}.
The code and raw results for all experiments are available at \url{https://github.com/RemieJanssen/test-phylonet-sampling} version \href{https://github.com/RemieJanssen/test-phylogenetic-MCMC-sampling/releases/tag/v1.0.0}{v1.0.0}.
All experiments were run on Red Hat Enterprise Linux 8 nodes of a computing cluster, with a limit of 1 CPU and 12GB memory.

\subsection{Network Properties}
The \textbf{$B_2$ index} of a network $N$ is a measure of its balance defined as $-\sum_{l\in L} p_l \log_2(p_l)$, where $L$ is the set of leaves of $N$ and $p_l$ is the probability of reaching leaf $l$ in a random walk from the root. \cite{bienvenu2021revisiting} The balance of a network can be computed by a single root-to-leaves traversal. 

The \textbf{blob sizes} of a network are the sizes (number of nodes) in each of the non-trivial biconnected components of the network. Note that each network in $\N(2,2)$ has exactly one non-trivial biconnected component, so we can talk about the blob size of the network in this setting.

The next few properties have to do with the placement of leaves in relation to other structures in the network, such as triangles and reticulations. We say that a leaf is \textbf{under a reticulation} if its parent is a reticulation node; a leaf \textbf{has a sibling reticulation} if its parent is a tree node whose other child is a reticulation node; a leaf is \textbf{on the side of a triangle} if its parent is a tree node $p$ with other child $r$ and parent $g$ such that there is an edge from $g$ to $r$; and a leaf is \textbf{on the bottom of a triangle} if the parent of the leaf is a reticulation with parents $g_1$ and $g_2$ such that there is an edge from $g_1$ to $g_2$ or the other way around. A pair of leaves $(l,m)$ forms a \textbf{reticulated cherry} if there is an edge from the the parent $p_m$ of $m$ to the parent $p_l$ of $l$ and $p_l$ is a reticulation node.  

\subsection{PhyloNet}
We sampled networks with two leaves and at most two reticulations using PhyloNet MCMC\_GT on PhyloNet version 3.8.2 \cite{wen2016bayesian,wen2018inferring}. By setting uninformative data, we can sample the prior distribution of the MCMC\_GT method. For this experiment, we have set the data to two gene trees, which both are the unique tree on two leaves. The full nexus file containing the configuration of the sampling is the following.
\begin{verbatim}
#NEXUS
BEGIN TREES;
Tree gt1 = (A,B);
Tree gt2 = (A,B);
END;

BEGIN PHYLONET;
MCMC_GT (gt1-gt2) -mr 2 -cl 10000500 -bl 500 -sf 500 -pp 1000 -pl 8 -sd 4321;
END;
\end{verbatim}
Samples were counted as described above, and the results can be found in Table~\ref{tab:network_properties}. The counts do not add up to the total sample for the simulation of 20000 networks, as we left out the sampled networks with one or zero reticulations. We were not able to fully explain the sampling frequencies we found, as PhyloNet's prior is quite involved, and does take into account branch lengths in several ways. Hence --- even though the only network with non-trivial symmetry (network id $A$ in the table) was sampled less than the rest --- it is uninformative to have a discussion about correcting for symmetries in this case. 

\subsection{PhyloX}
To verify the methods from this paper, we have implemented uniform MCMC sampling of phylogenetic networks with and without correction for symmetry. The implementation is part of the Python package PhyloX \cite{janssen2024phylox}, version v1.1.2 and above, available in archived version at Zenodo at \url{https://doi.org/10.5281/zenodo.17228269}.

A total of 10000 samples were taken after every 100 move proposals. The starting network was the network with id $A$ in Table~\ref{tab:network_properties}, and the move type proposal probabilities were: $0.2$ for a Head move, and $0.8$ for a Tail move. The sampling was performed with and without correction for symmetries. Results are shown in Table~\ref{tab:network_properties}. Note that the network with id $A$ is sampled about half as often compared to the other networks when not correcting for symmetries.

\begin{longtable}
{p{0.1\linewidth}|l||p{0.07\linewidth}|p{0.05\linewidth}|p{0.05\linewidth}|p{0.05\linewidth}|p{0.05\linewidth}|p{0.05\linewidth}|p{0.05\linewidth}||p{0.055\linewidth}|p{0.055\linewidth}|p{0.055\linewidth}}
     \scriptsize network& \scriptsize id& \scriptsize balance& \scriptsize blob size& \scriptsize leaf under a reticulation& \scriptsize leaf with sibling reticulation& \scriptsize leaf on the side of a triangle& \scriptsize leaf on the bottom of a triangle& \scriptsize reticu\-lated cherry& \scriptsize PhyloX corrected& \scriptsize PhyloX uncorrected& \scriptsize Phylo\-Net MCMC\-\_GT\\ \hline\hline
     \resizebox{\linewidth}{!}{\input{images/networks2-2/A.tex}}    &$A$&   1& 5& A,B& -& -& -&-&575&\textbf{255}&281\\ \hline
     \resizebox{\linewidth}{!}{\input{images/networks2-2/B_A.tex}}    &$B^A$&   0.544& 5& A& B& B& -&-&550&615&650\\ \hline
     \resizebox{\linewidth}{!}{\input{images/networks2-2/B_B.tex}}    &$B^B$&   0.544& 5& B& A& A& -&-&564&585&675\\ \hline
     \resizebox{\linewidth}{!}{\input{images/networks2-2/C_A.tex}}    &$C^A$&   0.954& 5& A,B& -& -& A&-&549&585&1226\\ \hline
     \resizebox{\linewidth}{!}{\input{images/networks2-2/C_B.tex}}    &$C^B$&   0.954& 5& A,B& -& -& B&-&562&585&1183\\ \hline
     \resizebox{\linewidth}{!}{\input{images/networks2-2/D1_A.tex}}    &$D_1^A$&   1& 4& A& -& -& A&-&559&547&3128\\ \hline
     \resizebox{\linewidth}{!}{\input{images/networks2-2/D1_B.tex}}    &$D_1^B$&   1& 4& B& -& -& B&-&561&565&3212\\ \hline
     \resizebox{\linewidth}{!}{\input{images/networks2-2/D2_A.tex}}    &$D_2^A$&   0.811& 5& A& -& -& A&-&542&591&253\\ \hline
     \resizebox{\linewidth}{!}{\input{images/networks2-2/D2_B.tex}}    &$D_2^B$&   0.811& 5& B& -& -& B&-&531&563&277\\ \hline
     \resizebox{\linewidth}{!}{\input{images/networks2-2/D3_A.tex}}    &$D_3^A$&   0.811& 5& A& B& -& A&-&551&592&565\\ \hline
     \resizebox{\linewidth}{!}{\input{images/networks2-2/D3_B.tex}}    &$D_3^B$&   0.811& 5& B& A& -& B&-&541&569&536\\ \hline
     \resizebox{\linewidth}{!}{\input{images/networks2-2/D4_A.tex}}    &$D_4^A$&   0.544& 5& A& B& -& -&-&551&549&422\\ \hline
     \resizebox{\linewidth}{!}{\input{images/networks2-2/D4_B.tex}}    &$D_4^B$&   0.544& 5& B& A& -& -&-&569&530&397\\ \hline
     \resizebox{\linewidth}{!}{\input{images/networks2-2/D5_A.tex}}    &$D_5^A$&   0.544& 5& A& B& -& -&(A,B)&565&554&573\\ \hline
     \resizebox{\linewidth}{!}{\input{images/networks2-2/D5_B.tex}}    &$D_5^B$&   0.544& 5& B& A& -& -&(B,A)&558&582&594\\ \hline
     \resizebox{\linewidth}{!}{\input{images/networks2-2/D6_A.tex}}    &$D_6^A$&   0.954& 5& A& B& -& -&(A,B)&553&587&747\\ \hline
     \resizebox{\linewidth}{!}{\input{images/networks2-2/D6_B.tex}}    &$D_6^B$&   0.954& 5& B& A& -& -&(B,A)&532&575&762\\ \hline
     \resizebox{\linewidth}{!}{\input{images/networks2-2/D7.tex}}    &$D_7^A$&   1& 4& -& -& -& -&-&591&567&2889\\ \hline
    \caption{Properties of all networks in $\N(2,2)$ to distinguish them, and sample counts for PhyloNet and PhyloX. Note in particular the PhyloX uncorrected sample count for network $A$, which is the only network in $\N(2,2)$ with non-trivial symmetries.}
    \label{tab:network_properties}
\end{longtable}

\section{Running Time Improvements Using \texorpdfstring{$\mu$}{mu}-Vectors}\label{sec:Appendix_Running_Time_Mu}

We have implemented the algorithms for counting automorphisms in Section~\ref{sec:counting_automorphisms} into PhyloX version v1.1.3 \cite{janssen2024phylox}. 
To test the efficiency of this code, we have compared running times of \texttt{phylox.isomorphism.is\_isomorphic} and \texttt{phylox.isomorphism.count\_automorphisms} when using $\mu$-vectors or not, 
which is controlled with the \texttt{use\_mu\_vector} parameter in both these functions.  

The code and raw results for all experiments are available at \url{https://github.com/RemieJanssen/test-phylox-isomorphisms} version \href{https://github.com/RemieJanssen/test-phylox-isomorphisms/releases/tag/v1.0.0}{v1.0.0}. All experiments were run on Red Hat Enterprise Linux 8 nodes of a computing cluster, with a limit of 1 CPU and 12GB memory.

\subsection{Isomorphism Checking}
For several combinations of numbers of leaves ($n$) and reticulations ($k$), we have generated $50$ networks at random using MCMC sampling. We did not correct for symmetries, because the likelihood of sampling a network with symmetries is quite low anyway, and the time to generate the networks would explode. The number of move proposals between each sample was set to $50(n+k)$. And each proposal was either a tail move with probability $0.8$ or a head move with probability $0.2$.

For each pair of networks in such a set, we have run and timed \texttt{phylox.isomorphism.is\_iso\-morphic} with and without the use of $\mu$-vectors. Inspection of the data revealed that the running times had a bimodal distribution. This was caused by a strong dependence on whether the two input networks were isomorphic or not. Hence, we have split the data into the categories `same network' with values `yes' or `no'. The running times are shown in Figure~\ref{fig:isom_results}.

\begin{figure}
    \centering
    \includegraphics[width=1.0\linewidth]{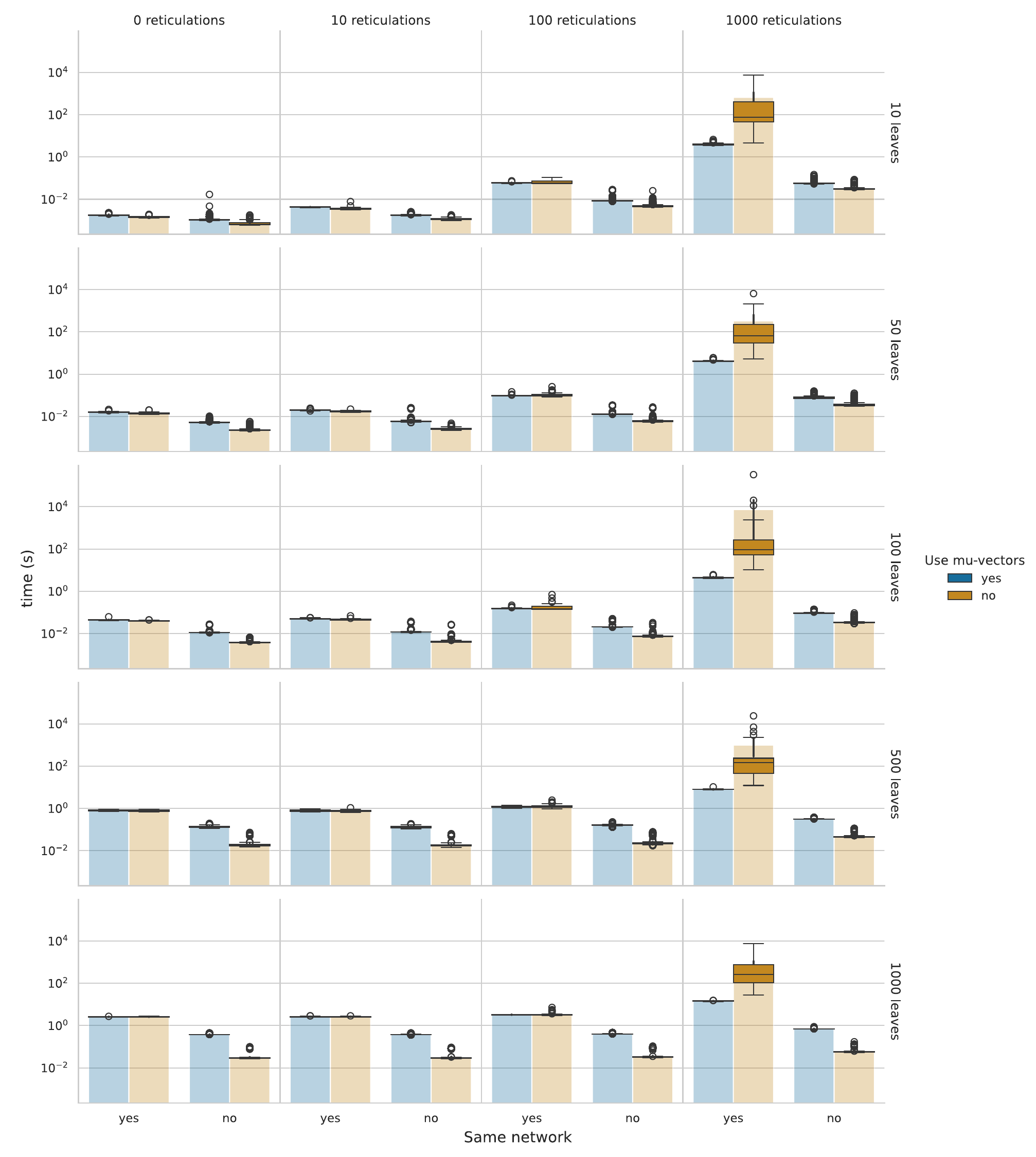}
    \caption{Running times of \texttt{phylox.isomorphism.is\_isomorphic} in seconds (log scale). Data are shown as box-plots with a bar chart showing the mean. Data are separated by $n$, $k$, ``same network'', and ``use $\mu$-vector''. The running times are faster for isomorphic networks using $\mu$-vectors, but for non-isomorphic networks when not using $\mu$-vectors. The running time for non-isomorphic networks is much faster than for isomorphic networks.}
    \label{fig:isom_results}
\end{figure}

In general, both methods perform much better on non-isomorphic networks than on isomorphic networks. For non-isomorphic networks, both methods perform well (within several seconds), although not-using $\mu$-vectors can be up to 10 times faster (Figure~\ref{fig:isom_speedup}, right).  

\begin{figure}
    \centering
    \includegraphics[width=1.0\linewidth]{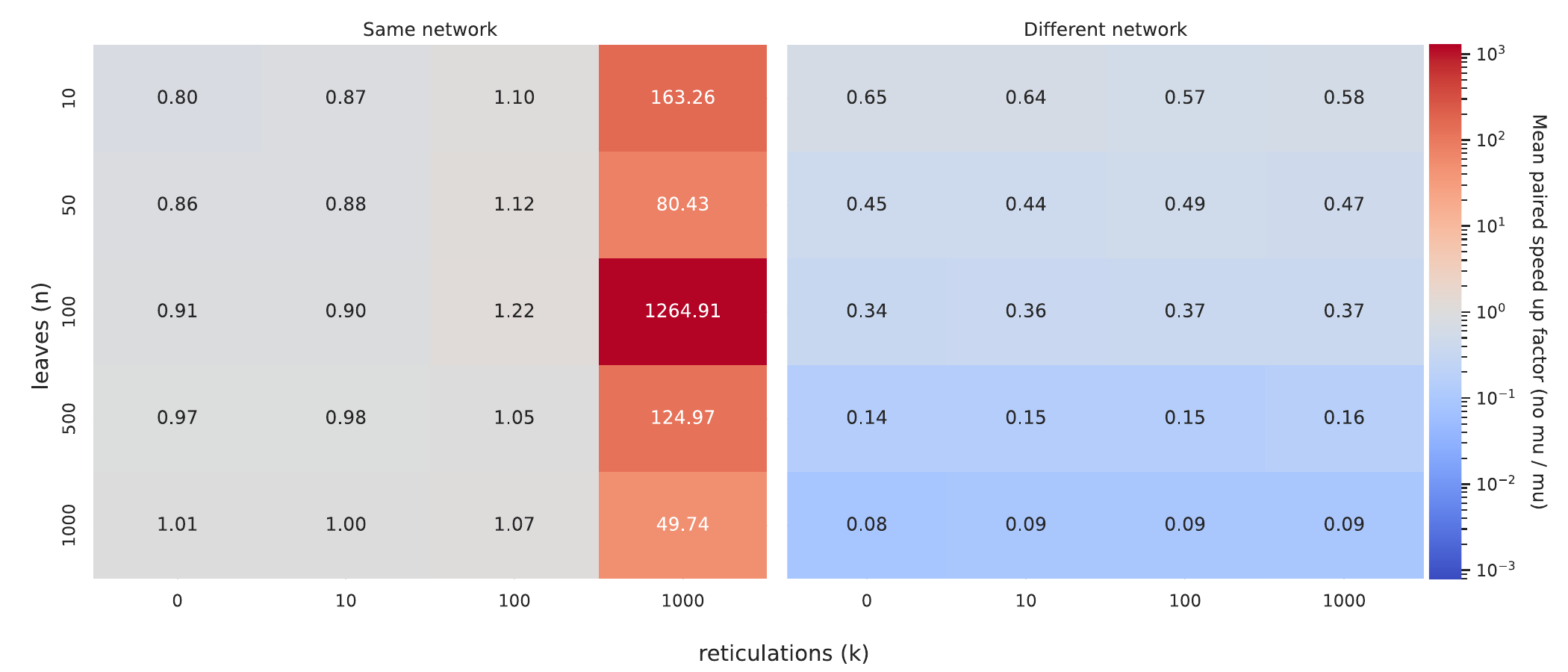}
    \caption{Mean running time improvement factor for \texttt{phylox.isomorphism.is\_isomorphic}. Pairwise running time factors (time without using $\mu$-vectors divided by time with using $\mu$-vectors) were computed and then averaged. A value of $1$ indicated the same running time when using or not using $\mu$-vectors. A value smaller than $1$ indicates a faster running time when not using $\mu$-vectors, and a value larger than $1$ indicates a faster running time when using $\mu$-vectors.}
    \label{fig:isom_speedup}
\end{figure}

Notably, when the networks are isomorphic and the number of reticulations is large, the running times for isomorphic networks differ strongly. In those cases, running times are much longer compared to non-isomorphic networks. Moreover, using $\mu$-vectors speeds up this computation by a large factor (Figure~\ref{fig:isom_speedup}, left). In one exceptional case ($n=100$, $k=1000$, network id 45), the running time using $\mu$-vectors was $6$ seconds, whereas the running time without using $\mu$-vectors was $327543$ seconds (i.e., about 91 hours).

In practice, this means there could be instances when not using $\mu$-vectors is faster. However, when handling larger networks that are expected to be isomorphic in some cases, it is likely that using $\mu$-vectors will produce results faster.

\subsection{Counting Automorphisms}
To test the effect of using $\mu$-vectors on the running time of \texttt{phylox.iso\-morphism.count\_auto\-morphisms}, we have generated a test set of networks similar to that described above. In this test set, the number of networks ($N$) for a given combination of $n$ and $k$ was not fixed to keep the total running time bounded. 

For each network in this test set, we have run and timed the execution of \texttt{phylox.iso\-morphism.count\_automorphisms} with and without the use of $\mu$-vectors. 
The running times increase with both $n$ and $k$, but seemingly faster with $k$. Running times are consistently faster when using $\mu$-vectors (Figure~\ref{fig:autom_results}). Using mu-vectors is on average about 25 times as fast for small networks ($n=10$), but is consistently over 100 times as fast for larger networks.

\begin{figure}
    \centering
    \includegraphics[width=1.0\linewidth]{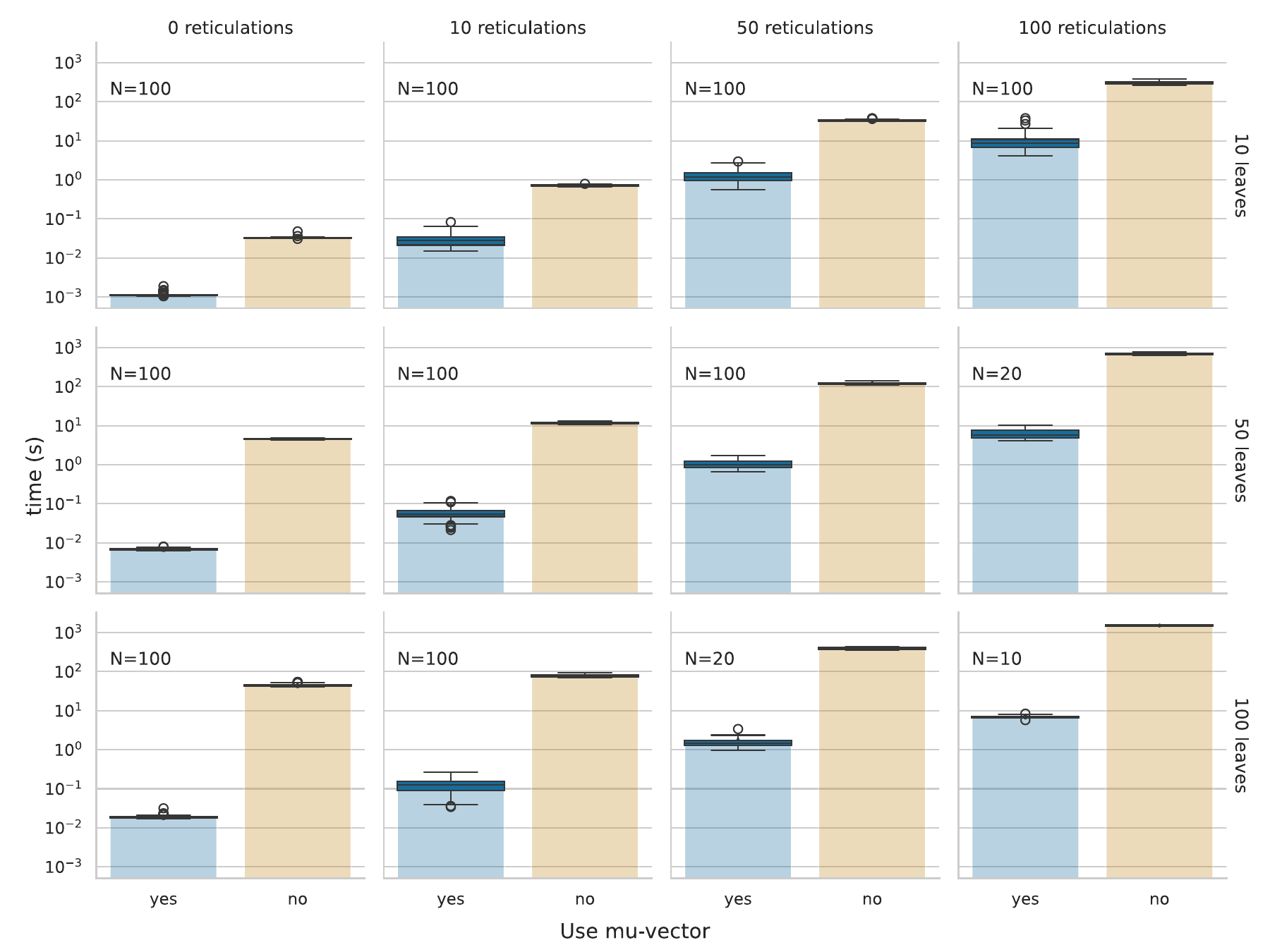}
    \caption{Running times of \texttt{phylox.isomorphism.count\_automorphisms} in seconds (log scale). Data are shown as box-plots with a bar chart showing the mean. The number of networks used for each plot ($N$) is shown in the plot. Data are separated by $n$, $k$, and ``use $\mu$-vector''. The running times increase with both $n$ and $k$, but seemingly faster with $k$. Running times are consistently faster when using $\mu$-vectors.}
    \label{fig:autom_results}
\end{figure}

\end{document}

%% file: images/networks2-2/A.tex
\begin{tikzpicture}[
    node distance=0.3cm and 1cm,
    every node/.style={circle,draw,minimum size=.5mm},
    leaf/.style={rectangle,draw=none, minimum size=6mm},
    >=Stealth
]

% Nodes
\node (pr) at (0,0) {}; % root
\node (r) [below=of pr] {};
\node (a) [below left=of r] {}; 
\node (b) [below right=of r] {};

% Reticulations
\node (r1) [below=of a] {}; 
\node (r2) [below=of b] {};

% Leaves
\node[leaf] (L1) [below=of r1] {A};
\node[leaf] (L2) [below=of r2] {B};

% Edges from root
\draw[->] (pr) -- (r);
\draw[->] (r) -- (a);
\draw[->] (r) -- (b);

% Edges to reticulations
\draw[->] (a) -- (r1);
\draw[->] (b) -- (r1); % reticulation 1
\draw[->] (b) -- (r2);
\draw[->] (a) -- (r2); % reticulation 2

% Edges to leaves
\draw[->] (r1) -- (L1);
\draw[->] (r2) -- (L2);

\end{tikzpicture}

%% file: images/networks2-2/B_A.tex
\begin{tikzpicture}[
    node distance=0.3cm and 1cm,
    every node/.style={circle,draw,minimum size=.5mm},
    leaf/.style={rectangle,draw=none, minimum size=6mm},
    >=Stealth
]

% Nodes
\node (pr) at (0,0) {}; % root
\node (r) [below=of pr] {};
\node (a) [below left=of r] {}; 

\node (b) [below left=of a] {}; 
\node (r2) [below right=of b] {};
\node (r1) [below right=of r2] {};

% Leaves
\node[leaf] (L2) [below=of r1] {B};
\node[leaf] (L1) [below=of b, left=of L2] {A};

% Edges from root
\draw[->] (pr) -- (r);
\draw[->] (r) -- (a);
\draw[->] (r) -- (r1);

\draw[->] (a) -- (b);
\draw[->] (a) -- (r2); % reticulation 1

\draw[->] (b) -- (r2);
\draw[->] (r2) -- (r1); % reticulation 2

% Edges to leaves
\draw[->] (b) -- (L1);
\draw[->] (r1) -- (L2);

\end{tikzpicture}

%% file: images/networks2-2/B_B.tex
\begin{tikzpicture}[
    node distance=0.3cm and 1cm,
    every node/.style={circle,draw,minimum size=.5mm},
    leaf/.style={rectangle,draw=none, minimum size=6mm},
    >=Stealth
]

% Nodes
\node (pr) at (0,0) {}; % root
\node (r) [below=of pr] {};
\node (a) [below left=of r] {}; 

\node (b) [below left=of a] {}; 
\node (r2) [below right=of b] {};
\node (r1) [below right=of r2] {};

% Leaves
\node[leaf] (L2) [below=of r1] {A};
\node[leaf] (L1) [below=of b, left=of L2] {B};

% Edges from root
\draw[->] (pr) -- (r);
\draw[->] (r) -- (a);
\draw[->] (r) -- (r1);

\draw[->] (a) -- (b);
\draw[->] (a) -- (r2); % reticulation 1

\draw[->] (b) -- (r2);
\draw[->] (r2) -- (r1); % reticulation 2

% Edges to leaves
\draw[->] (b) -- (L1);
\draw[->] (r1) -- (L2);

\end{tikzpicture}

%% file: images/networks2-2/C_A.tex
\begin{tikzpicture}[
    node distance=0.3cm and 1cm,
    every node/.style={circle,draw,minimum size=.5mm},
    leaf/.style={rectangle,draw=none, minimum size=6mm},
    >=Stealth
]

% Nodes
\node (pr) at (0,0) {}; % root
\node (r) [below=of pr] {};
\node (a) [below left=of r] {}; 

\node (b) [below right=of a] {}; 
\node (r1) [below left=of b] {};
\node (r2) [below right=of b] {};

% Leaves
\node[leaf] (L2) [below=of r2] {B};
\node[leaf] (L1) [below=of r1] {A};

% Edges from root
\draw[->] (pr) -- (r);
\draw[->] (r) -- (a);
\draw[->] (r) -- (r2);

\draw[->] (a) -- (b);
\draw[->] (a) -- (r1); % reticulation 1

\draw[->] (b) -- (r1);
\draw[->] (b) -- (r2); % reticulation 2

% Edges to leaves
\draw[->] (r1) -- (L1);
\draw[->] (r2) -- (L2);

\end{tikzpicture}

%% file: images/networks2-2/C_B.tex
\begin{tikzpicture}[
    node distance=0.3cm and 1cm,
    every node/.style={circle,draw,minimum size=.5mm},
    leaf/.style={rectangle,draw=none, minimum size=6mm},
    >=Stealth
]

% Nodes
\node (pr) at (0,0) {}; % root
\node (r) [below=of pr] {};
\node (a) [below left=of r] {}; 

\node (b) [below right=of a] {}; 
\node (r1) [below left=of b] {};
\node (r2) [below right=of b] {};

% Leaves
\node[leaf] (L2) [below=of r2] {A};
\node[leaf] (L1) [below=of r1] {B};

% Edges from root
\draw[->] (pr) -- (r);
\draw[->] (r) -- (a);
\draw[->] (r) -- (r2);

\draw[->] (a) -- (b);
\draw[->] (a) -- (r1); % reticulation 1

\draw[->] (b) -- (r1);
\draw[->] (b) -- (r2); % reticulation 2

% Edges to leaves
\draw[->] (r1) -- (L1);
\draw[->] (r2) -- (L2);

\end{tikzpicture}

%% file: images/networks2-2/D1_A.tex
\begin{tikzpicture}[
    node distance=0.3cm and 1cm,
    every node/.style={circle,draw,minimum size=.5mm},
    leaf/.style={rectangle,draw=none, minimum size=6mm},
    >=Stealth
]

% Nodes
\node (pr) at (0,0) {}; % root
\node (b) [below=of pr] {}; 
\node (r) [below=of b] {};
\node (a) [below left=of r] {}; 
\node (r1) [below right=of a] {};
\node (r2) [below left=of r1] {};

% Leaves
\node[leaf] (L1) [below left=of r2] {A};
\node[leaf] (L2) [below right=of b] {B};

% Edges from root
\draw[->] (b) -- (r);
\draw[->] (r) -- (a);
\draw[->] (r) -- (r1);
\draw[->] (a) -- (r1);
\draw[->] (a) -- (r2); % reticulation 1
\draw[->] (r1) -- (r2);
\draw[->] (pr) -- (b); % reticulation 2

% Edges to leaves
\draw[->] (r2) -- (L1);
\draw[->] (b) -- (L2);

\end{tikzpicture}

%% file: images/networks2-2/D1_B.tex
\begin{tikzpicture}[
    node distance=0.3cm and 1cm,
    every node/.style={circle,draw,minimum size=.5mm},
    leaf/.style={rectangle,draw=none, minimum size=6mm},
    >=Stealth
]

% Nodes
\node (pr) at (0,0) {}; % root
\node (b) [below=of pr] {}; 
\node (r) [below=of b] {};
\node (a) [below left=of r] {}; 
\node (r1) [below right=of a] {};
\node (r2) [below left=of r1] {};

% Leaves
\node[leaf] (L1) [below left=of r2] {B};
\node[leaf] (L2) [below right=of b] {A};

% Edges from root
\draw[->] (b) -- (r);
\draw[->] (r) -- (a);
\draw[->] (r) -- (r1);
\draw[->] (a) -- (r1);
\draw[->] (a) -- (r2); % reticulation 1
\draw[->] (r1) -- (r2);
\draw[->] (pr) -- (b); % reticulation 2

% Edges to leaves
\draw[->] (r2) -- (L1);
\draw[->] (b) -- (L2);

\end{tikzpicture}

%% file: images/networks2-2/D2_A.tex
\begin{tikzpicture}[
    node distance=0.3cm and 1cm,
    every node/.style={circle,draw,minimum size=.5mm},
    leaf/.style={rectangle,draw=none, minimum size=6mm},
    >=Stealth
]

% Nodes
\node (pr) at (0,0) {}; % root
\node (r) [below=of pr] {};
\node (b) [below left =of r] {}; 
\node (a) [below=of b] {}; 
\node (r1) [below right=of a] {};
\node (r2) [below left=of r1] {};

% Leaves
\node[leaf] (L1) [below=of r2] {A};
\node[leaf] (L2) [below left=of b] {B};

% Edges from root
\draw[->] (pr) -- (r);
\draw[->] (b) -- (a);
\draw[->] (r) -- (r1);
\draw[->] (a) -- (r1);
\draw[->] (a) -- (r2); % reticulation 1
\draw[->] (r1) -- (r2);
\draw[->] (r) -- (b); % reticulation 2

% Edges to leaves
\draw[->] (r2) -- (L1);
\draw[->] (b) -- (L2);

\end{tikzpicture}

%% file: images/networks2-2/D2_B.tex
\begin{tikzpicture}[
    node distance=0.3cm and 1cm,
    every node/.style={circle,draw,minimum size=.5mm},
    leaf/.style={rectangle,draw=none, minimum size=6mm},
    >=Stealth
]

% Nodes
\node (pr) at (0,0) {}; % root
\node (r) [below=of pr] {};
\node (b) [below left =of r] {}; 
\node (a) [below=of b] {}; 
\node (r1) [below right=of a] {};
\node (r2) [below left=of r1] {};

% Leaves
\node[leaf] (L1) [below=of r2] {B};
\node[leaf] (L2) [below left=of b] {A};

% Edges from root
\draw[->] (pr) -- (r);
\draw[->] (b) -- (a);
\draw[->] (r) -- (r1);
\draw[->] (a) -- (r1);
\draw[->] (a) -- (r2); % reticulation 1
\draw[->] (r1) -- (r2);
\draw[->] (r) -- (b); % reticulation 2

% Edges to leaves
\draw[->] (r2) -- (L1);
\draw[->] (b) -- (L2);

\end{tikzpicture}

%% file: images/networks2-2/D3_A.tex
\begin{tikzpicture}[
    node distance=0.3cm and 1cm,
    every node/.style={circle,draw,minimum size=.5mm},
    leaf/.style={rectangle,draw=none, minimum size=6mm},
    >=Stealth
]

% Nodes
\node (pr) at (0,0) {}; % root
\node (r) [below=of pr] {};
\node (a) [below left=of r] {}; 
\node (r1) [below right=of a] {};
\node (r2) [below left=of r1] {}; 

\node (b) [below right=of r] {}; 

% Leaves
\node[leaf] (L1) [below=of r2] {A};
\node[leaf] (L2) [below=of b] {B};

% Edges from root
\draw[->] (pr) -- (r);
\draw[->] (r) -- (a);
\draw[->] (r) -- (b);
\draw[->] (b) -- (r1);
\draw[->] (a) -- (r2); % reticulation 1
\draw[->] (r1) -- (r2);
\draw[->] (a) -- (r1); % reticulation 2

% Edges to leaves
\draw[->] (r2) -- (L1);
\draw[->] (b) -- (L2);

\end{tikzpicture}

%% file: images/networks2-2/D3_B.tex
\begin{tikzpicture}[
    node distance=0.3cm and 1cm,
    every node/.style={circle,draw,minimum size=.5mm},
    leaf/.style={rectangle,draw=none, minimum size=6mm},
    >=Stealth
]

% Nodes
\node (pr) at (0,0) {}; % root
\node (r) [below=of pr] {};
\node (a) [below left=of r] {}; 
\node (r1) [below right=of a] {};
\node (r2) [below left=of r1] {}; 

\node (b) [below right=of r] {}; 

% Leaves
\node[leaf] (L1) [below=of r2] {B};
\node[leaf] (L2) [below=of b] {A};

% Edges from root
\draw[->] (pr) -- (r);
\draw[->] (r) -- (a);
\draw[->] (r) -- (b);
\draw[->] (b) -- (r1);
\draw[->] (a) -- (r2); % reticulation 1
\draw[->] (r1) -- (r2);
\draw[->] (a) -- (r1); % reticulation 2

% Edges to leaves
\draw[->] (r2) -- (L1);
\draw[->] (b) -- (L2);

\end{tikzpicture}

%% file: images/networks2-2/D4_A.tex
\begin{tikzpicture}[
    node distance=0.3cm and 1cm,
    every node/.style={circle,draw,minimum size=.5mm},
    leaf/.style={rectangle,draw=none, minimum size=6mm},
    >=Stealth
]

% Nodes
\node (pr) at (0,0) {}; % root
\node (r) [below=of pr] {};
\node (a) [below left=of r] {}; 
\node (r1) [below right=of b] {};
\node (b) [below right=of a] {}; 
\node[leaf] (L2) [below=of b] {B};
\node (r2) [below=of L2] {}; 
\node[leaf] (L1) [below=of r2] {A};

% Edges from root
\draw[->] (pr) -- (r);
\draw[->] (r) -- (a);
\draw[->] (r) -- (r1);
\draw[->] (a) -- (r2);
\draw[->] (a) -- (b); % reticulation 1
\draw[->] (r1) -- (r2);
\draw[->] (b) -- (r1); % reticulation 2

% Edges to leaves
\draw[->] (r2) -- (L1);
\draw[->] (b) -- (L2);

\end{tikzpicture}

%% file: images/networks2-2/D4_B.tex
\begin{tikzpicture}[
    node distance=0.3cm and 1cm,
    every node/.style={circle,draw,minimum size=.5mm},
    leaf/.style={rectangle,draw=none, minimum size=6mm},
    >=Stealth
]

% Nodes
\node (pr) at (0,0) {}; % root
\node (r) [below=of pr] {};
\node (a) [below left=of r] {}; 
\node (r1) [below right=of b] {};
\node (b) [below right=of a] {}; 
\node[leaf] (L2) [below=of b] {A};
\node (r2) [below=of L2] {}; 
\node[leaf] (L1) [below=of r2] {B};

% Edges from root
\draw[->] (pr) -- (r);
\draw[->] (r) -- (a);
\draw[->] (r) -- (r1);
\draw[->] (a) -- (r2);
\draw[->] (a) -- (b); % reticulation 1
\draw[->] (r1) -- (r2);
\draw[->] (b) -- (r1); % reticulation 2

% Edges to leaves
\draw[->] (r2) -- (L1);
\draw[->] (b) -- (L2);

\end{tikzpicture}

%% file: images/networks2-2/D5_A.tex
\begin{tikzpicture}[
    node distance=0.3cm and 1cm,
    every node/.style={circle,draw,minimum size=.5mm},
    leaf/.style={rectangle,draw=none, minimum size=6mm},
    >=Stealth
]

% Nodes
\node (pr) at (0,0) {}; % root
\node (r) [below=of pr] {};
\node (a) [below left=of r] {}; 
\node (r1) [below right=of a] {};
\node (b) [below=of a] {}; 
\node (r2) [below=of b] {}; 

% Leaves
\node[leaf] (L1) [below=of r2] {A};
\node[leaf] (L2) [below left=of b] {B};

% Edges from root
\draw[->] (pr) -- (r);
\draw[->] (r) -- (a);
\draw[->] (r) -- (r1);
\draw[->] (a) -- (r1);
\draw[->] (a) -- (b); % reticulation 1
\draw[->] (r1) -- (r2);
\draw[->] (b) -- (r2); % reticulation 2

% Edges to leaves
\draw[->] (r2) -- (L1);
\draw[->] (b) -- (L2);

\end{tikzpicture}

%% file: images/networks2-2/D5_B.tex
\begin{tikzpicture}[
    node distance=0.3cm and 1cm,
    every node/.style={circle,draw,minimum size=.5mm},
    leaf/.style={rectangle,draw=none, minimum size=6mm},
    >=Stealth
]

% Nodes
\node (pr) at (0,0) {}; % root
\node (r) [below=of pr] {};
\node (a) [below left=of r] {}; 
\node (r1) [below right=of a] {};
\node (b) [below=of a] {}; 
\node (r2) [below=of b] {}; 

% Leaves
\node[leaf] (L1) [below=of r2] {B};
\node[leaf] (L2) [below left=of b] {A};

% Edges from root
\draw[->] (pr) -- (r);
\draw[->] (r) -- (a);
\draw[->] (r) -- (r1);
\draw[->] (a) -- (r1);
\draw[->] (a) -- (b); % reticulation 1
\draw[->] (r1) -- (r2);
\draw[->] (b) -- (r2); % reticulation 2

% Edges to leaves
\draw[->] (r2) -- (L1);
\draw[->] (b) -- (L2);

\end{tikzpicture}

%% file: images/networks2-2/D6_A.tex
\begin{tikzpicture}[
    node distance=0.3cm and 1cm,
    every node/.style={circle,draw,minimum size=.5mm},
    leaf/.style={rectangle,draw=none, minimum size=6mm},
    >=Stealth
]

% Nodes
\node (pr) at (0,0) {}; % root
\node (r) [below=of pr] {};
\node (a) [below left=of r] {}; 
\node (r1) [below right=of a] {};
\node (b) [below=of r1] {}; 
\node (r2) [below left=of b] {}; 

% Leaves
\node[leaf] (L1) [below=of r2] {A};
\node[leaf] (L2) [below right=of b, right of=L1] {B};

% Edges from root
\draw[->] (pr) -- (r);
\draw[->] (r) -- (a);
\draw[->] (r) -- (r1);
\draw[->] (a) -- (r1);
\draw[->] (a) -- (r2); % reticulation 1
\draw[->] (r1) -- (b);
\draw[->] (b) -- (r2); % reticulation 2

% Edges to leaves
\draw[->] (r2) -- (L1);
\draw[->] (b) -- (L2);

\end{tikzpicture}

%% file: images/networks2-2/D6_B.tex
\begin{tikzpicture}[
    node distance=0.3cm and 1cm,
    every node/.style={circle,draw,minimum size=.5mm},
    leaf/.style={rectangle,draw=none, minimum size=6mm},
    >=Stealth
]

% Nodes
\node (pr) at (0,0) {}; % root
\node (r) [below=of pr] {};
\node (a) [below left=of r] {}; 
\node (r1) [below right=of a] {};
\node (b) [below=of r1] {}; 
\node (r2) [below left=of b] {}; 

% Leaves
\node[leaf] (L1) [below=of r2] {B};
\node[leaf] (L2) [below right=of b, right of=L1] {A};

% Edges from root
\draw[->] (pr) -- (r);
\draw[->] (r) -- (a);
\draw[->] (r) -- (r1);
\draw[->] (a) -- (r1);
\draw[->] (a) -- (r2); % reticulation 1
\draw[->] (r1) -- (b);
\draw[->] (b) -- (r2); % reticulation 2

% Edges to leaves
\draw[->] (r2) -- (L1);
\draw[->] (b) -- (L2);

\end{tikzpicture}

%% file: images/networks2-2/D7.tex
\begin{tikzpicture}[
    node distance=0.3cm and 1cm,
    every node/.style={circle,draw,minimum size=.5mm},
    leaf/.style={rectangle,draw=none, minimum size=6mm},
    >=Stealth
]

% Nodes
\node (pr) at (0,0) {}; % root
\node (r) [below=of pr] {};
\node (a) [below left=of r] {}; 
\node (r1) [below right=of a] {};
\node (r2) [below left=of r1] {}; 

\node (b) [below=of r2] {}; 

% Leaves
\node[leaf] (L1) [below left=of b] {A};
\node[leaf] (L2) [below right=of b] {B};

% Edges from root
\draw[->] (pr) -- (r);
\draw[->] (r) -- (a);
\draw[->] (r) -- (r1);
\draw[->] (a) -- (r1);
\draw[->] (a) -- (r2); % reticulation 1
\draw[->] (r1) -- (r2);
\draw[->] (r2) -- (b); % reticulation 2

% Edges to leaves
\draw[->] (b) -- (L1);
\draw[->] (b) -- (L2);

\end{tikzpicture}

%% file: main.bbl
\begin{thebibliography}{10}

\bibitem{arnason2018whole}
{\'U}.~{\'A}rnason, F.~Lammers, V.~Kumar, M.~A. Nilsson, and A.~Janke.
\newblock Whole-genome sequencing of the blue whale and other rorquals finds signatures for introgressive gene flow.
\newblock {\em Science advances}, 4(4):eaap9873, 2018.
\newblock \href {https://doi.org/10.1126/sciadv.aap9873} {\path{doi:10.1126/sciadv.aap9873}}.

\bibitem{bai2021defining}
A.~Bai, P.~L. Erd{\H{o}}s, C.~Semple, and M.~Steel.
\newblock Defining phylogenetic networks using ancestral profiles.
\newblock {\em Mathematical Biosciences}, 332:108537, 2021.
\newblock \href {https://doi.org/10.1016/j.mbs.2021.108537} {\path{doi:10.1016/j.mbs.2021.108537}}.

\bibitem{bapteste2013networks}
E.~Bapteste, L.~van Iersel, A.~Janke, S.~Kelchner, S.~Kelk, J.~O. McInerney, D.~A. Morrison, L.~Nakhleh, M.~Steel, L.~Stougie, et~al.
\newblock Networks: expanding evolutionary thinking.
\newblock {\em Trends in Genetics}, 29(8):439--441, 2013.
\newblock \href {https://doi.org/10.1016/j.tig.2013.05.007} {\path{doi:10.1016/j.tig.2013.05.007}}.

\bibitem{beals1999finding}
R.~Beals, R.~Chang, W.~Gasarch, and J.~Tor{\'a}n.
\newblock On finding the number of graph automorphisms.
\newblock {\em Chicago J. Theor. Comput. Sci}, 1999.
\newblock \href {https://doi.org/10.1109/SCT.1995.514867} {\path{doi:10.1109/SCT.1995.514867}}.

\bibitem{bienvenu2021revisiting}
F.~Bienvenu, G.~Cardona, and C.~Scornavacca.
\newblock Revisiting shao and sokal’s b2 index of phylogenetic balance.
\newblock {\em Journal of Mathematical Biology}, 83(5):52, 2021.
\newblock \href {https://doi.org/10.1007/s00285-021-01662-7} {\path{doi:10.1007/s00285-021-01662-7}}.

\bibitem{bordewich2017lost}
M.~Bordewich, S.~Linz, and C.~Semple.
\newblock Lost in space? {Generalising} subtree prune and regraft to spaces of phylogenetic networks.
\newblock {\em Journal of theoretical biology}, 423:1--12, 2017.
\newblock \href {https://doi.org/10.1016/j.jtbi.2017.03.032} {\path{doi:10.1016/j.jtbi.2017.03.032}}.

\bibitem{bordewich2007computing}
M.~Bordewich and C.~Semple.
\newblock Computing the minimum number of hybridization events for a consistent evolutionary history.
\newblock {\em Discrete Applied Mathematics}, 155(8):914--928, 2007.
\newblock \href {https://doi.org/10.1016/j.dam.2006.08.008} {\path{doi:10.1016/j.dam.2006.08.008}}.

\bibitem{bordewich2016determining}
M.~Bordewich and C.~Semple.
\newblock Determining phylogenetic networks from inter-taxa distances.
\newblock {\em Journal of mathematical biology}, 73(2):283--303, 2016.
\newblock \href {https://doi.org/10.1007/s00285-015-0950-8} {\path{doi:10.1007/s00285-015-0950-8}}.

\bibitem{doi:10.1137/070689413}
S.~Boyd, P.~Diaconis, P.~Parrilo, and L.~Xiao.
\newblock Fastest mixing markov chain on graphs with symmetries.
\newblock {\em SIAM Journal on Optimization}, 20(2):792--819, 2009.
\newblock \href {https://arxiv.org/abs/https://doi.org/10.1137/070689413} {\path{arXiv:https://doi.org/10.1137/070689413}}, \href {https://doi.org/10.1137/070689413} {\path{doi:10.1137/070689413}}.

\bibitem{bryant2017quirks}
C.~Bryant, M.~Fischer, S.~Linz, and C.~Semple.
\newblock On the quirks of maximum parsimony and likelihood on phylogenetic networks.
\newblock {\em Journal of theoretical biology}, 417:100--108, 2017.
\newblock \href {https://doi.org/10.1016/j.jtbi.2017.01.013} {\path{doi:10.1016/j.jtbi.2017.01.013}}.

\bibitem{cardona2008distance}
G.~Cardona, M.~Llabr{\'e}s, F.~Rossell{\'o}, and G.~Valiente.
\newblock A distance metric for a class of tree-sibling phylogenetic networks.
\newblock {\em Bioinformatics}, 24(13):1481--1488, 2008.
\newblock \href {https://doi.org/10.1093/bioinformatics/btn231} {\path{doi:10.1093/bioinformatics/btn231}}.

\bibitem{cardona2014comparison}
G.~Cardona, M.~Llabr{\'e}s, F.~Rossell{\'o}, and G.~Valiente.
\newblock The comparison of tree-sibling time consistent phylogenetic networks is graph isomorphism-complete.
\newblock {\em The Scientific World Journal}, 2014, 2014.
\newblock \href {https://doi.org/10.1155/2014/254279} {\path{doi:10.1155/2014/254279}}.

\bibitem{cardona2009comparison}
G.~Cardona, F.~Rossello, and G.~Valiente.
\newblock Comparison of tree-child phylogenetic networks.
\newblock {\em IEEE/ACM Transactions on Computational Biology and Bioinformatics}, 6(4):552--569, 2009.
\newblock \href {https://doi.org/10.1109/TCBB.2007.70270} {\path{doi:10.1109/TCBB.2007.70270}}.

\bibitem{doi:10.1080/00031305.1995.10476177}
S.~Chib and E.~Greenberg.
\newblock Understanding the metropolis-hastings algorithm.
\newblock {\em The American Statistician}, 49(4):327--335, 1995.
\newblock URL: \url{https://www.tandfonline.com/doi/abs/10.1080/00031305.1995.10476177}, \href {https://arxiv.org/abs/https://www.tandfonline.com/doi/pdf/10.1080/00031305.1995.10476177} {\path{arXiv:https://www.tandfonline.com/doi/pdf/10.1080/00031305.1995.10476177}}, \href {https://doi.org/10.1080/00031305.1995.10476177} {\path{doi:10.1080/00031305.1995.10476177}}.

\bibitem{chor2006finding}
B.~Chor and T.~Tuller.
\newblock Finding a maximum likelihood tree is hard.
\newblock {\em Journal of the ACM (JACM)}, 53(5):722--744, 2006.
\newblock \href {https://doi.org/10.1145/1183907.1183909} {\path{doi:10.1145/1183907.1183909}}.

\bibitem{cordella2001improved}
L.~P. Cordella, P.~Foggia, C.~Sansone, M.~Vento, et~al.
\newblock An improved algorithm for matching large graphs.
\newblock In {\em 3rd IAPR-TC15 workshop on graph-based representations in pattern recognition}, pages 149--159, 2001.
\newblock URL: \url{https://api.semanticscholar.org/CorpusID:15968654}.

\bibitem{erdHos2019class}
P.~L. Erd{\H{o}}s, C.~Semple, and M.~Steel.
\newblock A class of phylogenetic networks reconstructable from ancestral profiles.
\newblock {\em Mathematical biosciences}, 313:33--40, 2019.
\newblock \href {https://doi.org/10.1016/j.mbs.2019.04.009} {\path{doi:10.1016/j.mbs.2019.04.009}}.

\bibitem{ERDOS2021205}
P.~L. Erdős, A.~Francis, and T.~R. Mezei.
\newblock Rooted nni moves and distance-1 tail moves on tree-based phylogenetic networks.
\newblock {\em Discrete Applied Mathematics}, 294:205--213, 2021.
\newblock URL: \url{https://www.sciencedirect.com/science/article/pii/S0166218X21000597}, \href {https://doi.org/10.1016/j.dam.2021.02.016} {\path{doi:10.1016/j.dam.2021.02.016}}.

\bibitem{forster2020phylogenetic}
P.~Forster, L.~Forster, C.~Renfrew, and M.~Forster.
\newblock Phylogenetic network analysis of sars-cov-2 genomes.
\newblock {\em Proceedings of the National Academy of Sciences}, 117(17):9241--9243, 2020.
\newblock \href {https://doi.org/10.1073/pnas.2004999117} {\path{doi:10.1073/pnas.2004999117}}.

\bibitem{foulds1982steiner}
L.~R. Foulds and R.~L. Graham.
\newblock The steiner problem in phylogeny is np-complete.
\newblock {\em Advances in Applied mathematics}, 3(1):43--49, 1982.
\newblock \href {https://doi.org/10.1016/S0196-8858(82)80004-3} {\path{doi:10.1016/S0196-8858(82)80004-3}}.

\bibitem{francis2017bounds}
A.~Francis, K.~T. Huber, V.~Moulton, and T.~Wu.
\newblock Bounds for phylogenetic network space metrics.
\newblock {\em Journal of Mathematical Biology}, Aug 2017.
\newblock \href {https://doi.org/10.1007/s00285-017-1171-0} {\path{doi:10.1007/s00285-017-1171-0}}.

\bibitem{fuchs2019counting}
M.~Fuchs, B.~Gittenberger, and M.~Mansouri.
\newblock Counting phylogenetic networks with few reticulation vertices: tree-child and normal networks.
\newblock {\em Australasian Journal of Combinatorics}, 73(2):385--423, 2019.
\newblock \href {https://doi.org/10.1016/j.dam.2022.03.026} {\path{doi:10.1016/j.dam.2022.03.026}}.

\bibitem{gambette2017rearrangement}
P.~Gambette, L.~van Iersel, M.~Jones, M.~Lafond, F.~Pardi, and C.~Scornavacca.
\newblock Rearrangement moves on rooted phylogenetic networks.
\newblock {\em PLoS computational biology}, 13(8):e1005611, 2017.
\newblock \href {https://doi.org/10.1371/journal.pcbi.1005611} {\path{doi:10.1371/journal.pcbi.1005611}}.

\bibitem{gavryushkin2016space}
A.~Gavryushkin and A.~J. Drummond.
\newblock The space of ultrametric phylogenetic trees.
\newblock {\em Journal of theoretical biology}, 403:197--208, 2016.
\newblock \href {https://doi.org/10.1016/j.jtbi.2016.05.001} {\path{doi:10.1016/j.jtbi.2016.05.001}}.

\bibitem{GREENHILL20181}
C.~Greenhill and M.~Sfragara.
\newblock The switch markov chain for sampling irregular graphs and digraphs.
\newblock {\em Theoretical Computer Science}, 719:1--20, 2018.
\newblock URL: \url{https://www.sciencedirect.com/science/article/pii/S030439751730840X}, \href {https://doi.org/10.1016/j.tcs.2017.11.010} {\path{doi:10.1016/j.tcs.2017.11.010}}.

\bibitem{gross2020distinguishing}
E.~Gross, L.~van Iersel, R.~Janssen, M.~Jones, C.~Long, and Y.~Murakami.
\newblock Distinguishing level-1 phylogenetic networks on the basis of data generated by markov processes.
\newblock {\em Journal of Mathematical Biology}, 83(3):32, 2021.
\newblock \href {https://doi.org/10.1007/s00285-021-01653-8} {\path{doi:10.1007/s00285-021-01653-8}}.

\bibitem{networkx}
A.~A. Hagberg, D.~A. Schult, and P.~J. Swart.
\newblock Exploring network structure, dynamics, and function using networkx.
\newblock In G.~Varoquaux, T.~Vaught, and J.~Millman, editors, {\em Proceedings of the 7th Python in Science Conference}, pages 11 -- 15, Pasadena, CA USA, 2008.
\newblock \href {https://doi.org/10.25080/TCWV9851} {\path{doi:10.25080/TCWV9851}}.

\bibitem{hayamizu2021structure}
M.~Hayamizu.
\newblock A structure theorem for rooted binary phylogenetic networks and its implications for tree-based networks.
\newblock {\em SIAM Journal on Discrete Mathematics}, 35(4):2490--2516, 2021.
\newblock \href {https://doi.org/10.1137/19M1297403} {\path{doi:10.1137/19M1297403}}.

\bibitem{huber2016spaces}
K.~T. Huber, S.~Linz, V.~Moulton, and T.~Wu.
\newblock Spaces of phylogenetic networks from generalized nearest-neighbor interchange operations.
\newblock {\em Journal of Mathematical Biology}, 72(3):699--725, 2016.
\newblock \href {https://doi.org/10.1007/s00285-015-0899-7} {\path{doi:10.1007/s00285-015-0899-7}}.

\bibitem{huber2018quarnet}
K.~T. Huber, V.~Moulton, C.~Semple, and T.~Wu.
\newblock Quarnet inference rules for level-1 networks.
\newblock {\em Bulletin of mathematical biology}, 80(8):2137--2153, 2018.
\newblock \href {https://doi.org/10.1007/s11538-018-0450-2} {\path{doi:10.1007/s11538-018-0450-2}}.

\bibitem{huber2016transforming}
K.~T. Huber, V.~Moulton, and T.~Wu.
\newblock Transforming phylogenetic networks: Moving beyond tree space.
\newblock {\em Journal of Theoretical Biology}, 404:30--39, 2016.
\newblock \href {https://doi.org/10.1016/j.jtbi.2016.05.030} {\path{doi:10.1016/j.jtbi.2016.05.030}}.

\bibitem{huber2021reconstructibility}
K.~T. Huber, L.~van Iersel, R.~Janssen, M.~Jones, V.~Moulton, and Y.~Murakami.
\newblock Level-2 networks from shortest and longest distances.
\newblock {\em Discrete Applied Mathematics}, 306:138--165, 2022.
\newblock \href {https://doi.org/10.1016/j.dam.2021.09.026} {\path{doi:10.1016/j.dam.2021.09.026}}.

\bibitem{huber2024orienting}
K.~T. Huber, L.~van Iersel, R.~Janssen, M.~Jones, V.~Moulton, Y.~Murakami, and C.~Semple.
\newblock Orienting undirected phylogenetic networks.
\newblock {\em Journal of Computer and System Sciences}, 140:103480, 2024.
\newblock \href {https://doi.org/10.1016/j.jcss.2023.103480} {\path{doi:10.1016/j.jcss.2023.103480}}.

\bibitem{huber2017reconstructing}
K.~T. Huber, L.~Van~Iersel, V.~Moulton, C.~Scornavacca, and T.~Wu.
\newblock Reconstructing phylogenetic level-1 networks from nondense binet and trinet sets.
\newblock {\em Algorithmica}, 77(1):173--200, 2017.
\newblock \href {https://doi.org/10.1007/s00453-015-0069-8} {\path{doi:10.1007/s00453-015-0069-8}}.

\bibitem{janssen2018heading}
R.~Janssen.
\newblock Heading in the right direction? using head moves to traverse phylogenetic network space.
\newblock {\em Journal of Graph Algorithms and Applications}, 25(1):263--310, 2021.
\newblock \href {https://doi.org/10.7155/jgaa.00559} {\path{doi:10.7155/jgaa.00559}}.

\bibitem{thesis_janssen}
R.~Janssen.
\newblock {\em Rearranging phylogenetic networks}.
\newblock PhD thesis, Delft University of Technology, 2021.
\newblock \href {https://doi.org/10.4233/uuid:1b713961-4e6d-4bb5-a7d0-37279084ee57} {\path{doi:10.4233/uuid:1b713961-4e6d-4bb5-a7d0-37279084ee57}}.

\bibitem{janssen2024phylox}
R.~Janssen.
\newblock Phylox: A python package for complete phylogenetic network workflows.
\newblock {\em Journal of Open Source Software}, 9(103):6427, 2024.
\newblock \href {https://doi.org/10.21105/joss.06427} {\path{doi:10.21105/joss.06427}}.

\bibitem{janssen2018exploring}
R.~Janssen, M.~Jones, P.~L. Erd{\H{o}}s, L.~Van~Iersel, and C.~Scornavacca.
\newblock Exploring the tiers of rooted phylogenetic network space using tail moves.
\newblock {\em Bulletin of mathematical biology}, 80(8):2177--2208, 2018.
\newblock \href {https://doi.org/10.1007/s11538-018-0452-0} {\path{doi:10.1007/s11538-018-0452-0}}.

\bibitem{janssen2019rearrangement}
R.~Janssen and J.~Klawitter.
\newblock Rearrangement operations on unrooted phylogenetic networks.
\newblock {\em Theory and Applications of Graphs}, 6(2), 2019.
\newblock \href {https://doi.org/10.20429/tag.2019.060206} {\path{doi:10.20429/tag.2019.060206}}.

\bibitem{janssen2021cherry}
R.~Janssen and Y.~Murakami.
\newblock On cherry-picking and network containment.
\newblock {\em Theoretical Computer Science}, 856:121--150, 2021.
\newblock \href {https://doi.org/On cherry-picking and network containment} {\path{doi:On cherry-picking and network containment}}.

\bibitem{jin2006maximum}
G.~Jin, L.~Nakhleh, S.~Snir, and T.~Tuller.
\newblock Maximum likelihood of phylogenetic networks.
\newblock {\em Bioinformatics}, 22(21):2604--2611, 2006.
\newblock \href {https://doi.org/10.1093/bioinformatics/btl452} {\path{doi:10.1093/bioinformatics/btl452}}.

\bibitem{Klawitter2018SNPRneigh}
J.~{Klawitter}.
\newblock The snpr neighbourhood of tree-child networks.
\newblock {\em Journal of Graph Algorithms and Applications}, 22(2):329--355, 2018.
\newblock \href {https://doi.org/10.7155/jgaa.00472} {\path{doi:10.7155/jgaa.00472}}.

\bibitem{klawitter2020spaces}
J.~Klawitter.
\newblock {\em Spaces of phylogenetic networks}.
\newblock PhD thesis, University of Auckland, 2020.

\bibitem{kleer2019nash}
P.~Kleer.
\newblock {\em When Nash met Markov: Novel results for pure Nash equilibria and the switch Markov chain}.
\newblock PhD thesis, Vrije Universiteit, Amsterdam, 2019.
\newblock URL: \url{https://hdl.handle.net/1871/56131}.

\bibitem{kobler2012graph}
J.~Kobler, U.~Sch{\"o}ning, and J.~Tor{\'a}n.
\newblock {\em The graph isomorphism problem: its structural complexity}.
\newblock Springer Science \& Business Media, 1993.
\newblock \href {https://doi.org/10.1007/978-1-4612-0333-9} {\path{doi:10.1007/978-1-4612-0333-9}}.

\bibitem{linz2023exploring}
S.~Linz and K.~Wicke.
\newblock Exploring spaces of semi-directed level-1 networks.
\newblock {\em Journal of Mathematical Biology}, 87(5):70, 2023.
\newblock \href {https://doi.org/10.1007/s00285-023-02004-5} {\path{doi:10.1007/s00285-023-02004-5}}.

\bibitem{luks1982isomorphism}
E.~M. Luks.
\newblock Isomorphism of graphs of bounded valence can be tested in polynomial time.
\newblock {\em Journal of computer and system sciences}, 25(1):42--65, 1982.
\newblock \href {https://doi.org/10.1016/0022-0000(82)90009-5} {\path{doi:10.1016/0022-0000(82)90009-5}}.

\bibitem{markin2019robinson}
A.~Markin, T.~K. Anderson, V.~S. K.~T. Vadali, and O.~Eulenstein.
\newblock Robinson-foulds reticulation networks.
\newblock In {\em Proceedings of the 10th ACM International Conference on Bioinformatics, Computational Biology and Health Informatics}, pages 77--86, 2019.
\newblock \href {https://doi.org/10.1145/3307339.3342151} {\path{doi:10.1145/3307339.3342151}}.

\bibitem{mathon1979note}
R.~Mathon.
\newblock A note on the graph isomorphism counting problem.
\newblock {\em Information Processing Letters}, 8(3):131--136, 1979.
\newblock \href {https://doi.org/10.1016/0020-0190(79)90004-8} {\path{doi:10.1016/0020-0190(79)90004-8}}.

\bibitem{mcdiarmid2015counting}
C.~McDiarmid, C.~Semple, and D.~Welsh.
\newblock Counting phylogenetic networks.
\newblock {\em Annals of Combinatorics}, 19(1):205--224, 2015.
\newblock \href {https://doi.org/10.1007/s00026-015-0260-2} {\path{doi:10.1007/s00026-015-0260-2}}.

\bibitem{mena2012ternary}
A.~A. Mena and F.~Rossell{\'o}.
\newblock Ternary graph isomorphism in polynomial time, after luks.
\newblock {\em arXiv preprint arXiv:1209.0871}, 2012.
\newblock \href {https://doi.org/10.48550/arXiv.1209.0871} {\path{doi:10.48550/arXiv.1209.0871}}.

\bibitem{mendes2025validate}
F.~K. Mendes, R.~Bouckaert, L.~M. Carvalho, and A.~J. Drummond.
\newblock How to validate a bayesian evolutionary model.
\newblock {\em Systematic Biology}, 74(1):158--175, 2025.
\newblock \href {https://doi.org/10.1093/sysbio/syae064} {\path{doi:10.1093/sysbio/syae064}}.

\bibitem{mitavskiy2008quotients}
B.~Mitavskiy, J.~E. Rowe, A.~Wright, and L.~M. Schmitt.
\newblock Quotients of markov chains and asymptotic properties of the stationary distribution of the markov chain associated to an evolutionary algorithm.
\newblock {\em Genetic Programming and Evolvable Machines}, 9(2):109--123, 2008.
\newblock \href {https://doi.org/10.1007/s10710-007-9038-6} {\path{doi:10.1007/s10710-007-9038-6}}.

\bibitem{murakami2021thesis}
Y.~Murakami.
\newblock {\em On Phylogenetic Encodings and Orchard Networks}.
\newblock Dissertation (tu delft), Delft University of Technology, 2021.
\newblock \href {https://doi.org/10.4233/uuid:049932ab-4124-4639-a7e3-146ac4fd805d} {\path{doi:10.4233/uuid:049932ab-4124-4639-a7e3-146ac4fd805d}}.

\bibitem{nascimento2017biologist}
F.~F. Nascimento, M.~d. Reis, and Z.~Yang.
\newblock A biologist’s guide to bayesian phylogenetic analysis.
\newblock {\em Nature ecology \& evolution}, 1(10):1446--1454, 2017.
\newblock \href {https://doi.org/10.1038/s41559-017-0280-x} {\path{doi:10.1038/s41559-017-0280-x}}.

\bibitem{oldman2016trilonet}
J.~Oldman, T.~Wu, L.~Van~Iersel, and V.~Moulton.
\newblock Trilonet: piecing together small networks to reconstruct reticulate evolutionary histories.
\newblock {\em Molecular biology and evolution}, 33(8):2151--2162, 2016.
\newblock \href {https://doi.org/10.1093/molbev/msw068} {\path{doi:10.1093/molbev/msw068}}.

\bibitem{roch2006short}
S.~Roch.
\newblock A short proof that phylogenetic tree reconstruction by maximum likelihood is hard.
\newblock {\em IEEE/ACM Transactions on Computational Biology and Bioinformatics}, 3(1):92--94, 2006.
\newblock \href {https://doi.org/10.1109/TCBB.2006.4} {\path{doi:10.1109/TCBB.2006.4}}.

\bibitem{solis2016inferring}
C.~Sol{\'\i}s-Lemus and C.~An{\'e}.
\newblock Inferring phylogenetic networks with maximum pseudolikelihood under incomplete lineage sorting.
\newblock {\em PLoS genetics}, 12(3):e1005896, 2016.
\newblock \href {https://doi.org/10.1371/journal.pgen.1005896} {\path{doi:10.1371/journal.pgen.1005896}}.

\bibitem{doi:10.1080/15427951.2005.10129100}
P.~P. Stephen~Boyd, Persi~Diaconis and L.~Xiao.
\newblock Symmetry analysis of reversible markov chains.
\newblock {\em Internet Mathematics}, 2(1):31--71, 2005.
\newblock \href {https://arxiv.org/abs/https://doi.org/10.1080/15427951.2005.10129100} {\path{arXiv:https://doi.org/10.1080/15427951.2005.10129100}}, \href {https://doi.org/10.1080/15427951.2005.10129100} {\path{doi:10.1080/15427951.2005.10129100}}.

\bibitem{van2022orchard}
L.~van Iersel, R.~Janssen, M.~Jones, and Y.~Murakami.
\newblock Orchard networks are trees with additional horizontal arcs.
\newblock {\em Bulletin of Mathematical Biology}, 84(8):76, 2022.
\newblock \href {https://doi.org/10.1007/s11538-022-01037-z} {\path{doi:10.1007/s11538-022-01037-z}}.

\bibitem{van2019practical}
L.~van Iersel, R.~Janssen, M.~Jones, Y.~Murakami, and N.~Zeh.
\newblock A practical fixed-parameter algorithm for constructing tree-child networks from multiple binary trees.
\newblock {\em Algorithmica}, 84(4):917--960, 2022.
\newblock \href {https://doi.org/10.1007/s00453-021-00914-8} {\path{doi:10.1007/s00453-021-00914-8}}.

\bibitem{van2014trinets}
L.~Van~Iersel and V.~Moulton.
\newblock Trinets encode tree-child and level-2 phylogenetic networks.
\newblock {\em Journal of mathematical biology}, 68(7):1707--1729, 2014.
\newblock \href {https://doi.org/10.1007/s00285-013-0683-5} {\path{doi:10.1007/s00285-013-0683-5}}.

\bibitem{van2020reconstructibility}
L.~van Iersel, V.~Moulton, and Y.~Murakami.
\newblock Reconstructibility of unrooted level-k phylogenetic networks from distances.
\newblock {\em Advances in Applied Mathematics}, 120:102075, 2020.
\newblock \href {https://doi.org/10.1016/j.aam.2020.102075} {\path{doi:10.1016/j.aam.2020.102075}}.

\bibitem{vaughan2017inferring}
T.~G. Vaughan, D.~Welch, A.~J. Drummond, P.~J. Biggs, T.~George, and N.~P. French.
\newblock Inferring ancestral recombination graphs from bacterial genomic data.
\newblock {\em Genetics}, 205(2):857--870, 2017.
\newblock \href {https://doi.org/10.1534/genetics.116.193425} {\path{doi:10.1534/genetics.116.193425}}.

\bibitem{vose1999simple}
M.~D. Vose.
\newblock {\em The simple genetic algorithm: foundations and theory}.
\newblock MIT press, 1999.
\newblock \href {https://doi.org/10.7551/mitpress/6229.001.0001} {\path{doi:10.7551/mitpress/6229.001.0001}}.

\bibitem{wen2016reticulate}
D.~Wen, Y.~Yu, M.~W. Hahn, and L.~Nakhleh.
\newblock Reticulate evolutionary history and extensive introgression in mosquito species revealed by phylogenetic network analysis.
\newblock {\em Molecular ecology}, 25(11):2361--2372, 2016.
\newblock \href {https://doi.org/10.1111/mec.13544} {\path{doi:10.1111/mec.13544}}.

\bibitem{wen2016bayesian}
D.~Wen, Y.~Yu, and L.~Nakhleh.
\newblock Bayesian inference of reticulate phylogenies under the multispecies network coalescent.
\newblock {\em PLoS genetics}, 12(5):e1006006, 2016.
\newblock \href {https://doi.org/10.1371/journal.pgen.1006006} {\path{doi:10.1371/journal.pgen.1006006}}.

\bibitem{wen2018inferring}
D.~Wen, Y.~Yu, J.~Zhu, and L.~Nakhleh.
\newblock Inferring phylogenetic networks using phylonet.
\newblock {\em Systematic biology}, 67(4):735--740, 2018.
\newblock \href {https://doi.org/10.1093/sysbio/syy015} {\path{doi:10.1093/sysbio/syy015}}.

\bibitem{wu2020inference}
Y.~Wu.
\newblock {Inference of population admixture network from local gene genealogies: a coalescent-based maximum likelihood approach}.
\newblock {\em Bioinformatics}, 36:i326--i334, 07 2020.
\newblock \href {https://doi.org/10.1093/bioinformatics/btaa465} {\path{doi:10.1093/bioinformatics/btaa465}}.

\bibitem{yang2013quartet}
J.~Yang, S.~Gr{\"u}newald, and X.-F. Wan.
\newblock Quartet-net: a quartet-based method to reconstruct phylogenetic networks.
\newblock {\em Molecular biology and evolution}, 30(5):1206--1217, 2013.
\newblock \href {https://doi.org/10.1093/molbev/mst040} {\path{doi:10.1093/molbev/mst040}}.

\bibitem{yu2014maximum}
Y.~Yu, J.~Dong, K.~J. Liu, and L.~Nakhleh.
\newblock Maximum likelihood inference of reticulate evolutionary histories.
\newblock {\em Proceedings of the National Academy of Sciences}, 111(46):16448--16453, 2014.
\newblock \href {https://doi.org/10.1073/pnas.1407950111} {\path{doi:10.1073/pnas.1407950111}}.

\bibitem{yu2015maximum}
Y.~Yu and L.~Nakhleh.
\newblock A maximum pseudo-likelihood approach for phylogenetic networks.
\newblock {\em BMC genomics}, 16(10):1--10, 2015.
\newblock \href {https://doi.org/10.1186/1471-2164-16-S10-S10} {\path{doi:10.1186/1471-2164-16-S10-S10}}.

\bibitem{zhang2018bayesian}
C.~Zhang, H.~A. Ogilvie, A.~J. Drummond, and T.~Stadler.
\newblock Bayesian inference of species networks from multilocus sequence data.
\newblock {\em Molecular biology and evolution}, 35(2):504--517, 2018.
\newblock \href {https://doi.org/10.1093/molbev/msx307} {\path{doi:10.1093/molbev/msx307}}.

\bibitem{zhang2016tree}
L.~Zhang.
\newblock On tree-based phylogenetic networks.
\newblock {\em Journal of Computational Biology}, 23(7):553--565, 2016.
\newblock \href {https://doi.org/0.1089/cmb.2015.0228} {\path{doi:0.1089/cmb.2015.0228}}.

\end{thebibliography}
